\crefname{observation}{observation}{observations}
\Crefname{observation}{Observation}{Observations}
\newtheorem{observation}[theorem]{Observation}
\title{Completeness in the Polynomial Hierarchy for many natural Problems in Bilevel and Robust Optimization} 
\titlerunning{Completeness of many natural Problems in Bilevel and Robust Optimization} 
\author{Christoph Grüne}{Department of Computer Science, RWTH Aachen University, Germany}{gruene@algo.rwth-aachen.de}{https://orcid.org/0000-0002-7789-8870}{Funded by the German Research Foundation (DFG) – GRK 2236/2.}
\author{Lasse Wulf}{Section of Algorithms, Logic and Graphs, Technical University of Denmark, Kongens Lyngby, Denmark}{lawu@dtu.dk}{https://orcid.org/0000-0001-7139-4092}{Funded by the Austrian Science Fund (FWF):W1230 and by the Carlsberg Foundation CF21-0302 ``Graph Algorithms with Geometric Applications''.}
\authorrunning{C. Grüne and L. Wulf} 
\keywords{Computational Complexity, Bilevel Optimization, Robust Optimization, Stackelberg Games, Min-Max Regret, Most Vital Nodes, Most Vital Vertex, Most Vital Edges, Blocker Problems, Interdiction Problems, Two-Stage Problems, Polynomial Hierarchy, Sigma-2, Sigma-3} 
\newcommand{\R}{\mathbb{R}}
\newcommand{\N}{\mathbb{N}}
\newcommand{\Z}{\mathbb{Z}}
\newcommand{\I}{\mathcal{I}}
\newcommand{\U}{\mathcal{U}}
\newcommand{\F}{\mathcal{F}}
\newcommand{\sol}{\mathcal{S}}
\newcommand{\powerset}[1]{2^{#1}}
\newcommand{\set}[1]{\{ #1 \}}
\newcommand{\fromto}[2]{\set{#1, \ldots, #2}}
\DeclareMathOperator{\poly}{poly}
\DeclareMathOperator{\reg}{reg}
\DeclareMathOperator*{\argmin}{arg\,min}
\newcommand{\leqSSP}{\leq_\text{SSP}}
\newcommand{\bin}{\set{0,1}}
\newcommand{\NP}{\textsl{NP}}
\begin{document}

\maketitle
\begin{abstract}
    In bilevel and robust optimization we are concerned with combinatorial min-max problems, for example from the areas of min-max regret robust optimization, network interdiction, most vital vertex problems, blocker problems, and two-stage adjustable robust optimization.
Even though these areas are well-researched for over two decades and one would naturally expect many (if not most) of the problems occurring in these areas to be complete for the classes $\Sigma^p_2$ or $\Sigma^p_3$ from the polynomial hierarchy, almost no hardness results in this regime are currently known.
However, such complexity insights are important, since they imply that no polynomial-sized integer program for these min-max problems exist, and hence conventional IP-based approaches fail.
We address this lack of knowledge by introducing at least 72 new $\Sigma^p_2$-complete and $\Sigma^p_3$-complete problems.
The majority of all earlier publications on $\Sigma^p_2$- and $\Sigma^p_3$-completeness in said areas are special cases of our meta-theorem. 
Precisely, we introduce a large list of problems for which the meta-theorem is applicable (including clique, vertex cover, knapsack, TSP, facility location and many more).
We show that for each of these problems, the corresponding min-max (i.e. interdiction/regret) variant is $\Sigma^p_2$- and the min-max-min (i.e. two-stage) variant is $\Sigma^p_3$-complete.    
\end{abstract}

\newpage

\section{Introduction}

In recent years, there has been enormous interest in the areas of Bilevel Optimization \cite{dempe2020bilevel, DBLP:journals/ejco/KleinertLLS21}, Robust Optimization \cite{DBLP:books/degruyter/Ben-TalGN09,robook}, Network Interdiction \cite{smith2013modern}, Stackelberg Games \cite{li2017review}, Attacker-Defender games \cite{HUNT2023}, and many other bilevel problems.
These research areas are vital in helping us to understand, which parts of a network are most vulnerable, prevent terrorist attacks, understand economic processes, and to understand and improve the robustness properties of many systems.

The common property of all these areas of research is that they study min-max optimization problems.
Equivalently, these problems can be characterized as an abstract two-turn game between two players.
The first player (from now on called Alice) starts the game and takes an action with the goal of minimizing some objective.
Afterwards, the second player (from now on called Bob) responds to Alice. Typically (but not always), Alice's and Bob's goals are opposite of each other.

Let us take as a prototypical example the maximum clique interdiction problem 
\cite{DBLP:journals/networks/PajouhBP14,DBLP:conf/iscopt/PaulusmaPR16, DBLP:journals/eor/FuriniLMS19, DBLP:journals/anor/Pajouh20}.
In this problem, we are given a graph $G = (V, E)$ and some budget $k \in \N$.
The goal of Bob is to find a clique of largest possible size in the graph.
However, before Bob's turn, Alice can delete up to $k$ vertices from the graph in order to impair Bob's objective. Hence the game is a min-max optimization problem described by
$$
    \min_{\substack{W \subseteq V\\ |W| \leq k}} \max \set{|C| : C \text{ is a clique in the graph } G \text{ with } W \cap C = \emptyset }.
$$
We remark that this problem follows a natural pattern, in which researchers often come up with new problems: 
First, some \emph{nominal problem} is taken (in this case, the maximum clique problem), and afterwards it is \emph{modified} into a more complicated min-max problem by adding an additional component (in this case, the possibility of Alice to interdict). The goal of this paper is to consider this very general pattern and shed some light on the understanding of its computational complexity. Roughly speaking, we show that whenever the nominal problem is already an NP-complete problem, then under some mild assumptions the complexity of the min-max variant will be significantly larger than the complexity of the nominal variant.
As a first example of this behavior, consider the following: It was proven by Rutenburg \cite{DBLP:journals/amai/Rutenburg93} that the maximum clique interdiction problem is complete for the complexity class $\Sigma^p_2$.

\paragraph*{The natural complexity class for bilevel problems}
In the year 1976, Stockmeyer \cite{DBLP:journals/tcs/Stockmeyer76} introduced the  \emph{polynomial hierarchy}, containing the complexity classes $\Sigma^p_k$ for all $k \in \N$.
While the complexity class $\Sigma^p_1 = \NP$ is very well known even to outsiders of theoretical computer science, the complexity class $\Sigma^p_2$ seems to be less known to non-specialists. 
However, for min-max optimization problems, it turns out that $\Sigma^p_2$ is the natural class to describe them. A formal definition of the classes $\Sigma^p_k$ for all $k \in \N$ is given in \cref{sec:prelim}.
Roughly speaking, the class $\Sigma^p_2$ contains all the problems of the form: 
Does there EXIST some object $x$, such that FOR ALL objects $y$ some easy-to-check property $P(x,y)$ holds? For example, the (decision version of the) maximum clique interdiction problem belongs to $\Sigma^p_2$. This is because it has objective value $t$ if and only if there EXISTS some set $W \subseteq V$ of size at most $k$, such that ALL cliques of size $t+1$ or more intersect $W$.
Hence the class $\Sigma^p_2$ naturally corresponds to decision variants of min-max optimization problems.
Analogously, there exists the class $\Sigma^p_3$, corresponding to min-max-min optimization problems, and so on.

Researchers are interested in the question which problems are complete for the class $\Sigma^p_2$, since this has several interesting consequences.
Similarly to the widely believed conjecture $P \neq \NP$, it is also widely believed that $\NP \neq \Sigma^p_2$. 
If this conjecture is true, it means that a $\Sigma^p_2$-complete or $\Sigma^p_3$-complete problem can not be described by a mixed integer program of polynomial size (because this would imply a polynomial-time reduction to MIP, which is NP-complete). 
This means that current existing integer program solvers, which have in some cases been very successful in tackling NP-complete problems, are not very successful at solving the harder case of $\Sigma^p_2$-problems. 
This theoretical difference can also be observed in practice.
For example, Woeginger \cite{DBLP:journals/4or/Woeginger21} points out as an example a bilinear program, which comes from social choice theory. Solving this program was posed to the Operations Research community as a challenge by Kurz and Napel \cite{DBLP:journals/ol/KurzN16}.
Despite the problem having very small input, the challenge is still widely open and seems out of reach of current methods. 
Woeginger argues that the mathematical optimization community in the past has proven greatly successful at tackling NP-complete problems through the  means of sophisticated IP-solvers. 
Today, the community is at a threshold, where new and more powerful tools and techniques need to be developed in order to tackle $\Sigma^p_2$-hard problems.
Similar observations were the motivation for a recent DAGSTUHL seminar on the topic of optimization at the second level.
In its report \cite{DBLP:journals/dagstuhl-reports/BrotcorneBHH22}, it is noted that \enquote{methodologies that have been developed for NP-complete problems over the last 50 years do not directly apply to robust and/or bilevel optimization problems}, and furthermore that \enquote{we will need to develop new techniques, new tricks,
new insights, new algorithms, and new theorems to get a grip on this area}. The goal of this paper is to address this problem by providing a novel and powerful tool, which can be used to shed light on the complexity landscape of bilevel (and more generally, multi-level) optimization. 

\subsection{Literature overview}
Due to space restrictions, it is impossible to consider every sub-area of bilevel optimization in a single paper.
In this paper, we restrict our attention to the three areas of \emph{network interdiction}, \emph{min-max regret robust optimization} and \emph{two-stage adjustable robust optimization}.
We choose these areas, since they represent three highly popular sub-areas of bilevel optimization, and since we can showcase a diverse set of techniques in the three different cases, while staying concise at the same time. 
In fact, since there are hundreds of papers in each of these popular areas, it is not possible to give a complete overview.
We split this literature survey into an overview of the general area, as well as an overview of work specifically on $\Sigma^p_2$- and $\Sigma^p_3$-completeness.

\paragraph*{General area}
In the area of \emph{network interdiction}, one is concerned with the question, which parts of a network are most vulnerable to attack or failure.
Such questions can be formulated as a min-max problem, and can be imagined as a game between Alice and Bob, where Alice interdicts a limited set of elements from the network in order to disturb Bob's objective.
Other names for slight variants of the network interdiction problem are the most vital nodes/most vital edges problem and the vertex/edge blocker problem. 
Network interdiction has been considered for a vast amount of standard optimization problems.
Among these are problems in P such as
shortest path  \cite{malik1989k,bar1998complexity,DBLP:journals/mst/KhachiyanBBEGRZ08},
matching \cite{DBLP:journals/dam/Zenklusen10a},
minimum spanning tree \cite{DBLP:journals/ipl/LinC93},
or maximum flow \cite{WOOD19931}.
These publications show the NP-hardness of the corresponding interdiction problem variant (except \cite{malik1989k} showing that the single most vital arc problem lies in P).
Furthermore, algorithms for interdiction problems that are NP-complete were developed, for example for
vertex covers \cite{DBLP:conf/iwoca/BazganTT10, DBLP:journals/dam/BazganTT11},
independent sets \cite{DBLP:conf/iwoca/BazganTT10, DBLP:journals/dam/BazganTT11,DBLP:journals/gc/BazganBPR15, DBLP:conf/tamc/PaulusmaPR17,DBLP:journals/dam/HoangLW23},
colorings \cite{DBLP:journals/gc/BazganBPR15, DBLP:conf/iscopt/PaulusmaPR16, DBLP:conf/tamc/PaulusmaPR17},
cliques \cite{DBLP:journals/networks/PajouhBP14,DBLP:conf/iscopt/PaulusmaPR16, DBLP:journals/eor/FuriniLMS19, DBLP:journals/anor/Pajouh20},
dominating sets \cite{DBLP:journals/eor/PajouhWBP15}, or
1- and $p$-center \cite{DBLP:conf/cocoa/BazganTV10, DBLP:journals/jco/BazganTV13},
1- and $p$-median \cite{DBLP:conf/cocoa/BazganTV10, DBLP:journals/jco/BazganTV13}.
Typically, the complexity of these problems is not analyzed beyond NP-hardness.
A general survey is provided by Smith, Prince and Geunes \cite{smith2013modern}.

In the area of \emph{min-max regret robust optimization}, one is faced with an uncertain cost scenario.
The decision-maker seeks to minimize the maximum deviation between incurred cost and optimal cost.
The min-max regret criterion is popular in robust optimization and has been considered for many standard optimization problems.
Among the problems, where the nominal problem is contained in P and are analyzed from a complexity viewpoint, are for example
shortest path \cite{DBLP:journals/dam/AverbakhL04},
minimum spanning tree \cite{DBLP:journals/dam/AverbakhL04},
min cut \cite{DBLP:journals/disopt/AissiBV08},
min $s$-$t$ cut \cite{DBLP:journals/disopt/AissiBV08},
1-center and 1-median with uncertainty in the node or edge weights \cite{DBLP:journals/informs/AverbakhB00,DBLP:journals/dam/Averbakh03},
and scheduling variants \cite{lebedev2006complexity,DBLP:journals/ol/Conde14}.
Among these, some interval min-max regret problems are shown to be in P, such as min cut, 1-center and 1-median.
The rest of the problems is shown to be NP-hard.
In the realm of interval min-max regret problems, for which the nominal problem is NP-complete, there are publications considering the complexity of the knapsack problem \cite{DBLP:journals/disopt/DeinekoW10}, the set cover problem \cite{DBLP:journals/coap/CocoSN22}, and the TSP and Steiner tree problem \cite{DBLP:journals/talg/GaneshMP23}.
Further but more general publications on this topic are \cite{kouvelis2013robust,kasperski2016robust}.
A general survey is provided by Aissi, Bazgan and Vanderpooten \cite{DBLP:journals/eor/AissiBV09}.

In the area of \emph{two-stage adjustable robust optimization}, one is faced with an uncertain cost scenario.
The decision-maker has to make a two-step decision, where in the first step a partial decision has to be made without full knowledge of the scenario (here-and-now), and in a second step a partial corrective decision can be made with full knowledge of the scenario (wait-and-see).
Two-stage adjustable optimization has been considered for a large amount of standard optimization problems, for example in \cite{DBLP:conf/or/KasperskiZ15,kasperski2016robust,DBLP:journals/dam/KasperskiZ17,DBLP:journals/eor/ChasseinGKZ18,DBLP:journals/eor/GoerigkLW22}.
A general survey is provided by Yan{\i}ko{\u{g}}lu, Gorissen, and den Hertog \cite{DBLP:journals/eor/YanikogluGH19}.

\paragraph*{Closely related work}
Despite a tremendous interest in bilevel optimization, and despite the complexity classes $\Sigma^p_k$ being the natural complexity classes for this type of decision problems, we are aware of only a handful of publications on the matter of $\Sigma^p_k$-completeness with respect to these areas.
Rutenburg \cite{DBLP:journals/amai/Rutenburg93} shows $\Sigma^p_2$-completeness of the maximum clique interdiction problem. 
Deineko and Woeginger \cite{DBLP:journals/disopt/DeinekoW10} show $\Sigma^p_2$-completeness of the min-max regret interval knapsack problem.
Caprara, Carvalho, Lodi and Woeginger \cite{DBLP:journals/siamjo/CapraraCLW14} show $\Sigma^p_2$-completeness of an interdiction variant of knapsack that was originally introduced by DeNegre \cite{10.5555/2231641}.
Fröhlich and Ruzika \cite{DBLP:journals/tcs/FrohlichR21} show the $\Sigma^p_2$-completeness of two versions of a location-interdiction problem.
Nabli, Carvalho and Hosteins \cite{DBLP:journals/jcss/NabliCH22} study the multilevel critical node problem and prove its $\Sigma^p_3$-completeness.
Coco, Santos and Noronha \cite{DBLP:journals/coap/CocoSN22} show $\Sigma^p_2$-completeness of the  min-max regret maximum benefit set covering problem.
Goerigk, Lendl and Wulf \cite{DBLP:journals/dam/GoerigkLW24} show $\Sigma^p_3$-completeness of two-stage adjustable variants of TSP, independent set, and vertex cover.
Grüne \cite{DBLP:conf/latin/Grune24} introduces a reduction framework for so-called universe gadget reductions to show $\Sigma^p_3$-completeness of recoverable robust versions of typical optimization problems.
Tomasaz, Carvalho, Cordone and Hosteins \cite{DBLP:journals/corr/abs-2406-01756} show $\Sigma^p_2$-completeness of an interdiction-knapsack problem and $\Sigma^p_3$-completeness of an fortification-interdiction-knapsack problem.
The compendium by Umans and Schaefer \cite{schaefer2008completeness} contains many $\Sigma^p_2$-complete problems, but few of them are related to bilevel optimization.
A seminal paper by Jeroslow \cite{DBLP:journals/mp/Jeroslow85} shows $\Sigma^p_k$-completeness for various classes of general multi-level programs.

The research that is most closely connected to this paper is the Ph.D. thesis of Johannes \cite{johannes2011new}.
She proves $\Sigma^p_2$-hardness for a variety of problems including “adversarial problems”, “partial inverse optimization problems” and several other types of classes.
She also discusses a transitive transformation for NP-completeness results that she calls “value preserving” \cite[Theorem 2.1.1]{johannes2011new}.
She shows that a wide range of adversarial problems and partial preprocessing problems can be proven to be $\Sigma^p_2$-hard by relying on their corresponding NP-completeness proofs, providing the NP-completeness transformation was “value preserving.”
In our paper, we establish a similar meta-theorem for other classes of problems, including  interdiction, min-max regret, and two-stage adjustable optimization.
     
Recently, we have learned that Johannes and Orlin [2020] have written an unpublished manuscript that extended the research in Johannes’s thesis.
They refer to certain NP-completeness transformations as \enquote{ratcheting transformations} because they can be used to prove hardness for extensions of these problems to problems higher in the polynomial time hierarchy.
There are additional overlaps of their work with our meta-theorem.

To the best of our knowledge, this completes the list of known $\Sigma^p_k$-completeness results for bilevel optimization.
We find it remarkable that so few results exist in this area, despite being an active research area for well over two decades.
In contrast, NP-completeness proofs are known for a huge number of problems.
One possible reason for this, according to Woeginger \cite{DBLP:journals/4or/Woeginger21}, is that \enquote{at the current moment, establishing $\Sigma^p_2$-completeness is usually tedious and mostly done via lengthy reductions that go all the way back to 2-Quantified Satisfiability}.

\subsection{Our contribution}

We make a large step towards understanding the complexity of bilevel combinatorial optimization, by introducing a general and powerful meta-theorem to prove $\Sigma^p_k$-completeness.
Our meta-theorem can be applied to such problems which are already NP-complete and have an additional property (explained below).
Using the meta-theorem, we can \enquote{upgrade} an existing NP-completeness proof to a $\Sigma^p_k$-completeness proof with very little work.
In other words, this means that for a lot of NP-complete problems, the $\Sigma^p_2$-completeness of its min-max variant follows essentially \enquote{for free} from its NP-completeness.
An analogous statement holds for min-max-min problems and $\Sigma^p_3$-completeness.
We apply our meta-theorem to the areas of bilevel optimization, in particular network interdiction, min-max regret robust optimization and two-stage adjustable optimization.
This way, we obtain at least 72 natural $\Sigma^p_2$- or $\Sigma^p_3$-complete problems relevant to these areas.
We remark that earlier papers showed $\Sigma^p_2$- or $\Sigma^p_3$-completeness only for one problem at a time, usually in a tedious fashion.
In contrast, our meta-theorem contains essentially all known $\Sigma^p_2$- or $\Sigma^p_3$-completeness results in the areas of network interdiction, min-max regret, and two-stage adjustable robust optimization as a special case (namely,
min-max-regret knapsack \cite{DBLP:journals/disopt/DeinekoW10},
min-max regret maximum benefit set cover \cite{DBLP:journals/coap/CocoSN22},
interdiction maximum clique \cite{DBLP:journals/amai/Rutenburg93},
interdiction knapsack \cite{DBLP:journals/siamjo/CapraraCLW14,DBLP:journals/corr/abs-2406-01756}, as well as
two-stage adjustable TSP, vertex cover, independent set \cite{DBLP:journals/dam/GoerigkLW24}).
While each of these earlier works proves $\Sigma^p_2$-hardness of only one specific problem, our meta-theorem proves $\Sigma^p_2$-completeness simultaneously for a large class of problems, including all problems mentioned in \cref{sec:ssp-reductions}, as well as possible further problems which are added in the future.

We call this class of problems for which the meta-theorem is applicable the class SSP-NP-complete (SSP-NPc), for reasons which are explained in \cref{sec:framework}.
We show that at least the following 25 classical problems are contained in the class SSP-NPc:
\begin{quote}
        Satisfiability,
        3-Satis\-fiability,
        Vertex Cover, Dominating Set, Set Cover, Hitting Set, Feedback Vertex Set, Feedback Arc Set, 
        Uncapacitated Facility Location, 
        p-Center, p-Median,
        Independent Set, Clique,
        Subset Sum, Knapsack, Partition, Scheduling,
        Directed/Undirected Hamiltonian Path, 
        Directed/Undirected Hamiltonian Cycle, 
        Traveling Salesman Problem,
        Two Directed Vertex Disjoint Path, 
        $k$-Vertex Directed Disjoint Path,
        Steiner Tree.
\end{quote}

A formal description of all these problems is provided in \Cref{sec:ssp-reductions}.
Furthermore, one can add new problems to the class SSP-NPc with very little work by finding a so-called \emph{SSP reduction} starting from any problem which is already contained in the class.
Since we could easily show for many classic problems that they are contained in SSP-NPc, we suspect that many more problems can be added in the future.
In fact, we observed the pattern that proving some problem to be contained in SSP-NPc is usually significantly easier than formulating a complex and technically challenging $\Sigma^p_k$-completeness proof.
This offers future researchers a convenient way to prove $\Sigma^p_k$-completeness of relevant problems. 

Since the class SSP-NPc contains so many well-known problems, our work shows that, in a certain sense, the \enquote{standard} or \enquote{normal} behavior of a NP-complete problem is to become $\Sigma^p_2$-complete when modified to be a min-max problem.
While this behavior is very intuitive, we are the first to be able to prove this intuition to be true for a very broad range of problems in robust and bilevel optimization.
Now we are ready to describe our main results.

\textbf{Network interdiction/blocker problems/most vital nodes or edges.}
In \cref{sec:interdiction}, we are concerned with the \emph{minimum cost blocker problem}. 
In the minimum cost blocker problem one is given a nominal problem $\Pi$ and additionally a cost function.
The question is for Alice to find a \emph{blocker} of minimum cost.
A blocker is a set which intersects every solution of the nominal problem.
(For example, every Hamiltonian cycle, every minimum vertex cover, every maximum clique, etc.)
Our main result is that for every problem $\Pi \in \text{SSP-NPc}$, the corresponding minimum cost blocker problem is $\Sigma^p_2$-complete.

\textbf{Min-max regret robust optimization with interval uncertainty.}
In \cref{sec:regret}, we are concerned with the \emph{min-max regret} robust optimization problem with interval uncertainty.
Faced with an uncertain cost function, the goal is to minimize the maximum deviation between incurred cost and optimal cost.
The uncertain cost is modelled by assuming that each cost coefficient stems from a pre-specified interval.
Our main result in this section is that for every problem $\Pi \in \text{SSP-NPc}$, the corresponding min-max regret robust optimization problem with interval uncertainty is $\Sigma^p_2$-complete. 

\textbf{Two-stage adjustable robust optimization}.
In \cref{sec:two-stage}, we are concerned with \emph{two-stage adjustable robust optimization} with discrete budgeted uncertainty (also called discrete $\Gamma$-uncertainty).
Roughly speaking, the class of two-stage problems models problems that are divisible into two stages:
The decision on the first stage has to be made without full information on the real instance (here-and-now), and the decision in the second stage is made after the uncertainty is revealed (wait-and-see).
Two-stage adjustable problems can be formulated via min-max-min expressions.
Therefore, the natural class for them is the class $\Sigma^p_3$.
Consequently, in \cref{sec:two-stage}, we demonstrate how our main idea can also be used to show completeness for the third stage (and potentially higher stages) of the polynomial hierarchy. Our main result in this section is that for every problem $\Pi \in \text{SSP-NPc}$, the corresponding two-stage adjustable problem with discrete budgeted uncertainty is $\Sigma^p_3$-complete.
\subsection{Technical Overview}
\label{sec:technicalOverview}
We give a short overview of the techniques and ideas used to obtain our main theorems.
For that purpose it becomes necessary to formally describe to what kind of optimization problems the meta-theorem applies.
In this paper, we consider \emph{linear optimization problems} (LOP).
Inspired by problems appearing in the literature, we define an LOP to be a problem expressed as a tuple $(\I, \U, \F, d, t)$.
Here, $\I \subseteq \bin^*$ is the set of input instances of the problem encoded as words in binary.
Associated to each input instance $I \in \I$, we assume that there is a \emph{universe} $\U(I)$, a linear \emph{cost function} $d^{(I)} : \U(I) \to \Z$ over the universe, the \emph{feasible sets} $\F(I) \subseteq \powerset{\U(I)}$ containing all feasible subsets of the universe, and a \emph{threshold} $t^{(I)}$.

Our ideas are best explained using an example.
Consider (the decision version of) the vertex cover problem.
For an instance $I$, we are given an undirected graph $G = (V,E)$, and some threshold $t^{(I)} \in \Z_{\geq 0}$.
The question is if there is a vertex cover of size at most $t^{(I)}$.
We can rephrase this question as $d^{(I)}(F) \leq t^{(I)}$, where $F$ is some vertex cover and $d^{(I)} = \bf{1}$ is the unit cost function.
Interpreting the vertex cover problem as an LOP in the above sense means the following:
The input instance is given as a tuple $I = (G, t)$ (encoded in binary).
The universe associated to some input instance $I$ is given by $\U(I) = V$, and the feasible sets are given by $\F(I) = \set{F \subseteq V : F \text{ is a vertex cover}}$. 
The cost function and threshold are given by $d^{(I)}$ and $t^{(I)}$. 

The decision question associated to the problem is to decide if there is a feasible set $F \in \F(I)$ such that its cost is below the threshold, that is $d^{(I)}(F) \leq t^{(I)}$.
(Note that this models a minimization problem, but maximization problems can be modeled as well by using negative coefficients.)
In general, we define the \emph{solution set} $\sol(I)$ as the set of all solutions of the instance $I$, that is
\begin{equation}
    \sol(I) := \set{F \in \F(I) : d^{(I)}(F) \leq t^{(I)}}. \label{eq:def-sol-LOP}
\end{equation}

Roughly stated, our main idea is now to show that many well-known NP-completeness proofs from some problem $\Pi_1$ to some other problem $\Pi_2$ have a special property, which we call the \emph{SSP property}.
On an intuitive level, this property states that the universe of $\Pi_1$ can be injectively embedded into the universe of $\Pi_2$ in such a way that the following two properties hold: 
(P1). Every solution of $\Pi_1$ corresponds to a partial solutions of $\Pi_2$, 
and (P2). every solution of $\Pi_2$ when restricted to the image of the embedding corresponds to a solution of $\Pi_1$.

We show that a surprisingly large amount of NP-completeness reductions which are known from the literature actually have the SSP property (\cref{sec:ssp-reductions}). Even the historically first NP-completeness reduction, i.e. the reduction used by Cook and Levin to show that \textsc{Satisfiability} is NP-complete has the SSP property (\cref{thm:cook-levin}).
We then proceed to show that every NP-completeness reduction with the SSP property can be upgraded to a $\Sigma^p_2$-completeness proof (or $\Sigma^p_3$-completeness proof, respectively) between the min-max variants (min-max-min variants, respectively) of the problems $\Pi_1$ and $\Pi_2$, hence proving the desired result of $\Sigma^p_2$-completeness ($\Sigma^p_3$-completeness, respectively).

The next step for us is to consider a slight generalisation of the concept of an LOP.
This has two reasons:
First, it turns out that for most of our arguments, we do not make explicit use of $\F, d$ and $t$.
We only make use of $\I$, $\U$, and $\sol$.
This means that we can abstract from these unnecessary details to have a cleaner argument.
The second reason is that there exist many optimization problems, which can be expressed as an LOP only in an awkward, non-natural way.
For example, consider the Hamiltonian cycle problem.
Even though it is possible to express the Hamiltonian cycle problem as an LOP, using trivial values for $d$ and $t$, this definition seems a bit unnatural.
For this reason, we consider the concept of so-called \emph{subset search problems (SSP)}.
An SSP is a problem $\Pi = (\I, \U, \sol)$, where $\I$ is the set of input instances, $\U(I)$ is the universe associated to each instance, and $\sol(I)$ is the set of solutions associated to each instance.
A formal definition is provided in \cref{def:SSP}.

Every LOP can be interpreted as an SSP in a straight-forward way, using \cref{eq:def-sol-LOP}.
Hence the vertex cover problem is an example of an SSP problem.
Another example of an SSP problem is the problem \textsc{3-Satisfiability}.
The input is some formula $\varphi$ with clauses $c_1, \dots, c_m$. The universe is the set $L = \fromto{\ell_1}{\ell_n} \cup \fromto{\overline \ell_1}{\overline \ell_n}$ of all literals.
The solution set of $\varphi$ is the set of all the subsets of the literals which encode a satisfying solution, i.e. 
$$
\sol(\varphi) = \set{L' \subseteq L : |L' \cap \set{\ell_i, \overline \ell_i}| = 1 \ \forall i \in \set{1,\dots,n}, L' \cap c_j \neq \emptyset \ \forall j \in \set{1, \dots, m}}.
$$

We are now ready to explain our main idea of SSP reductions. 
Our ideas are best explained with an example.
Consider the SSP problem \textsc{3Sat} with universe $\U = \fromto{\ell_1}{\ell_n} \cup \fromto{\overline \ell_1}{\overline \ell_n}$ (the literals) and the SSP problem \textsc{Vertex Cover} with universe $\U' = V$ (the vertices). 
We recall the classical NP-hardness reduction from \textsc{3Sat} to \textsc{Vertex Cover}, depicted in \Cref{fig:reduction:3sat-vertex-cover} from the book of Garey and Johnson \cite{DBLP:books/fm/GareyJ79}.
Given a \textsc{3Sat} instance consisting out of literals $\fromto{\ell_1}{\ell_n} \cup \fromto{\overline \ell_1}{\overline \ell_n}$ and clauses $C$, the reduction constructs a graph $G = (V,E)$ the following way:
The graph contains vertices $W := \fromto{v_{\ell_1}}{v_{\ell_n}} \cup \fromto{v_{\overline \ell_1}}{v_{\overline \ell_n}}$ such that each vertex $v_{\ell_i}$ is connected to vertex $v_{\overline \ell_i}$ with an edge.
Furthermore, for each clause, we add a new triangle to the graph, such that the three vertices of the triangle are connected to the corresponding vertices of the literals appearing in the clause.
The following is easily verified:
Every vertex cover of $G$ has size at least $|L|/2 + 2|C|$ and $G$ has a vertex cover of size $|L|/2 + 2|C|$ if and only if  the \textsc{3Sat} instance is a Yes-instance.

\tikzstyle{vertex}=[draw,circle,fill=black, minimum size=4pt,inner sep=0pt]
\tikzstyle{edge} = [draw,-]
\begin{figure}[thpb]
\centering
\resizebox{0.67\textwidth}{!}{
\begin{tikzpicture}[scale=1,auto]

\node[vertex] (x1) at (0,0) {}; \node[above] at (x1) {$v_{\ell_1}$};
\node[vertex] (notx1) at (2,0) {}; \node[above] at (notx1) {$v_{\overline \ell_1}$};
\draw[edge] (x1) to (notx1);

\node[vertex] (x2) at (4,0) {}; \node[above] at (x2) {$v_{\ell_2}$};
\node[vertex] (notx2) at (6,0) {}; \node[above] at (notx2) {$v_{\overline \ell_2}$};
\draw[edge] (x2) to (notx2);

\node[vertex] (x3) at (8,0) {}; \node[above] at (x3) {$v_{\ell_3}$};
\node[vertex] (notx3) at (10,0) {}; \node[above] at (notx3) {$v_{\overline \ell_3}$};
\draw[edge] (x3) to (notx3);

\node[vertex] (c1) at (4,-2.25) {}; \node[below] at (c1) {$v^{c_1}_{\overline \ell_1}$};
\node[vertex] (c2) at (5,-1.25) {}; \node[above left] at (c2) {$v^{c_1}_{\overline \ell_2}$};
\node[vertex] (c3) at (6,-2.25) {}; \node[below] at (c3) {$v^{c_1}_{\ell_3}$};
\node[] at (5,-1.92) {$c_1$};
\draw[edge] (c1) to (c2) to (c3) to (c1);
\draw[edge] (notx1) to (c1);
\draw[edge] (notx2) to (c2);
\draw[edge] (x3) to (c3);

\node at ($(x1)+(-1,0)$) {$W$};
\draw[dashed,rounded corners] ($(x1)+(-.5,+.7)$) rectangle ($(notx3) + (.5,-.4)$);

\end{tikzpicture}
}
\caption{Classic reduction of \textsc{3Sat} to \textsc{Vertex Cover} for $\varphi = (\overline \ell_1 \lor \overline \ell_2 \lor \ell_3)$.}
\label{fig:reduction:3sat-vertex-cover1}
\end{figure}

The above reduction is of course well-known.
However, we want to bring attention to the fact that this reduction has the SSP property.
This property is that the reduction maps the set of all solutions of the \textsc{3Sat} instance to the set of all solutions of the \textsc{Vertex Cover} instance in a one-to-one fashion.
More precisely, consider the set $W$. 
Every small vertex cover (of size $|L|/2 + 2|C|$) restricted to the set $W$ directly encodes a possible solution of the \textsc{3Sat} instance.
Conversely, for every single solution $\alpha$ of the \textsc{3Sat} instance, we can find a small vertex cover $S'$ (of size $|L|/2+2|C|$) such that $S' \cap W$ encodes $\alpha$.

We describe this one-to-one correspondence more formally.
Let $\sol(\varphi) \subseteq \powerset{\U}$ denote the solutions of \textsc{3Sat} (i.e. the set of all subsets of the literals which encode a satisfying assignment). Let $k := |L|/2 + 2|C|$ and  $\sol'(G, k) \subseteq \powerset{\U'}$ denote  the solutions of \textsc{Vertex Cover} (i.e. the set of all vertex covers of size at most $k$). 
We consider the injective function $f : \U \to \U'$ with $f(\ell_i) = v_{\ell_i}$ and $f(\overline \ell_i) = v_{\overline \ell_i}$. 
This function $f$ can be interpreted as a function which embeds the universe $\U$ into the universe $\U'$. It describes which literals in $\U$ correspond to which vertices in $\U'$. 
Note that the vertex subset $W = f(\U)$ is the image of $\U$.
Then the following holds:
For every satisfying assignment $S \in  \sol(\varphi)$, there exists at least one vertex cover $S' \in \sol'(G, k)$ such that $S' \cap W = f(S)$. This is property (P1).
Conversely, we also have that for every vertex cover $S' \in \sol'(G, k)$, the set $f^{-1}(S' \cap W)$ is contained in $\sol(\varphi)$. This is property (P2). It can be seen that 
$\text{(P1)} \land \text{(P2)}$ is equivalent to the set-based equation
\begin{equation}
    \set{f(S) : S \in \sol(\varphi) } = \set{S' \cap f(\U) : S' \in  \sol'(G, k)}. \label{eq:SSP}
\end{equation}

The above equation defines the SSP property. It turns out to be the key ingredient which is required for our meta-theorem.
We call a reduction with the SSP property an SSP reduction (compare \cref{def:ssp-reduction}).
We write $\Pi_1 \leqSSP \Pi_2$ for the fact that there exists an SSP reduction from $\Pi_1$ to $\Pi_2$.
We introduce the class SSP-NPc as an analogon to the class of NP-complete problems, but using polynomial-time SSP reductions instead of normal polynomial-time reductions.
\cref{sec:ssp-reductions} contains a list of problems in SSP-NPc.
In order to add a new problem $\Pi$ to the list, it suffices to prove $\Pi' \leqSSP \Pi$ for an arbitrary problem $\Pi' \in  \text{SSP-NPc}$.

This completes the description of the idea of SSP-reductions.
We note that another well-known variants of reductions exists in the literature, so-called \emph{parsimonious} reductions. On the first glance, our reductions seem similar to parsimonious reductions.
However, these two concepts are not the same, because parsimonious reductions map solutions to solutions bijectively, while our reductions $f : \U \to \U'$ map elements to elements injectively.

Finally, we explain how the idea of SSP reductions is used to obtain a meta-theorem.
The main idea is that the existence of an SSP reduction tells us that the two involved problems have a very similar solution structure.
Therefore, it suffices to prove $\Sigma^p_2$-completeness for the min-max variant of only one single problem in SSP-NPc (say \textsc{Sat}, for example), and then invest a little extra work to show that this $\Sigma^p_2$-completeness actually carries over to the min-max variant of all other problems in SSP-NPc.
For example, in \cref{sec:interdiction} on interdiction problems, we apply the following proof strategy:

First, we consider the interdiction variant only for the single problem $\Pi = \textsc{Sat}$.
We show using traditional techniques that Interdiction-\textsc{Sat} is $\Sigma^p_2$-complete (by a reduction from $\exists \forall$-\textsc{3Dnf-Sat}).
Next, we consider an arbitrary SSP-NP-complete problem $\Pi'$.
Since $\Pi'$ is SSP-NP-complete, there is a reduction $\textsc{Sat} \leqSSP \Pi'$.
This SSP reduction implies that given a \textsc{Sat} instance $I$, we can find a $\Pi'$ instance $I'$, such that $I$ can be imagined as a \enquote{sub-instance} of $I'$.
Specifically, there is an injective function $f$ mapping the universe of $\U(I)$ into the universe $\U'(I')$.
Furthermore, the topology of solutions is maintained (by \cref{eq:SSP}, or equivalently properties (P1) and (P2)).
Hence the $\textsc{Sat}$ instance $I$ can be imagined as a $\Pi'$ sub-instance of $I'$.
We show that this relation extends in such a way that the corresponding \textsc{Interdiction-Sat} instance can be imagined as a sub-instance of \textsc{Interdiction-$\Pi'$}. 
We can modify the costs of interdiction such that all newly elements that are part of $I'$, but not part of $I$ receive infinite costs.
On the other hand, all universe elements that are part of $I$ receive the same costs in $I'$.
Therefore all solutions of $I'$ are blocked by some blocker if and only if the sub-instance of \textsc{Sat} is blocked.
Since \textsc{Interdiction-Sat} $\Sigma^p_2$-hard, it follows that \textsc{Interdiction-$\Pi'$} is $\Sigma^p_2$-hard.

The results about min-max regret robust optimization in \cref{sec:regret} and two-stage adjustable robust optimization in \cref{sec:two-stage} follow essentially the same strategy.
We remark that in all three sections, we actually show $\Sigma^p_2$-completeness ($\Sigma^p_3$-completeness) of a more restricted problem than the original problem for all $\Pi \in \text{SSP-NPc}$.
These versions may be of independent interest, since they show that the considered problems are already hard even if the input parameters are more restricted.
We call these restricted versions \textsc{Combinatorial Interdiction}-$\Pi$ (\Cref{def:comb-interdiction}), \textsc{Restricted Interval Min-Max Regret}-$\Pi$ (\Cref{def:restricted-min-max-regret-problem}), and \textsc{Combinatorial Two-Stage Adjustable}-$\Pi$ (\Cref{def:comb-two-stage}).
\section{Preliminaries}
\label{sec:prelim}

A \emph{language} is a set $L\subseteq \bin^*$.
A language $L$ is contained in $\Sigma^p_k$ iff there exists some polynomial-time computable function $V$ (verifier), and $m_1,m_2,\ldots, m_k \leq \poly(|w|)$ such that for all $w \in \set{0,1}^*$
$$
    w \in L \ \Leftrightarrow \ \exists y_1 \in \set{0,1}^{m_1} \ \forall y_2 \in \set{0,1}^{m_2} \ldots \ Q y_k \in \set{0,1}^{m_k}: V(w,y_1,y_2,\ldots,y_k) = 1,
$$
where $Q = \exists$, if $k$ is odd, and $Q = \forall$, if $k$ is even.

An introduction to the polynomial hierarchy and the classes $\Sigma^p_k$ can be found in the book by Papadimitriou \cite{DBLP:books/daglib/0072413} or in the article by Jeroslow \cite{DBLP:journals/mp/Jeroslow85}.
An introduction specifically in the context of bilevel optimization can be found in the article of Woeginger \cite{DBLP:journals/4or/Woeginger21}.

A \emph{many-one-reduction} or \emph{Karp-reduction} from a language $L$ to a language $L'$ is a map $f : \bin^* \to \bin^*$ such that $w \in L$ iff $f(w) \in L'$ for all $w \in \bin^*$. 
A language $L$ is $\Sigma^p_k$-hard, if every $L' \in \Sigma^p_k$ can be reduced to $L$ with a polynomial-time many-one reduction. If $L$ is both $\Sigma^p_k$-hard and contained in $\Sigma^p_k$, it is $\Sigma^p_k$-complete.

A \emph{boolean variable} $x$ is a variable which takes one of the values 0 or 1. Let $X = \fromto{x_1}{x_n}$ be a set of variables. The corresponding \emph{literal set} is given by $L = \fromto{x_1}{x_n} \cup \fromto{\overline x_1}{\overline x_n}$. A \emph{clause} is a disjunction of literals. 
A boolean formula is in \emph{conjunctive normal form} (CNF) if it is a conjunction of clauses. It is in \emph{disjunctive normal form} (DNF), if it is a disjunction of conjunctions of literals.
In this paper, we use the notation where a clause is represented by a subset of the literals, and a CNF formula $\varphi$ is represented by a set of clauses. 
For example, the set $\set{\set{x_1, \overline x_2},\set{x_2, x_3}}$ corresponds to the formula $(x_1 \lor \overline x_2)\land (x_2 \lor x_3)$.
We write $\varphi(X)$ to indicate that formula $\varphi$ depends only on $X$.
Sometimes we are interested in cases where the variables are partitioned, we indicate this case by writing $\varphi(X,Y,\ldots)$.
An \emph{assignment} of variables is a map $\alpha : X \to \bin$.
The evaluation of the formula $\varphi$ under assignment $\alpha$ is denoted by $\varphi(\alpha) \in \bin$.
We also denote a partial assignment $\alpha$ on $X$ of a formula with partitioned variables $\varphi(X, Y)$ with $\varphi(\alpha, Y)$.
We further denote $\alpha \vDash \varphi$ if $\alpha$ satisfies formula $\varphi$.

For some cost function $c : U \to \R$, and some subset $U' \subseteq U$, we define the cost of the subset $U'$ as $c(U') := \sum_{u \in U'} c(u)$. For a map $f : A \to B$ and some subset $A' \subseteq A$, we define the image of the subset $A'$ as $f(A') = \set{f(a) : a \in A'}$. Even though these two definitions are ambiguous, in this paper it will always be clear from context, which definition applies.
\section{Framework}
\label{sec:framework}

The goal of this section is to introduce the class of \emph{SSP-NP-complete} problems, i.e. the class of all problems for which our meta-theorem is applicable.
As explained in \cref{sec:technicalOverview}, we first consider linear optimization problems (LOP problems) and then an abstraction of LOP problems, which we call SSP problems.
We then introduce the concept of an SSP reduction, and finally define the class SSP-NPc.

\subsection{Linear Optimization Problems}
It is important to remark that our meta-theorem cannot cover every single discrete optimization problem.
This is for two reasons: First, the set of all discrete optimization problems is incredibly diverse.
There seems to be no universally agreed upon definition of the term ``discrete optimization problem''.
Secondly, we need to assume a minimal amount of structure in order to meaningfully describe min-max variants of some problem. 
For this reason, we make the following assumption:
We assume that the problem in question has some \emph{universe} $\U$.
We also assume that there are some feasible solutions associated to the problem, that every feasible solution can be encoded purely as a subset of the universe, and it can be checked efficiently whether some proposed subset is a feasible solution.
Finally, we assume that there is some linear cost function on the universe, and the goal of the problem is to find a feasible solution of small cost.
Note that all of these assumptions are typical for discrete optimization problems.
In order to talk about the computational complexity of an LOP problem, we need to treat it as a decision problem. 
Therefore, we assume that the input contains some threshold, and the question is whether there is some feasible set whose cost is below the threshold.
Formally, this leads to the following definition.

\begin{definition}[Linear Optimization Problem]
\label{def:LOSPP}
    A linear optimization problem (or in short LOP problem)  $\Pi$ is a tuple $(\I, \U, \F, d, t)$, such that
    \begin{itemize}
        \item $\I \subseteq \{0,1\}^*$ is a language. We call $\I$ the set of instances of $\Pi$.
        \item To each instance $I \in \I$, there is some
        \begin{itemize}
            \item set $\U(I)$ which we call the universe associated to the instance $I$.
            \item set $\F(I) \subseteq \powerset{\U(I)}$ that we call the feasible solution set associated to the instance $I$. 
            \item function $d^{(I)}: \U(I) \rightarrow \Z$ mapping each universe element $e$ to its costs $d^{(I)}(e)$.
            \item threshold $t^{(I)} \in \Z$. 
        \end{itemize}
    \end{itemize}
    For $I \in \I$, we define the solution set $\sol(I) := \set{S \in \F(I) : d^{(I)}(S) \leq t^{(I)}}$ as the set of feasible solutions below the cost threshold. 
    The instance $I$ is a Yes-instance, if and only if $\sol(I) \neq \emptyset$.
    We assume (for LOP problems in NP) that it can be checked in polynomial time in $|I|$ whether some proposed set $F \subseteq \U(I)$ is feasible.
\end{definition}

The following are two examples of LOP problems:

\begin{description}
    \item[]\textsc{Traveling Salesman Problem}\hfill\\
    \textbf{Instances:} Complete graph $G = (V, E)$, weight function $w: E \xrightarrow{} \Z_{\geq 0}$, number $k \in \N$.\\
    \textbf{Universe:} Edge set $E =: \U$.\\
    \textbf{Feasible solution set:} The set of all TSP tours $T \subseteq E$.\\
    \textbf{Solution set:} The set of feasible $T$ with $w(T) \leq k$.

    \item[]\textsc{Vertex Cover}\hfill\\
    \textbf{Instances:} Graph $G = (V, E)$, number $k \in \N$.\\
    \textbf{Universe:} Vertex set $V =: \U$.\\
    \textbf{Feasible solution set:} The set of all vertex covers of $G$.\\
    \textbf{Solution set:} The set of all vertex covers of $G$ of size at most $k$.
\end{description}

Recall that a TSP tour is defined as a simple cycle traversing every vertex. Note that for the vertex cover problem, the function $d^{(I)}$ is the unit cost function. The threshold for both of these problems is the number $k$.
The two problems above are minimization problems, but we can model maximization problems in this framework by using negative cost functions.
(For example, for the knapsack problem, $d^{(I)}$ is equal to the negative profits.) 

\subsection{Introducing SSPs as an abstraction of LOPs}
\label{sec:ssp-reduction-intro}
As explained in \cref{sec:technicalOverview}, for most of the arguments in the following paragraphs, we do not really care about the set $\F(I)$ of feasible solutions, the cost function $d^{(I)}$, or the threshold $t^{(I)}$.
Rather, these are distracting details which we would like to get rid of.
For this reason, we introduce the concept of a \emph{subset search problem} (SSP).
An SSP problem is simply a tuple $\Pi = (\I, \U, \sol)$, where $\I$ is the set of instances, $\U(I)$ is the universe, and $\sol(I)$ is the solution set associated to each instance.
The name \enquote{subset search problem} stems from the fact, that we assume that each solution is encoded purely as a subset of the universe, and that the goal of the problem is to search for and find a solution.
The SSP concept has the advantage that it captures also such problems, which do not really fit into the LOP scheme.
For example, consider the Hamiltonian cycle problem.
It is in a certain sense unnatural to describe it using feasible sets, a cost function, and a cost threshold.
However, we can define it perfectly well as an SSP problem:
The instance is given by $I = (V, E)$ for some graph, the universe is $\U(I) = E$, and the solutions set is $\sol(I) = \set{T \subseteq E : T \text{ is a Hamiltonian cycle}}$.
Formally, we define an SSP the following way: 

\begin{definition}[Subset Search Problem (SSP)]
\label{def:SSP}
A subset search problem (or short SSP problem) $\Pi$ is a tuple $(\I, \U, \sol)$, such that
\begin{itemize}
    \item $\I \subseteq \set{0,1}^*$ is a language. We call $\I$ the set of instances of $\Pi$. 
    \item To each instance $I \in \I$, there is some set $\U(I)$ which we call the universe associated to the instance $I$. 
    \item To each instance $I \in \I$, there is some (potentially empty) set $\sol(I)\subseteq \powerset{\U(I)}$ which we call the solution set associated to the instance $I$.
\end{itemize}
\end{definition}

As a remark, every LOP problem becomes an SSP problem with the definition $\sol(I) := \set{S \in \F(I) : d^{(I)}(S) \leq t^{(I)}}$.
We call this the \emph{SSP problem derived from an LOP problem}.
An example of a natural SSP problem is the satisfiability problem:

\begin{description}
    \item[]\textsc{Satisfiability}\hfill\\
    \textbf{Instances:} Literal set $L = \fromto{\ell_1}{\ell_n} \cup \fromto{\overline \ell_1}{\overline \ell_n}$, clause set $C = \fromto{c_1}{c_m}$ such that $c_j \subseteq L$ for all $j \in \fromto{1}{m}.$\\
    \textbf{Universe:} $L =: \U$.\\
    \textbf{Solution set:} The set of all subsets $L' \subseteq \U$ of the literals such that for all $i \in \fromto{1}{n}$ we have $|L' \cap \set{\ell_i, \overline \ell_i}| = 1$, and such that $|L' \cap c_j| \geq 1$ for all clauses $c_j \in C$.
\end{description}

\begin{definition}
Let $\Pi = (\I, \U, \sol)$ be an SSP problem.
An instance $I \in \I$ is called Yes-instance, if $\sol(I) \neq \emptyset$.
The decision problem associated to the SSP problem $\Pi$ is the language $\set{I \in \I : \sol(I) \neq \emptyset}$ of all Yes-instances. 
\end{definition}

\subsection{A new type of reduction}
In this subsection, we introduce the concept of the \emph{SSP property}.
As explained in \cref{sec:technicalOverview}, the SSP property states that there is an injective embedding of one SSP problem into another, such that the solution sets of the two SSP problems correspond one-to-one to each other in a strict fashion.
An \emph{SSP reduction} is a usual many-one reduction which additionally has the SSP property.
By formalizing the intuition gained in \cref{sec:technicalOverview}, we obtain the following definition.
Note that in this definition the function $g$ corresponds to the standard many-one reduction, while the functions $(f_I)_{I \in \I}$ are the injective embedding functions corresponding to the SSP property (analogous to \cref{eq:SSP}).
Note that since $\U(I)$ can be different for every instance $I$, we have that $(f_I)_{I \in \I}$ is a family of functions, and not a single function. 

\begin{definition}[SSP Reduction]
\label{def:ssp-reduction}
    Let $\Pi = (\I,\U,\sol)$ and $\Pi' = (\I',\U',\sol')$ be two SSP problems. We say that there is an SSP reduction from $\Pi$ to $\Pi'$, and write $\Pi \leqSSP \Pi'$, if
    \begin{itemize}
        \item There exists a function $g : \{0,1\}^* \to \{0,1\}^*$ computable in polynomial time in the input size $|I|$, such that $I$ is a Yes-instance iff $g(I)$ is a Yes-instance (i.e. $\sol(I) \neq \emptyset$ iff $\sol'(g(I)) \neq \emptyset$).
        \item There exist functions $(f_I)_{I \in \I}$ computable in polynomial time in $|I|$ such that for all instances $I \in \I$, we have that $f_I : \U(I) \to \U'(g(I))$ is an injective function mapping from the universe of the instance $I$ to the universe of the instance $g(I)$ such that 
        $$
            \set{f_I(S) : S \in \sol(x) } = \set{S' \cap f_I(\U(I)) : S' \in  \sol'(g(I))}.
        $$
    \end{itemize}
\end{definition}

An example of an SSP reduction from \textsc{3-Satisfiability} to \textsc{Vertex Cover} was shown in \cref{sec:technicalOverview}.
Many more examples of SSP reductions are shown in \cref{sec:ssp-reductions}.
A schematic description how the mapping $f_I$ of an SSP reduction between SSP problems $\Pi$ and $\Pi'$ applies for a specific instance $I$ of $\Pi$ is depicted in \Cref{fig:SSP-reduction:universe-embedding}.
Next, we show that SSP reductions are transitive, which enables us to easily show reductions between a multitude of problems.

\begin{figure}[!ht]
        \centering
        \resizebox{0.42\textwidth}{!}{
            \begin{tikzpicture}
                \draw[] ($(-5,0)$) rectangle ($(-2,-2)$);
        		\node[] () at (-4.5,-1.7) {$\U(I)$};

                \draw[dashed] ($(2,0)$) rectangle ($(5,-2)$);
                \draw[] ($(0,0.5)$) rectangle ($(5.5,-3)$);
        		\node[] () at (2.8,-1.7) {$f_I(\U(I))$};
        		\node[] () at (0.8,-2.7) {$\U'(g(I))$};

                \draw[->, thick] (-2,-0.7) -- (2,-0.7);
                \node[above] () at (-1,-0.7) {$f_I$};
            \end{tikzpicture}
        }
        \caption{
            The relation between the universes by applying an SSP reduction between the problem $\Pi = (\I, \U, \sol)$ and $\Pi' = (\I', \U', \sol')$ for a given instance $I \in \I$.
            Let $I \in \I$ be that instance of $\Pi$, then the SSP reduction $(g, (f_I)_{I \in \I})$ maps the universe $\U(I)$ into the universe $\U'(g(I))$ of problem $\Pi'$ such that $f_I(\U(I)) \subseteq \U'(g(I))$.
            Note that $g(I)$ is the instance defined by the usual reduction mapping $g$.
            The function $f_I$ maintains a one-to-one correspondence between the elements of $\U(I)$ and $f_I(\U(I))$.
        }
        \label{fig:SSP-reduction:universe-embedding}
    \end{figure}

\begin{lemma}
\label{lem:SSP-transitive}
    SSP reductions are transitive, i.e. for SSP problems $\Pi_1, \Pi_2, \Pi_3$ with $\Pi_1 \leqSSP \Pi_2$ and $\Pi_2 \leqSSP \Pi_3$, it holds that $\Pi_1 \leqSSP \Pi_3$.
\end{lemma}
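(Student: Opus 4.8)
The plan is to compose the two SSP reductions component by component: the many-one reduction parts compose as usual, and the embedding families compose in the natural way, with the only real work being to verify that the solution-preserving equation survives the composition.

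\medskip

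First I would set up notation. Let $\Pi_1 \leqSSP \Pi_2$ be witnessed by a polynomial-time many-one reduction $g : X_1 \to X_2$ and a family of injective embeddings $(f_x)_{x \in X_1}$ with $f_x : \U_1(x) \to \U_2(g(x))$, and let $\Pi_2 \leqSSP \Pi_3$ be witnessed by $h : X_2 \to X_3$ and $(f'_y)_{y \in X_2}$ with $f'_y : \U_2(y) \to \U_3(h(y))$. The candidate SSP reduction $\Pi_1 \leqSSP \Pi_3$ is given by the many-one reduction $h \circ g : X_1 \to X_3$ together with the embedding family $(F_x)_{x \in X_1}$ defined by $F_x := f'_{g(x)} \circ f_x : \U_1(x) \to \U_3(h(g(x)))$. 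I would first dispatch the easy bookkeeping: $h \circ g$ is polynomial-time computable as a composition of two polynomial-time maps (here one uses that $|g(x)|$ is polynomially bounded in $|x|$, so $h$ run on $g(x)$ still takes time polynomial in $|x|$); $h\circ g$ preserves Yes-instances because each of $g$ and $h$ does; each $F_x$ is injective as a composition of injective maps; and each $F_x$ is polynomial-time computable by the same size-bound argument.

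\medskip

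The substantive step is verifying the solution-preserving identity
\[
    \set{F_x(S) : S \in \sol_1(x)} = \set{S'' \cap F_x(\U_1(x)) : S'' \in \sol_3(h(g(x)))}.
\]
I would argue this by combining the two given identities. Writing $y := g(x)$, I have from the first reduction $\set{f_x(S) : S \in \sol_1(x)} = \set{S' \cap f_x(\U_1(x)) : S' \in \sol_2(y)}$, and from the second $\set{f'_y(T) : T \in \sol_2(y)} = \set{S'' \cap f'_y(\U_2(y)) : S'' \in \sol_3(h(y))}$. The idea is to apply the injective map $f'_y$ to both sides of the first identity, then massage the right-hand side into a form where the second identity applies. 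Two elementary set-theoretic facts about an injective map $\phi$ and subsets $A' \subseteq A$ will be needed here: $\phi(A' \cap B) = \phi(A') \cap \phi(B)$ when the relevant sets lie in the domain, and $\phi(\phi^{-1}(\cdot))$ behaves as expected on the image; I will state and use these inline. Concretely, applying $f'_y$ to the first identity and using injectivity gives $\set{F_x(S) : S \in \sol_1(x)} = \set{f'_y(S') \cap F_x(\U_1(x)) : S' \in \sol_2(y)}$, where I use $f'_y(f_x(\U_1(x))) = F_x(\U_1(x))$ and that $S' \cap f_x(\U_1(x))$ and $f_x(\U_1(x))$ both lie inside $\U_2(y)$, the domain of $f'_y$. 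Then I substitute the second identity — but carefully, since it describes $f'_y(T)$ for $T \in \sol_2(y)$, which is exactly the set $\set{f'_y(S') : S' \in \sol_2(y)}$ intersected appropriately; chaining the two and using that $F_x(\U_1(x)) \subseteq f'_y(\U_2(y))$ (so that intersecting with $F_x(\U_1(x))$ absorbs the intermediate intersection with $f'_y(\U_2(y))$) yields the claimed identity.

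\medskip

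The main obstacle I anticipate is purely notational rather than conceptual: keeping the nested images, preimages, and intersections straight, and in particular being careful that every intersection-with-image step is legitimate because the sets involved actually lie in the domain of the next embedding. The key enabling observation that makes this go through cleanly is the containment $F_x(\U_1(x)) = f'_y(f_x(\U_1(x))) \subseteq f'_y(\U_2(y)) \subseteq \U_3(h(y))$, which lets the ``restrict to the image'' operations telescope correctly. No genuinely hard idea is required — this is essentially the observation that the solution-preserving property is stable under composition, analogous to how ordinary many-one reducibility is transitive.
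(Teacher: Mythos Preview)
Your proposal is correct and follows essentially the same approach as the paper: compose the many-one reductions and the embedding families, then verify the solution-preserving identity via the injectivity fact $f(A \cap B) = f(A) \cap f(B)$ together with the containment $F_x(\U_1(x)) \subseteq f'_y(\U_2(y))$ that makes the nested intersections telescope. The paper writes out the resulting chain of equalities explicitly, whereas you describe the same chain in prose, but the argument is the same.
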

\begin{proof}
Consider for $i = 1,2,3$ the three SSP problems $\Pi_i = (\I_i, \U_i, \sol_i)$.
There is an SSP reduction $(g_1,f_1)$ from $\Pi_1$ to $\Pi_2$ and an SSP reduction $(g_2, f_2)$ from $\Pi_2$ to $\Pi_3$.
We describe an SSP reduction from $\Pi_1$ to $\Pi_3$.
We require a tuple $(g, (f_I)_{I \in \I})$.
For the first function $g$, we set $g := g_2 \circ g_1$.
This suffices since $\sol_1(I) \neq \emptyset \Leftrightarrow \sol_2(g_1(I)) \neq \emptyset \Leftrightarrow \sol_3((g_2 \circ g_1)(I)) \neq \emptyset$ and $g$ is poly-time computable.
Let $I_1 := I$ be the initial instance of $\Pi_1$, $I_2 := g_1(I_1)$ be the instance of $\Pi_2$ and $I_3 := g_2(I_2)$ be the instance of $\Pi_3$.

For the second function $f_I$, we define for each instance $I \in \I_1$ the map $f_I := (f_2)_{I_2} \circ (f_1)_{I}$.
Observe that for each instance $I \in \I_1$, the function $f_I$ is injective and maps to $\U_3(I_3)$ and is poly-time computable.
It remains to show that $f$ has the desired SSP property.
In order to reduce the notation, we omit the subscript in $f$ (i.e. $f = f_2 \circ f_1$).
We also write $\U_1, \U_2, \U_3$ instead of $\U_1(I_1), \U_2(I_2), \U_3(I_3)$ and $\sol_1,\sol_2, \sol_3$ instead of  $\sol_1(I_1),\sol_2(I_2), \sol_3(I_3)$.
Since $\Pi_1 \leqSSP \Pi_2$ and $\Pi_2 \leqSSP \Pi_3$ and since for injective functions it holds that $f(A \cap B) = f(A) \cap f(B)$, we have
\begin{align*}
    \set{f(S_1) : S_1 \in \sol_1} = &\ \set{f_2(f_1(S_1)) : S_1 \in \sol_1}\\
    = &\ \set{f_2(Y) : Y \in \set{f_1(S_1) : S_1 \in \sol_1}}\\
    = &\ \set{f_2(Y) : Y \in \set{S_2 \cap f_1(\U_1) : S_2 \in \sol_2}}\\
    = &\ \set{f_2(S_2) \cap f_2(f_1(\U_2)) : S_2 \in \sol_2}\\
    = &\ \set{S_3 \cap f_2(\U_2) \cap f_2(f_1(\U_1)) : S_3 \in \sol_3}\\
    = &\ \set{S_3 \cap f(\U_1) : S_3 \in \sol_3}.
\end{align*}
\end{proof}

Further, we define the class SSP-NP, which is the analogue of NP restricted to SSP problems.

\begin{definition}[SSP-NP]
    The class SSP-NP consists out of all the SSP problems which are polynomial-time verifiable.
    Formally, an SSP problem $\Pi = (\I, \U, \sol)$ belongs to SSP-NP, if $|\U(I)| = \poly(|I|)$ and if there is an algorithm receiving tuples of an instance $I \in \I$ and a subset $S \subseteq \U(I)$ as input and decides in time polynomial in $|I|$, whether $S \in \sol(I)$.
\end{definition}

For the remainder of the paper, in a slight abuse of notation, let us say that $\text{SSP-NP} \subseteq \text{NP}$.
Note that this is not formally completely correct, since the class NP is a set of languages, while the class SSP-NP is a set of SSP problems.
However, we can say that some SSP problem $\Pi$ is in NP, if the corresponding decision problem $\{ I \in \mathcal{I} \mid \sol(I) \neq \emptyset \}$ is in NP.

Because of the analogous definition of the class SSP-NP to NP and the natural SSP adaptation of \textsc{Satisfiability}, we are able to adapt the theorem of Cook and Levin \cite{DBLP:conf/stoc/Cook71,DBLP:journals/annals/Trakhtenbrot84} to the class SSP-NP and show that \textsc{Satisfiability} is the canonical SSP-NP-complete problem.

\begin{theorem}[Cook-Levin Theorem Adapted to SSPs]
\label{thm:cook-levin}
\textsc{Satisfiability} is SSP-NP-complete with respect to polynomial-time SSP reductions, i.e. for every SSP problem $\Pi$ contained in SSP-NP, we have  $\Pi \leqSSP \textsc{Satisfiability}$.
\end{theorem}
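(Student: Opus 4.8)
The plan is to adapt the textbook proof of the Cook--Levin theorem, taking care that the reduction it produces is \emph{solution-preserving}. Let $\Pi = (X,\U,\sol) \in \text{SSP-NP}$. By definition $|\U(x)| = \poly(|x|)$ and there is an algorithm $A$ which, given $x$ and (a fixed binary encoding of) a subset $S \subseteq \U(x)$, decides in time $\poly(|x|)$ whether $S \in \sol(x)$; fix the natural encoding of $S$ as its indicator vector over $\U(x)$. The idea is to treat the subset $S$ as an NP-certificate and to name its bits explicitly in the formula we build, so that they can later serve as the coordinates onto which we project.

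Concretely, for each instance $x$ I would introduce one boolean variable $z_e$ for every $e \in \U(x)$, with intended meaning $z_e = 1 \iff e \in S$, and then run the standard Cook--Levin tableau construction on the computation of $A$ on the input consisting of $x$ together with the (not yet fixed) bits $(z_e)_{e\in\U(x)}$. This yields, in polynomial time (using $|\U(x)| = \poly(|x|)$ and the polynomial running time of $A$), a CNF formula $\varphi = g(x)$ whose variables are the $z_e$ together with polynomially many auxiliary tableau variables, relabelled in a fixed order as $\ell_1,\dots,\ell_N$. The standard correctness statement for this construction, phrased with the certificate bits made explicit, is exactly what we need: an assignment $\alpha$ satisfies $\varphi$ if and only if the tableau variables encode the (necessarily accepting) run of $A$ on the input $(x, S_\alpha)$, where $S_\alpha := \set{e : \alpha(z_e) = 1}$; in particular every satisfying $\alpha$ has $S_\alpha \in \sol(x)$, and conversely for every $S \in \sol(x)$ the partial assignment $z_e := [e \in S]$ extends to a satisfying assignment of $\varphi$ by filling in the tableau. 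The auxiliary variables play no further role because we will project them away.

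It then remains to exhibit the embedding. Let $L$ be the literal set of $\varphi$, which is the universe $\U'(g(x))$ of the SAT-instance, and define $f_x : \U(x) \to L$ by $f_x(e) := z_e$ (the positive literal of $z_e$); this is injective and poly-time computable. Recall that $\sol'(g(x))$ consists of the subsets $S' \subseteq L$ that encode satisfying assignments, i.e. $S' = \set{\text{literals true under } \alpha}$ for some $\alpha \vDash \varphi$. For such an $S'$ one has $S' \cap f_x(\U(x)) = \set{z_e : \alpha(z_e) = 1} = f_x(S_\alpha)$ with $S_\alpha \in \sol(x)$; and conversely every $S \in \sol(x)$ arises this way from the satisfying assignment extending $z_e := [e\in S]$. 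Hence
$$
    \set{f_x(S) : S \in \sol(x)} = \set{S' \cap f_x(\U(x)) : S' \in \sol'(g(x))},
$$
which is the solution-preserving property; in particular $\sol(x) \neq \emptyset \iff \sol'(g(x)) \neq \emptyset$, so $g$ is a valid many-one reduction and $(g,(f_x)_{x\in X})$ is an SSP reduction $\Pi \leqSSP \textsc{Satisfiability}$.

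The main obstacle is not conceptual but one of care in invoking the classical construction correctly: one must set up the tableau so that the certificate bits $z_e$ appear as genuine free variables of $\varphi$ (rather than being hard-wired into the input word), and one must check both soundness (every satisfying assignment of $\varphi$ restricts on $\set{z_e}$ to an element of $\sol(x)$) and completeness (no solution of $x$ is lost, i.e. every $S \in \sol(x)$ is realized by some satisfying assignment). All of this is present in the standard proof; the only genuinely new observation is that projecting the satisfying assignments of $\varphi$ onto $\set{z_e : e \in \U(x)}$ recovers precisely $\sol(x)$ under the identification $f_x$, which is what upgrades the ordinary Cook--Levin reduction to an SSP reduction.
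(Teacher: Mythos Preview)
Your proposal is correct and follows essentially the same approach as the paper: both invoke the standard Cook--Levin construction on the SSP-NP verifier with the certificate bits (the indicator vector of $S\subseteq\U(x)$) left as free variables $z_e$, set $g(x)$ to the resulting CNF, and define $f_x(e)$ to be the positive literal $z_e$; the solution-preserving property then falls out of the usual soundness/completeness of the tableau encoding. The paper's writeup is marginally terser but the argument is the same.
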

\begin{proof}
    We consider the original proof by Cook and show that it is actually a polynomial-time SSP reduction.
    Let $\Pi = (\I, \U, \sol)$ be an arbitrary problem in SSP-NP with universe $\U(I)$ and solution set $\sol(I)$ associated to each instance $I \in \I$ of $\Pi$.
    We have to show that $\Pi \leqSSP \textsc{Satisfiability}$. 
    Recall that $\textsc{Satisfiability} = (\I', \U', \sol')$, where $\I'$ is the set of \textsc{Sat}-instances, and for each formula $\varphi \in \I'$, the set $\U'(\varphi)$ is its literal set, and $\sol(\varphi)$ is the set of literal sets corresponding to satisfying assignments.

    Since $\Pi$ is in SSP-NP, there exists a deterministic Turing machine $M$, such that given as input some tuple $(I, S)$ with $I \in \I$ and $S \subseteq \U(I)$ (encoded in binary), the Turing machine $M$ decides in polynomially many steps (say at most $|I|^k$ for some $k$),  whether $S \in \sol(I)$. 
    Here we use the equivalent definition of the class NP in terms of verifiers and in terms of nondeterministic Turing machines \cite{DBLP:books/daglib/0023084}. 
    Now, the proof of Cook implies that there exists a CNF-formula $\varphi(Y,Z)$ with the following properties:
    \begin{itemize}
        \item   The formula has size polynomial in $|I|$ and can be constructed in polynomial time from $I$.
        \item   The variables are split into two parts $Y, Z$.
                Here, the variables $Z$ encode in binary the input $S \subseteq \U(I)$ of the Turing machine $M$.
                The number of these variables is $|Z| = |\U(I)|$.
                The set $Y$ contains all other variables.
        \item   The partial formula \enquote{$\varphi(Y,S)$} is satisfiable if and only if $M$ accepts $(I,S)$.
                More formally, for all assignments $\alpha : Z \to \bin$, we let $S_\alpha$ be the corresponding subset of $\U(I)$ (defined by letting $\U(I) = \set{u_1,\dots,u_m}$ and $Z = \set{z_1,\dots,z_m}$ and considering the binary encoding $u_i \in S_\alpha$ iff $\alpha(z_i) = 1$ for $i = 1,\dots,m$).
                Furthermore, we let $\varphi(Y, \alpha)$ be the formula where the $Z$-variables are assigned by $\alpha$, and the $Y$-variables are still free.
                Then we have for all $\alpha : Z \to \bin:$
                $$
                    \varphi(Y, \alpha) \text{ is satisfiable } \Leftrightarrow \ M \text{ accepts } (I,S_\alpha) \text{ after at most $|I|^k$ steps}.
                $$
    \end{itemize}
    
    We now claim that this reduction by Cook immediately yields a polynomial-time SSP reduction $(g, (f_I)_{I \in \I})$.
    Formally, we let $g(I) := \varphi(Y,Z)$.
    Note that by the properties of Cook's reduction, $I$ is a Yes-instance of $\Pi$ if and only if $\exists Y,Z \ \varphi(Y,Z)$ is satisfiable.
    Hence this is a correct many-to-one reduction.
    For the SSP property, we define $f_I(u_i) := z_i$ for all $i \in \fromto{1}{|\U(I)|}$.
    Informally speaking, the universe element $u_i$ is mapped to the positive literal $z_i \in \U(\varphi)$ which encodes in binary in the input to $M$, whether the element $u_i$ is included in $S$.
    It now follows from the above equivalence that this is an SSP reduction:
    If $S \in \sol(I)$, then $M$ accepts $(I, S)$ after $|I|^k$ steps, and for the corresponding assignment $f_I(S)$ it holds that it can be completed to a satisfying assignment of $\varphi$.
    On the other hand, every satisfying assignment $S' \in \sol(\varphi)$ restricted to the positive literal set $Z = f_I(\U(I))$ encodes a set $S = f_I^{-1}(S' \cap Z)$ such that $S \in \sol(I)$.
    This proves the SSP property and hence $\Pi \leqSSP \textsc{Satisfiability}$.
\end{proof}

With \cref{thm:cook-levin} in mind, we define the class of SSP-NP-complete problems (SSP-NPc) as the set of all SSP-NP problems that are complete for the class SSP-NP with respect to SSP reductions.

\begin{definition}
    The class of SSP-NP-complete problems is called SSP-NPc and consists of all $\Pi \in$ SSP-NP such that $\textsc{Satisfiability} \leqSSP \Pi$. 
\end{definition}
\section{Interdiction Problems}
\label{sec:interdiction}

In this section, we consider the closely related topics of \emph{minimum cost blocker problems}, \emph{most vital vertex/edge problems} and \emph{interdiction problems}.
All of these problems are slight variants of each other.
Formally, we consider the following problem.

\begin{definition}[Interdiction Problem]
\label{def:interdiction}
    Let an SSP problem $\Pi = (\I, \U, \sol)$ be given. The interdiction problem associated to $\Pi$ is denoted by $\textsc{Interdiction-}\Pi$ and defined as follows: The input is an instance $I \in \I$ together with a cost function $c : \U(I) \to \Z$ and a threshold $t \in \Z$.
    The question is whether
    $$
        \exists B \subseteq \U(I) \ \text{with} \ c(B) \leq t : \forall S \in \sol(I):  B \cap S \neq \emptyset.
    $$
\end{definition}

The main result of this section is that $\textsc{Interdiction-}\Pi$ is $\Sigma^p_2$-complete for all SSP-NP-complete problems $\Pi$ (\cref{thm:main-result-interdiction}). We also show in \cref{sec:combinatorial-interdiction} the $\Sigma^p_2$-completeness of a more restricted version of interdiction, which could be of independent interest.

We make a few remarks regarding \cref{def:interdiction}:
First, note that the interdiction variant of all LOP problems can be defined over the SSP problem derived from it as described in \Cref{sec:framework}.
The set $\sol(I)$ then contains all the sets the attacker wants to block.
For example in the vertex cover problem, $\sol(I)$ contains all vertex covers of instance $I$ of size smaller or equal to $t$.
(W.l.o.g. due to the properties of the reductions studied in this paper, we can for all problems from \cref{sec:ssp-reductions} assume that $t$ is chosen to be the optimal threshold.)
Second, note that in our problem \textsc{Interdiction-$\Pi$}, the underlying instance is not changed.
In particular, we do not delete elements of the universe (e.g. vertices of the graph).
For some problems, there might be a subtle difference between deleting elements and forbidding elements to be in the solution.
(The vertex cover problem is one such example:
It makes a big difference of deleting a vertex $v$ and its incident edges, or forbidding that $v$ is contained in the solution, but still having the requirement that all edges incident to $v$ get covered by the vertex cover. In the first case, the vertex cover interdiction problem stays in NP, hence we can not hope to obtain a general $\Sigma^p_2$-completeness result.
In this paper we only consider the second case.)

As the third remark, we note that in the literature, usually the minimum cost blocker problem and the most vital nodes/edges problem are slightly differently defined:
The minimum cost blocker problem asks for a minimum cost blocker which decreases the objective value by a set amount. On the other hand, the most vital nodes asks for the maximum value by which the objective can be decreased, when given a certain cost budget for the blocker.  
The interdiction problem, which we formulated here as a decision problem enables us to capture the $\Sigma^p_2$-completeness of both these variants.
It follows by standard arguments that both of the above problems become $\Sigma^p_2$-complete in our setting.

The last remark is that $\textsc{Interdiction-}\Pi$ can be understood as a game between Alice ($\exists$-player, trying to find a blocker) and Bob ($\forall$-player, trying to find a solution).
Note that this could be considered different from other typical robust optimization problems, where the $\exists$-player tries to find a solution.
In the remainder of this section, we locate the complexity of $\textsc{Interdiction-}\Pi$ exactly.
The easy part is to show containment in $\Sigma^p_2$.

\begin{lemma}\label{thm:interdictionInSigma2}
    If $\Pi = (\I, \U, \sol)$ is a problem in SSP-\NP, then $\textsc{Interdiction-}\Pi$ is in $\Sigma^p_2$.
\end{lemma}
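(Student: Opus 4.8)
The plan is to verify directly that $\textsc{Interdiction-}\Pi$ can be written in the $\exists\forall$ form required by the definition of $\Sigma^p_2$ given in \cref{sec:prelim}. Recall that the input is an instance $x \in X$, a cost function $c : \U(x) \to \Z$, and a threshold $t \in \Z$, and the question is whether there exists $B \subseteq \U(x)$ with $c(B) \leq t$ such that every $S \in \sol(x)$ satisfies $B \cap S \neq \emptyset$. The existential object $y_1$ will be (an encoding of) the candidate blocker $B \subseteq \U(x)$, and the universal object $y_2$ will be (an encoding of) a candidate solution $S \subseteq \U(x)$.

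First I would check the polynomial size bounds on the quantified strings. Since $\Pi \in \NP$ as an SSP problem, by the definition of SSP-NP we have $|\U(x)| = \poly(|x|)$, so a subset $B \subseteq \U(x)$ can be encoded as a bitstring of length $m_1 = \poly(|x|)$ (e.g.\ a characteristic vector over $\U(x)$), and likewise $S \subseteq \U(x)$ has an encoding of length $m_2 = \poly(|x|)$. The cost function $c$ and threshold $t$ are part of the input and so do not need to be quantified. Next I would exhibit the polynomial-time verifier $V$. On input $(x, c, t, B, S)$ — where the first three components together form the problem input and $B, S$ are the quantified strings — the verifier does the following: (i) compute $c(B) = \sum_{u \in B} c(u)$ and check $c(B) \leq t$; (ii) decide whether $S \in \sol(x)$, using the polynomial-time membership algorithm guaranteed by $\Pi \in \NP$; (iii) check whether $B \cap S \neq \emptyset$. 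The verifier accepts iff $c(B) \leq t$ and \emph{either} $S \notin \sol(x)$ \emph{or} $B \cap S \neq \emptyset$. Each step runs in time polynomial in $|x|$ (step (ii) by assumption, steps (i) and (iii) trivially since $|\U(x)| = \poly(|x|)$), so $V$ is polynomial-time computable.

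It then remains to argue correctness, i.e.\ that $(x,c,t)$ is a Yes-instance of $\textsc{Interdiction-}\Pi$ if and only if $\exists B \ \forall S : V(x,c,t,B,S) = 1$. For the forward direction, if a blocker $B$ with $c(B) \leq t$ exists, then for every $S \subseteq \U(x)$: if $S \notin \sol(x)$ the verifier accepts by the first disjunct, and if $S \in \sol(x)$ then $B \cap S \neq \emptyset$ by the blocker property, so the verifier accepts by the second disjunct; hence $\forall S : V(x,c,t,B,S) = 1$. For the converse, if some $B$ satisfies $\forall S : V(x,c,t,B,S)=1$, then in particular $c(B) \leq t$ (this is checked regardless of $S$, so pick any $S$), and for every $S \in \sol(x)$ the first disjunct fails, forcing $B \cap S \neq \emptyset$; thus $B$ is a blocker of cost at most $t$. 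This matches the $\Sigma^p_2$ definition with $k=2$, $Q = \forall$, completing the proof.

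I do not anticipate a genuine obstacle here; this is the ``easy part'' as the text itself notes. The only point requiring a little care is the bookkeeping that the $\forall$-branch must still enforce the cost constraint $c(B) \leq t$ on $B$ — this is why the verifier checks $c(B) \leq t$ unconditionally rather than folding it into the disjunction — and the observation that the bound $|\U(x)| = \poly(|x|)$ from the SSP-NP definition is exactly what makes the quantified strings (and hence the verifier's running time) polynomially bounded, so the hypothesis ``$\Pi$ is an SSP problem in $\NP$'' is used in full.
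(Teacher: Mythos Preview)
Your proof is correct and follows essentially the same approach as the paper: encode the blocker $B$ in the existential certificate $y_1$, encode a candidate solution $S$ in the universal certificate $y_2$, and use the SSP-NP assumption both for the polynomial size bound $|\U(x)| = \poly(|x|)$ and for the polynomial-time check $S \in \sol(x)$. If anything, your version is more explicit than the paper's in spelling out the acceptance condition (accept iff $c(B)\leq t$ and either $S \notin \sol(x)$ or $B\cap S\neq\emptyset$) and in verifying both directions of the correctness equivalence.
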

\begin{proof}
    We provide a polynomial time algorithm $V$ such that for $m_1, m_2 \leq poly(|I|)$:
    $$
        I \in L \ \Leftrightarrow \ \exists y_1 \in \{0,1\}^{m_1} \ \forall y_2 \in \{0,1\}^{m_2} : V(I, y_1, y_2) = 1.
    $$
    
    With the $\exists$-quantified $y_1$, we encode the blocker $B \subseteq \U(I)$.
    The encoding size of $y_1$ is polynomially bounded in the input size of $\Pi$ because $|\U(I)| \leq poly(|I|)$.
    Next, we encode the solution $S \in \sol(I)$ to the nominal problem $\Pi$ using the $\forall$-quantified $y_2$ within polynomial space.
    This is doable because the problem $\Pi$ is in NP.
    At last, the verifier $V$ has to verify the correctness of the given tuple $(B, S)$ provided by the $\exists$-quantified $y_1$ and $\forall$-quantified $y_2$.
    Checking whether $c(B) \leq t$ and $B \cap S \neq \emptyset$ is trivial and checking whether $S \in \sol(I)$ is clearly in polynomial time because $\Pi$ is in SSP-NP.
    It follows that $\textsc{Interdiction-}\Pi$ is in $\Sigma^p_2$.
\end{proof}

\subsection{Combinatorial Interdiction Problems}
\label{sec:combinatorial-interdiction}
In the literature, often the cost version from above is analyzed and used to model real-world problems.
However, for us it proves to be helpful to introduce a more restricted variant of  \textsc{Interdiction-$\Pi$}, which we call the \emph{combinatorial version} of \textsc{Interdiction-$\Pi$}.
This combinatorial version is slightly more specific than the cost version and allows for a more precise reduction in the SSP framework.
We show the $\Sigma^p_2$-hardness of the combinatorial version of \textsc{Interdiction-$\Pi$} for all SSP-NP-complete problems.
In the end, we adapt this result to the cost version. This shows the $\Sigma^p_2$-hardness of all interdiction problems, for which the nominal problem is SSP-NP-complete.

\begin{definition}[Combinatorial Interdiction Problem]
\label{def:comb-interdiction}
    Let an SSP problem $\Pi = (\I, \U, \sol)$ be given.
    The combinatorial interdiction problem associated to $\Pi$ is denoted by \textsc{Comb.} \textsc{Interdiction-}$\Pi$ and defined as follows:
    The input is an instance $I \in \I$ together with a set of blockable elements $B \subseteq \U(I)$ and a threshold $t \in \Z$.
    The question is whether
    $$
        \exists B' \subseteq B \ \text{with} \ |B'| \leq t : \forall S \in \sol(I): B' \cap S \neq \emptyset.
    $$
\end{definition}

The difference between the combinatorial version and the cost version is that we ignore the costs of the elements and introduce a set of possibly blockable elements.
We use the canonical \textsc{Satisfiability} problem as the starting point for our meta reduction.
Therefore, we apply \cref{def:comb-interdiction} to $\Pi = \textsc{Satisfiability}$ yielding the following:

\begin{definition}[\textsc{Combinatorial Interdiction-Satisfiability}]
We denote the combinatorial interdiction version of \textsc{Sat} by \textsc{Comb. Interdiction-Sat}.
The input is a CNF with clauses $C$ and literals $L$, a set blockable literals $B \subseteq L$ and a threshold $t$.
The question is whether there is a set $B' \subseteq B$ with $|B'| \leq t$ such that for all $S \in \sol(I)$, we have $B' \cap S \neq \emptyset$.
(In other words, there is no satisfying assignment whose literals are completely disjoint from $B'$.)
\end{definition}

\subsection{A Meta-Reduction for Combinatorial Interdiction Problems}
For the beginning of our meta-reduction, we prove that the canonical \textsc{Satisfiability} problem \textsc{Combinatorial Interdiction-Satisfiability} is $\Sigma^p_2$-complete.

\begin{lemma}
\label{lem:interdiction-sat-sigma-2-complete}
    \textsc{Combinatorial Interdiction-Satisfiability} is $\Sigma^p_2$-complete.
\end{lemma}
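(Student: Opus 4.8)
The plan is to prove $\Sigma^p_2$-membership and $\Sigma^p_2$-hardness separately. Membership is immediate from \cref{thm:interdictionInSigma2} together with the observation that \textsc{Comb. Interdiction-Sat} is a special case of \textsc{Interdiction-Sat} (take $c$ to be $0$ on $B$ and $+\infty$, or a sufficiently large value, on $\U(x)\setminus B$, and use the same threshold $t$); alternatively one just writes down the obvious $\exists\forall$-formula directly, guessing $B'\subseteq B$ with $|B'|\le t$ and checking that every proposed satisfying assignment $S$ meets $B'$. So the real work is hardness.

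For hardness, I would reduce from the canonical $\Sigma^p_2$-complete problem, which is the quantified Boolean formula problem $\exists\forall\textsc{-DNF}$ (equivalently $\exists\forall\textsc{-3DNF}$): given a formula $\exists X\,\forall Y\,\psi(X,Y)$ with $\psi$ in DNF (or $3$-DNF), decide whether it is true. This is the standard starting point, and the DNF form of the inner formula is convenient here because of the structure of the \textsc{Satisfiability} SSP: recall that a literal subset $S$ is in $\sol$ only if it picks exactly one of $\ell_i,\overline\ell_i$ for each variable, so ``every satisfying assignment hits $B'$'' is a statement about all assignments. Given an instance $\exists X\,\forall Y\,\psi(X,Y)$, I would build a CNF $\varphi$ over the literals of $X$, $Y$, and possibly a few auxiliary variables, set the blockable set $B$ to be (some encoding of) the literals over $X$, and choose the threshold $t=|X|$. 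The intended correspondence is: a choice of blocker $B'\subseteq B$ with $|B'|\le |X|$ that is ``useful'' should correspond to picking, for each $X$-variable, exactly one literal to block, i.e.\ $B'$ encodes a truth assignment $\alpha$ to $X$ (the $\exists$-player's move, where blocking literal $x_i$ means setting $x_i$ to false, say). Then ``every $S\in\sol(\varphi)$ meets $B'$'' should say: there is no satisfying assignment of $\varphi$ that avoids all the chosen $X$-literals, i.e.\ no extension to the $Y$-variables making $\varphi$ true while respecting $\alpha$ — which, if $\varphi$ is engineered so that its satisfying assignments respecting $\alpha$ correspond exactly to the $Y$-assignments falsifying $\psi(\alpha,Y)$, becomes ``$\forall Y\,\psi(\alpha,Y)$ holds.'' The translation from a DNF $\psi$ to a CNF $\varphi$ whose satisfying assignments encode falsifying assignments of $\psi$ is the usual dualization: $\psi=\bigvee_k T_k$ is false under an assignment iff every term $T_k$ is false iff for every term at least one of its literals is false, and this last condition is a CNF (introduce, if desired, a ``selector'' variable per term, or simply note $\neg T_k$ is a clause). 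I would write out this construction carefully, track the polynomial size bound, and verify both directions of the correctness equivalence.

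The main obstacle, and the place where care is needed, is making the blocker behave as intended: nothing a priori forces $B'$ to select \emph{exactly one} of $x_i, \overline x_i$ for each $i$, nor forbids $B'$ from being small (fewer than $|X|$ blocked literals) or from blocking both polarities of some variable while ignoring another. I need to argue that such ``degenerate'' blockers are either equivalent in effect to a proper one or strictly worse, so that WLOG the optimal/feasible blocker corresponds to a genuine assignment $\alpha$. A standard fix is a padding/gadget trick: for each $X$-variable $x_i$, add a private clause or gadget (e.g.\ a clause $\{x_i,\overline x_i\}$ is trivially satisfied, so instead add a fresh variable $w_i$ and clauses forcing that if neither $x_i$ nor $\overline x_i$ is blocked then there is a ``free'' satisfying assignment avoiding $B'$) ensuring that leaving some variable $x_i$ entirely unblocked lets the $\forall$-player (Bob) escape, so any successful blocker must touch $\{x_i,\overline x_i\}$ for every $i$; combined with the budget $t=|X|$ this forces exactly one literal per variable. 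One then checks the two directions: if $\exists X\,\forall Y\,\psi$ is true, the witnessing $\alpha$ gives a blocker of size $|X|$ that no satisfying assignment of $\varphi$ can avoid; conversely any size-$\le|X|$ blocker that blocks all satisfying assignments must, by the gadget argument, encode some $\alpha$, and its success translates back to $\forall Y\,\psi(\alpha,Y)$. I expect getting these gadgets exactly right — tight enough to force the assignment structure, cheap enough to keep the budget clean — to be the fiddly part of the proof, and the rest to follow routinely.
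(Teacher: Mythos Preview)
Your high-level plan matches the paper's: reduce from $\exists X\,\forall Y\,\psi$ with $\psi$ in DNF, take $\varphi'=\neg\psi$ in CNF, let the blocker encode an $X$-assignment, set the budget to $|X|$, and add a gadget that lets Bob escape whenever Alice's blocker is ``dishonest.'' You also correctly identify that forcing the blocker to behave like a genuine assignment is the crux.

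However, there is a genuine gap in your proposed fix. With $B=\{x_i,\overline x_i : i\}$ consisting of the original $X$-literals, double-blocking is not merely a nuisance to be handled by a gadget---it is fatal. In the \textsc{Sat} SSP, every $S\in\sol(\varphi)$ contains \emph{exactly one} of $\{x_i,\overline x_i\}$ for every variable. Hence if Alice puts both $x_i$ and $\overline x_i$ into $B'$ for a single $i$ (which she can afford whenever $|X|\ge 2$), then \emph{every} $S\in\sol(\varphi)$ intersects $B'$, and Alice wins unconditionally. Your escape gadget cannot help: whatever ``free'' satisfying assignment it hands Bob is itself an element of $\sol(\varphi)$ and therefore contains one of $x_i,\overline x_i$. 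So with your construction every instance becomes a Yes-instance and the reduction collapses.

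The paper resolves this not by adding a gadget on top of the original literals, but by changing the encoding before the gadget is added. It introduces fresh variables $x^t_i,x^f_i$ and substitutes each occurrence of $x_i$ in $\varphi'$ by the \emph{positive} literal $x^t_i$ and each occurrence of $\overline x_i$ by the \emph{positive} literal $x^f_i$; the blockable set is then $B=\{x^t_i,x^f_i:i\}$, i.e.\ only these positive literals. Now if Alice blocks both $x^t_i$ and $x^f_i$, Bob can simply include $\overline{x^t_i}$ and $\overline{x^f_i}$ in his solution and avoid $B'$ on that pair; double-blocking no longer wins trivially. Only at this point does the cheat-detection gadget (the variables $s,s_1,\dots,s_n$ and the clauses $(x^t_i\lor\overline s_i)\land(x^f_i\lor\overline s_i)$ together with $(\overline s\lor s_1\lor\dots\lor s_n)$ and $\varphi''\lor s$) do its job: if some pair $\{x^t_j,x^f_j\}$ is entirely unblocked (which the pigeonhole principle guarantees under any dishonest play with budget $|X|$), Bob can set $x^t_j=x^f_j=1$, then $s_j=1$, then $s=1$, satisfying the whole formula while avoiding $B'$. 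Your plan is salvageable, but only after this substitution step; the gadget you sketch is essentially the right one, it just needs to live over $X^t\cup X^f$ rather than over the original $X$-literals.
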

\begin{proof}
    Analogously to \Cref{thm:interdictionInSigma2}, \textsc{Comb. Interdiction-Sat} is in $\Sigma^p_2$.
    As the basis of our hardness proof, we use the problem \textsc{$\exists\forall$DNF-Sat}, which is $\Sigma^p_2$-hard as shown by Stockmeyer \cite{DBLP:journals/tcs/Stockmeyer76}.
    In this problem, we are given a SAT formula $\varphi(X,Y)$ in disjunctive normal form (DNF), such that its variables are partitioned into two parts $X,Y$. 
    The question is whether there is a variable assignment for $X$ such that for all variable assignments for $Y$ we have $\varphi(X,Y) = 1$.
    We reduce \textsc{$\exists\forall$DNF-Sat} to \textsc{Comb. Interdiction-Sat}.
    Let $\exists X \forall Y \varphi(X, Y)$ be the \textsc{$\exists\forall$DNF-Sat} instance.
    We transform this instance into an equivalent instance $(\psi, B, t)$ of \textsc{Comb. Interdiction-Sat}.
    More precisely, this means that $\exists X \forall Y \varphi(X, Y)$ if and only if there is a set $B' \subseteq B$ with $|B'| \leq t$ such that there is no solution $S \in \sol(\exists X' \psi(X'))$ with $B \cap S = \emptyset$.

    We use an idea of Goerigk, Lendl and Wulf \cite[Theorem 1]{DBLP:journals/dam/GoerigkLW24}. We quickly sketch the main idea: Let $n := |X|$, i.e. $X = \{x_1, \dots, x_n \}$. \textsc{Comb. Interdiction-Sat} can be understood as a game between Alice and Bob, where Alice selects a blocker $B' \subseteq B$ and Bob tries to select a satisfying assignment avoiding $B'$. How can we model the formula $\exists X \forall Y \varphi(X, Y)$ with this game? The idea is that Alice's choice of a blocker $B'$ should correspond to the $\exists X$ stage and Bob's choice of assignment should correspond to the  $\forall Y$ stage. How can we encode an assignment of the $X$-variables in terms of a blocker $B'$? The idea is to introduce new variables $X^t = \{x^t_1, \dots, x^t_n\}$ and $X^f = \{x^f_1, \dots, x^f_n\}$. We say that Alice plays honestly, if $|B' \cap \set{x^t_i, x^f_i}| = 1$ for all $i = 1, \dots, n$. In the other case, i.e. $B'$ contains both $x^t_i, x^f_i$ for some $i$, we say that Alice cheats. We will add some \enquote{cheat-detection} gadgets to the formula, which make sure that if Alice cheats, then Bob can trivially win the game. We make sure that  the cheat-detection gadget can be used if and only if Alice cheats. If Alice plays honestly, note that $x^t_i \in B'$ enables Bob to choose $x^f_i$ as part of his solution. This corresponds to the assignment $\alpha(x) = 0$. On the other hand, $x^f_i \in B'$ corresponds to $\alpha(x_i) = 1$. We are now ready to give the formal reduction. 

    \begin{description}
        \item[Definition of the instance.]
        Given an instance $\varphi$ of $\exists \forall$DNF-SAT, the instance  $(\psi, B, t)$ of combinatorial interdiction SAT is defined as follows: We start by considering the auxiliary formula $\varphi'(X, Y) := \neg \varphi(X, Y)$. Note that $\varphi'$ is in CNF by De Morgan's law. We furthermore have
        $$
            \exists X \forall Y \varphi(X, Y) \ \leftrightarrow \ \exists X \neg \exists Y \varphi'(X, Y). 
        $$
        A new formula $\varphi''$ is created from $\varphi'$ in terms of a substitution process: We introduce $2n$ new variables $X^t = \{x^t_1, \dots, x^t_n\}$ and $X^f = \{x^f_1, \dots, x^f_n\}$. For all $i = 1,\dots, n$, we substitute each occurrence of some literal $x_i$ by the positive literal $x_i^t$. We substitute each occurrence of some literal $\overline x_i$ by the positive literal $x_i^f$. All other literals are kept the same. A new formula $\varphi'''$ is created from $\varphi''$ by introducing a new variable $s$ and appending the positive literal $s$ to every clause, that is
        $$
            \varphi''' \equiv \varphi'' \lor s.
        $$
        Finally, we introduce $n$ new variables $\set{s_1, \dots s_n}$. We let $Z = \set{s} \cup \set{s_1, \dots, s_n}$ and define the formula $\psi$ by
        $$
            \psi(X^t, X^f, Y, Z) = \varphi'''(X^t, X^f, Y, s) \land \left( \bigwedge\limits_{i=1}^n (x^t_i \lor \overline s_i)\land (x^f_i \lor \overline s_i) \right) \land (\overline s \lor s_1 \lor s_2 \lor \dots \lor s_n).
        $$
        We remark that the newly added elements between $\varphi''$ and $\psi$ form the cheat-detection gadget.
        Finally, we define the set of blockable literals by $B := X^t \cup X^f$ and the number of blockable literals by $t := n$. This completes the description of the instance $(\psi, B, t)$.
        \item[Correctness]
        Note that Alice can only block the positive literals of $X^t \cup X^f$ because by definition of the comb.\ interdiction problem we have $B' \subseteq B = X^t \cup X^f$.
        Furthermore, we claim that in an optimal game, Alice has to play honestly.
        To prove this, consider the case where both literals $x^t_i$ and $x^f_i$ are blocked or less than $t$ literals in total are blocked by Alice.
        In both cases there is a $j \neq i$ such that Alice blocks neither $x^t_j$ nor $x^f_j$ (due to $|B'| \leq t$ and the pigeonhole principle).
        Hence Bob is able to take both literals $x^t_j$ and $x^f_j$ into his solution.
        This enables Bob to also take both $s_j$ and $s$ into the solution without violating the constraints of $\psi$.
        For all other $p \neq j$, Bob can take the literals $\overline s_p$ into his solution.
        In this case $\psi$ is trivially satisfied.
        We conclude that Alice has to play honestly, i.e. $|B' \cap \set{x^t_i, x^f_i}| = 1$ for all $i = 1,\dots, n$.
        
        Hence for a fixed honest choice of Alice, we obtain a fixed chosen assignment $\alpha(X) \to \bin^{|X|}$ which Bob is forced to take. 
        (More precisely, Bob is forced to take $\overline s, \overline s_1,\dots, \overline s_n$ into his solution. 
        Then the cheat-detection clauses are trivially verified. 
        The remaining formula does not contain any negative literal $\overline x_i^f, \overline x^t_i$, 
        so we can w.l.o.g.\ assume that under optimal play Bob takes the one from the two positive literals $x^t_i, x^f_i$ that is not blocked by Alice.)
        Following the above restriction, Alices' goal is described by the formula  $\neg \exists Y, Z \ \psi(X^t, X^f, Y, Z)$.
        This in the end is equivalent to the formula $\neg \exists Y \varphi'(\alpha, Y)$, since for honest behavior of Alice, the only way to satisfy the formula is to set $s, s_1, \dots, s_n$ to false.
        We conclude that if $\varphi$ is a yes-instance of $\exists \forall$DNF-SAT, then Alice can play honestly and win the game, hence we have a Yes-instance of \textsc{Comb. Interdiction-Sat}.

        On the other hand, assume that the described tuple $(\psi, B, t)$ is a Yes-instance of \textsc{Comb. Interdiction-Sat}.
        By the previous argument, since Alice was able to win, she must have played honestly.
        Then, only one of $x^t_i$ and $x^f_i$ can be in the blocker $B'$ for all $i \in \fromto{1}{|X|}$.
        Thus, there is a blocker $B' \subseteq B$ fixing the assignment on $X^t \cup X^f$ such that there is no solution to $\exists Y, Z \ \psi(X^t, X^f, Y, Z)$.
        This is equivalent to fixing the assignment on $X^t \cup X^f$, such that $\neg \exists Y, Z \ \psi(X^t, X^f, Y, Z)$.
        By transforming $\psi(X^t, X^f, Y, Z)$ back to $\neg \varphi(X, Y)$, we get that there is an assignment to $X$ such that $\forall Y \varphi(X, Y)$, which is equivalent to $\exists X \forall Y \varphi(X, Y)$, which is a Yes-instance for \textsc{$\exists\forall$DNF-Sat}.
        \item[Polynomial Time] All transformations are doable in polynomial time because only a polynomial number of additional variables as well as clauses are added to the formula.
    \end{description}
\end{proof}

With the $\Sigma^p_2$-hardness of \textsc{Comb. Interdiction-Sat} established, we provide a meta-reduction to all \textsc{Comb. Interdiction-$\Pi$}, if there is an SSP reduction between the nominal \textsc{Satisfiability} and the nominal $\Pi$.
In other words, we prove the $\Sigma^p_2$-hardness of \textsc{Comb. Interdiction-$\Pi$}.

\begin{theorem}
\label{thm:interdictionSigma2Hard}
For all SSP-NP-complete problems $\Pi$, the combinatorial interdiction variant \textsc{Comb. Interdiction-$\Pi$} is $\Sigma^p_2$-complete.
\end{theorem}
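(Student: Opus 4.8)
The plan is to follow the four-step strategy outlined in the technical overview. Containment of $\textsc{Comb. Interdiction-}\Pi$ in $\Sigma^p_2$ is immediate from (the obvious analogue of) \Cref{thm:interdictionInSigma2}: the $\exists$-player guesses $B' \subseteq B$ with $|B'| \le t$, the $\forall$-player guesses a subset $S \subseteq \U(x)$, and the verifier checks $S \in \sol(x)$ (polynomial since $\Pi \in \text{SSP-NP}$) and, if so, that $B' \cap S \ne \emptyset$. So the real work is the hardness direction. We already know from \Cref{lem:interdiction-sat-sigma-2-complete} that $\textsc{Comb. Interdiction-Sat}$ is $\Sigma^p_2$-complete, so it suffices to give a polynomial-time many-one reduction from $\textsc{Comb. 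Interdiction-Sat}$ to $\textsc{Comb. Interdiction-}\Pi$. Since $\Pi$ is SSP-NP-complete, we have $\textsc{Satisfiability} \leqSSP \Pi$; let $(g, (f_x)_{x})$ be this SSP reduction.

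The core idea — the ``upgrading'' step — is to use the SSP reduction instance-wise. Given an input $(\varphi, B, t)$ of $\textsc{Comb. Interdiction-Sat}$, where $\varphi$ has literal universe $\U(\varphi) = L$, we output the instance $(g(\varphi), f_\varphi(B), t)$ of $\textsc{Comb. Interdiction-}\Pi$. This is polynomial-time computable because $g$ and $f_\varphi$ are, and because $f_\varphi$ is injective so $|f_\varphi(B)| = |B|$ and the threshold $t$ is unchanged. The correctness argument is where the solution-preserving property does all the heavy lifting. Write $\U = \U(\varphi)$, $\U' = \U'(g(\varphi))$, $F = f_\varphi$, and recall the defining identity $\{F(S) : S \in \sol(\varphi)\} = \{S' \cap F(\U) : S' \in \sol'(g(\varphi))\}$. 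We must show: there exists $B' \subseteq B$, $|B'| \le t$, with $B' \cap S \ne \emptyset$ for all $S \in \sol(\varphi)$, if and only if there exists $\tilde B \subseteq F(B)$, $|\tilde B| \le t$, with $\tilde B \cap S' \ne \emptyset$ for all $S' \in \sol'(g(\varphi))$.

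For the forward direction, given a blocker $B'$ for $\varphi$, set $\tilde B := F(B') \subseteq F(B)$; then $|\tilde B| = |B'| \le t$ by injectivity. Take any $S' \in \sol'(g(\varphi))$. By the SSP identity, $S' \cap F(\U) = F(S)$ for some $S \in \sol(\varphi)$. Since $B' \cap S \ne \emptyset$, pick $u \in B' \cap S$; then $F(u) \in F(B') \cap F(S) = \tilde B \cap (S' \cap F(\U)) \subseteq \tilde B \cap S'$, so $\tilde B \cap S' \ne \emptyset$. For the converse, given such a $\tilde B \subseteq F(B)$, set $B' := F^{-1}(\tilde B) \subseteq B$ (well-defined since $\tilde B \subseteq F(B) \subseteq F(\U)$ and $F$ is injective), with $|B'| = |\tilde B| \le t$. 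Take any $S \in \sol(\varphi)$. By the SSP identity, $F(S) = S' \cap F(\U)$ for some $S' \in \sol'(g(\varphi))$. Then $\tilde B \cap S' \ne \emptyset$; moreover $\tilde B \subseteq F(\U)$, so in fact $\tilde B \cap S' = \tilde B \cap (S' \cap F(\U)) = \tilde B \cap F(S) \ne \emptyset$, and applying $F^{-1}$ gives $B' \cap S \ne \emptyset$. This establishes the equivalence, completing the reduction from $\textsc{Comb. Interdiction-Sat}$, and hence the $\Sigma^p_2$-hardness (indeed $\Sigma^p_2$-completeness) of $\textsc{Comb. Interdiction-}\Pi$.

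The main obstacle is conceptual rather than computational: one has to be careful that $\tilde B$ is allowed to range only over subsets of $F(B)$ (not arbitrary subsets of $\U'$), which is exactly why the \emph{combinatorial} version with an explicitly prescribed blockable set is the right intermediate notion — the cost version is recovered afterward (in the remark/lemma following this theorem) by assigning cost $0$ to elements of $F(B)$, cost $t+1$ to everything else, which is the step-4 ``simple generalization'' observation. A secondary subtlety is that the $\forall$-quantifier in the interdiction problem ranges over \emph{all} subsets $S \subseteq \U'$, not just over $\sol'(g(\varphi))$; but this is harmless because a subset $S \notin \sol'(g(\varphi))$ never constrains the blocker (the inner condition is only required for $S \in \sol'$), so the verifier simply ignores such $S$, and the argument above only ever needs to reason about genuine solutions.
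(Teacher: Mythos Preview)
Your proof is correct and follows essentially the same approach as the paper: map the \textsc{Comb. Interdiction-Sat} instance $(\varphi,B,t)$ to $(g(\varphi), f_\varphi(B), t)$ and use the solution-preserving identity to show that blockers correspond; your forward/converse argument is in fact more explicit than the paper's. One minor slip in your closing aside about the cost version: assigning cost $0$ to blockable elements would not enforce $|B'|\le t$---the paper uses cost $1$ on $B$ and $t{+}1$ elsewhere---but this concerns the subsequent theorem, not the present one.
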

\begin{proof}
    Analogously to \Cref{thm:interdictionInSigma2}, \textsc{Comb. Interdiction-$\Pi$} is in $\Sigma^p_2$.
    For the hardness, we use the observation that if a problem $\Pi = (\I, \U, \sol)$ is SSP-NP-complete, there is an SSP reduction $(g, (f_I)_{I\in \I})$ from \textsc{Sat} to $\Pi$.
    We extend the SSP reduction $(g, (f_I)_{I\in \I})$ to a polynomial-time reduction $g'$ from \textsc{Comb. Interdiction-Sat} to \textsc{Comb. Interdiction-$\Pi$} as depicted in \Cref{fig:interdiction-meta-reduction}.
    The main idea is that the underlying reduction from \textsc{Sat} to $\Pi$ remains the same function $g$ and the additional set of blockable elements is redefined such that the blocking sets (which are the solutions to the problems) have a one-to-one correspondence.

    \begin{figure}[!ht]
        \centering
        \scalebox{1}{
            \begin{tikzpicture}
                \node[] (Sat) at (0,0) {\textsc{Sat}};
                \node[] (Pi) at (6,0) {$\Pi$};
                \node[] (ISat) at (-0.5,-2) {\textsc{Comb. Interdiction-Sat}};
                \node[] (IPi) at (6.5,-2) {\textsc{Comb. Interdiction-$\Pi$}};
                \draw[->] (Sat) to node[above] {$(g,(f_I)_{I \in \I})$} (Pi);
                \draw[->] (ISat) to node[above] {$g'$} (IPi);
                \draw[->] (0,-0.5) to node[left] {$I' = (I,B,t)$} (0,-1.5);
                \draw[->] (6,-0.5) to (6,-1.5);
            \end{tikzpicture}
        }
        \caption{The fact that $\textsc{Sat}$ is SSP reducible to $\Pi$ induces a reduction from $\textsc{Comb. Interdiction-Sat}$ to $\textsc{Comb. Interdiction-}\Pi$.}
        \label{fig:interdiction-meta-reduction}
    \end{figure}
    
    Let $I' = (I,B,t)$ be the instance of \textsc{Comb. Interdiction-Sat}, where $I \in \I$ is the corresponding \textsc{Sat} instance, $B$ is the set of blockable elements and $t$ is the threshold.
    Then, the reduction $g'$ is defined by $g'(I') = (g(I), B_{new}, t)$, where $B_\text{new}$ is the new blockable set in \textsc{Comb. Interdiction-$\Pi$} defined as
    $$
        B_\text{new} = f_I(B) \subseteq \U(g(I)).
    $$
    I.e. the injectively mapped universe elements from $\U(I)$ remain blockable in \textsc{Comb. Interdiction-$\Pi$} if and only if they are blockable in \textsc{Comb. Interdiction-Sat}.
    Furthermore, observe that by this definition of $B_\text{new}$, we have that all newly introduced universe elements, i.e. all elements in $\U(g(I)) \setminus f_I(\U(I))$ are not blockable.
    Now, recall that the SSP property for $(g,(f_I)_{I \in \I})$ states that the solutions of \textsc{Sat} and $\Pi$ correspond one-to-one to each other on the set $f_I(\U(I))$.
    Namely, if $\sol(I)$ denotes the solutions of the \textsc{Sat} instance and $ \sol(g(I))$ denotes the solutions of the corresponding $\Pi$ instance, then
    $$
        \set{f_I(S) : S \in \sol(I) } = \set{S' \cap f_I(\U(I)) : S' \in  \sol(g(I))}.
    $$
    As a consequence of this, and the fact that $B_\text{new} \subseteq f_I(\U(I))$, we have that each blocking set $B \subseteq \U(I)$ in \textsc{Comb. Interdiction-Sat} is in a direct one-to-one correspondence to some blocking set in $\textsc{Comb. Interdiction-}\Pi$. In particular, there exists a blocker for the instance $(I, B, t)$ if and only if there exists a blocker for the instance $(g(I), B_\text{new}, t)$. Finally, note that the whole reduction $g'$ can be computed in polynomial time, in particular since $g$ and $f_I$ (and therefore $B_\text{new}$) can be computed in polynomial time.
    In summary, \textsc{Comb. Interdiction-Sat} reduces to \textsc{Comb. Interdiction-$\Pi$} and \textsc{Comb. Interdiction-$\Pi$} is $\Sigma^p_2$-hard.
\end{proof}

\subsection{Adapting the Meta-Reduction to the Cost Version}

While the previous subsection showed $\Sigma^p_2$-completeness of the combinatorial interdiction problem \textsc{Comb. Interdiction-$\Pi$}, the goal of this subsection is to show the same for \textsc{Interdiction-$\Pi$}, i.e. the version with element costs. This is done via an easy reduction, which re-adapts the combinatorial interdiction problem to the cost version.
Note that the following theorem also holds for all LOP problems. Specifically, if one is given an LOP problem, one can consider the SSP problem derived from it as described in \Cref{sec:framework}.

\begin{theorem}
\label{thm:main-result-interdiction}
    For all SSP-NP-complete problems $\Pi$, the interdiction variant \textsc{Interdiction-$\Pi$} is $\Sigma^p_2$-complete.
\end{theorem}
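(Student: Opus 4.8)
The plan is to derive \textsc{Interdiction-$\Pi$} as a simple generalization of \textsc{Comb. Interdiction-$\Pi$}, which is already known to be $\Sigma^p_2$-complete by \cref{thm:interdictionSigma2Hard}. Containment in $\Sigma^p_2$ is already handled by \cref{thm:interdictionInSigma2}, so only hardness remains. First I would exhibit a polynomial-time many-one reduction from \textsc{Comb. Interdiction-$\Pi$} to \textsc{Interdiction-$\Pi$}. Given an instance $(x, B, t)$ of \textsc{Comb. Interdiction-$\Pi$} — where $B \subseteq \U(x)$ is the set of blockable elements and $t$ the cardinality bound — I keep the same instance $x \in X$ and define a cost function $c : \U(x) \to \Z$ that assigns cost $1$ to every element of $B$ and a prohibitively large cost, say $t+1$, to every element of $\U(x) \setminus B$. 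The threshold is left as $t$.

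The correctness argument is then immediate: any blocker $B'$ with $c(B') \le t$ cannot contain any element of $\U(x) \setminus B$ (such an element alone would already push the cost above $t$), so $B' \subseteq B$, and moreover $c(B') = |B'| \le t$. Conversely, any $B' \subseteq B$ with $|B'| \le t$ satisfies $c(B') = |B'| \le t$. Hence the set of admissible blockers is exactly the same in both instances, and the "$\forall S \in \sol(x): B' \cap S \neq \emptyset$" condition is literally unchanged since $x$, $\U(x)$ and $\sol(x)$ are untouched. Therefore $(x,B,t)$ is a Yes-instance of \textsc{Comb. Interdiction-$\Pi$} if and only if $(x, c, t)$ is a Yes-instance of \textsc{Interdiction-$\Pi$}. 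The reduction is clearly polynomial-time computable since $|\U(x)| = \poly(|x|)$.

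Combining this reduction with \cref{thm:interdictionSigma2Hard} (which gives $\Sigma^p_2$-hardness of \textsc{Comb. Interdiction-$\Pi$} for every SSP-NP-complete $\Pi$) and \cref{thm:interdictionInSigma2} (which gives membership in $\Sigma^p_2$), we obtain that \textsc{Interdiction-$\Pi$} is $\Sigma^p_2$-complete. For the final remark about LOP problems, I would simply note that for an LOP $\Pi = (X, \U, \F, d, t)$ the derived SSP problem has $\sol(x) = \set{S \in \F(x) : d^{(x)}(S) \le t^{(x)}}$, and all of the above applies verbatim to this derived SSP problem, so its interdiction variant is $\Sigma^p_2$-complete as well. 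I do not anticipate a serious obstacle here; the only mild subtlety is choosing the "large" cost correctly (using $t+1$ rather than something depending on $|\U(x)|$ keeps the encoding size polynomial and the argument airtight even when $t$ is itself small), and making sure the definition of \textsc{Interdiction-$\Pi$} in \cref{def:interdiction} indeed allows arbitrary integer cost functions, which it does.
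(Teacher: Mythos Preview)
Your proposal is correct and essentially identical to the paper's own proof: both invoke \cref{thm:interdictionInSigma2} for containment and reduce \textsc{Comb. Interdiction-$\Pi$} to \textsc{Interdiction-$\Pi$} by assigning cost $1$ to blockable elements and cost $t+1$ to non-blockable ones, with the same correctness argument. Your discussion is in fact slightly more detailed than the paper's (you spell out both directions of the equivalence and comment on the encoding size), but the approach is the same.
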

\begin{proof}
    Due to \Cref{thm:interdictionInSigma2}, \textsc{Interdiction-$\Pi$} is in $\Sigma^p_2$.
    We further reduce \textsc{Comb. Interdiction-$\Pi$} to \textsc{Interdiction-$\Pi$}. Assume an instance $(I, B, t)$ of \textsc{Comb. Interdiction-$\Pi$} is given. 
    We use the cost function $c: \U(I) \rightarrow \Z$ in \textsc{Interdiction-$\Pi$} to distinguish the elements in the blockable set $B$ from those that are not blockable.
    For this, we set $c(b) = 1$ for all $b \in B$ (blockable) and $c(u) = t+1$ for all $u \in \U(I) \setminus B$ (not blockable).
    It is clear that every blocker $B'$ with $c(B') \leq t$ uses no elements from $\U(I) \setminus B$ and at most $t$ elements from $B$.
    The reduction is obviously polynomial-time computable.
    Consequently, the reduction is correct and \textsc{Interdiction-$\Pi$} is $\Sigma^p_2$-complete as well.
\end{proof}

\subsection{The Meta-Reduction is also an SSP reduction}
\label{sec:interdiction-is-also-SSP}
In \cref{sec:combinatorial-interdiction}, we showed that \textsc{Comb. Interdiction-Sat} reduces to \textsc{Comb. Interdiction-$\Pi$}. The goal of this subsection is to show the slightly stronger statement that 
this reduction is again an SSP reduction itself. This fact can be succinctly stated as \enquote{$\textsc{Satisfiability} \leqSSP \Pi$ implies $\textsc{Comb. Interdiction-Sat} \leqSSP \textsc{Comb. Interdiction-}\Pi$}. We believe that this is an elegant result which deserves to stand on its own right.
In order state this result it becomes necessary to explain in which way $\textsc{Comb. Interdiction-Sat}$ and \textsc{Comb. Interdiction}-$\Pi$ are interpreted as SSP problems. This is done the following way.

\begin{observation}\label{obs:interdictionIsSSP}
    The combinatorial interdiction variant $\textsc{Comb. Interdiction-}\Pi$ of an SSP problem $\Pi$ is an SSP problem.
\end{observation}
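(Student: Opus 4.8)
The plan is to make the (so far implicit) SSP structure of \textsc{Comb. Interdiction-}$\Pi$ explicit and then verify the three defining conditions of \cref{def:SSP}. Concretely, given an SSP problem $\Pi = (X, \U, \sol)$, I would define \textsc{Comb. Interdiction-}$\Pi = (\hat X, \hat \U, \hat \sol)$ as follows. The instance set $\hat X$ consists of all binary encodings of triples $(x, B, t)$ with $x \in X$, $B \subseteq \U(x)$, and $t \in \Z$; this is a language over $\bin^*$, as required. For the universe I would take $\hat \U(x,B,t) := B$, i.e.\ the set of blockable elements --- the ground set over which the sought blocker ranges. Finally, I would set
$$
\hat \sol(x,B,t) := \set{B' \subseteq B : |B'| \leq t \text{ and } B' \cap S \neq \emptyset \text{ for every } S \in \sol(x)}.
$$
By construction $\hat \sol(x,B,t) \subseteq \powerset{B} = \powerset{\hat \U(x,B,t)}$, so all requirements of \cref{def:SSP} are satisfied and \textsc{Comb. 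Interdiction-}$\Pi$ is a genuine SSP problem.

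The second step is to confirm that this SSP problem carries exactly the intended decision question. By definition of the decision problem associated to an SSP, $(x,B,t)$ is a Yes-instance if and only if $\hat \sol(x,B,t) \neq \emptyset$, i.e.\ if and only if there exists $B' \subseteq B$ with $|B'| \leq t$ intersecting every $S \in \sol(x)$; this is precisely the question posed in \cref{def:comb-interdiction}, so the SSP interpretation is faithful. For later use in \cref{sec:interdiction-is-also-SSP} I would additionally point out that the choice $\hat \U(x,B,t) = B$ is the convenient one: when $\textsc{Satisfiability} \leqSSP \Pi$ via a reduction $(g,(f_x)_{x \in X})$, the natural candidate for an SSP reduction maps $(x,B,t)$ to $(g(x), f_x(B), t)$, and then the restriction $f_x|_B \colon B \to f_x(B)$ is exactly the injective universe embedding demanded by \cref{def:ssp-reduction}.

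There is essentially no obstacle here beyond bookkeeping; the only delicate point is to choose the universe so that it is simultaneously compatible with \cref{def:SSP} --- solutions must be subsets of the universe, which forces $\hat \U(x,B,t) \supseteq B$, so that $\hat \U = B$ (or, just as well, $\hat \U = \U(x)$) works --- and with the downstream SSP-reduction argument of \cref{sec:interdiction-is-also-SSP}. I would explicitly refrain from asserting that \textsc{Comb. Interdiction-}$\Pi$ lies in SSP-NP: deciding whether a candidate set $B'$ is a blocker is a $\coNP$-type test in general, so its membership sits only in $\Sigma^p_2$, as already recorded in \cref{thm:interdictionInSigma2}.
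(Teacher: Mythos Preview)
Your proposal is correct and follows the same verification-of-definition strategy as the paper. The one noteworthy difference is your choice of universe: you set $\hat\U(x,B,t) := B$, whereas the paper sets $\U'(x,B,t) := \U(x)$, i.e.\ the \emph{full} universe of the underlying $\Pi$-instance. Both choices satisfy \cref{def:SSP} (since $B' \subseteq B \subseteq \U(x)$ either way), and you already note the alternative. The paper's choice makes the downstream argument in \cref{thm:combInterdictionSSPreduction:SSP} marginally slicker, because then the universes of \textsc{Sat} and \textsc{Comb.\ Interdiction-Sat} literally coincide and one can take $f'_{x'} := f_x$ verbatim; your choice instead forces the restriction $f_x|_B$, which you correctly anticipate. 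Neither choice buys anything essential over the other.
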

\begin{proof}
    Let $\Pi = (\I, \U, \sol)$ be an SSP problem and we denote \textsc{Comb. Interdiction-$\Pi$} $= (\I',\U',\sol')$.
    Note that by definition, we have $\U = \U'$.
    Now, let $I \in \I$ be an instance of $\Pi$.
    Then, we can define the corresponding instances of $\textsc{Comb. Interdiction-}\Pi$ by setting
    $$
        \I' = \{(I, B, t) \mid I \in \I, \ B \subseteq \U(I), \ t \in \N\}.
    $$
    Furthermore, we set the solutions for the combinatorial variant to
    $$
        \sol'(I') = \{B' \subseteq B \mid |B'| \leq t \ \text{and} \ \forall S \in \sol(I): B' \cap S \neq \emptyset\}
    $$
    Thus, \textsc{Comb. Interdiction-$\Pi$} $= (\I',\U',\sol')$ is an SSP problem.
\end{proof}

According to \Cref{obs:interdictionIsSSP}, the equivalent definition of \textsc{Comb. Interdiction-Satis\-fia\-bility} as SSP problem is the following.

\begin{definition}[\textsc{Combinatorial Interdiction-Satisfiability} as SSP Problem]
The interdiction version of \textsc{Sat} is a tuple $(\I', \U', \sol')$ with input $I = (L, C, B, t) \in \I'$ of literals and clauses, universe $\U'(I) = L$, solution set $\sol$, blockable set $B$ and threshold $t$.
The solution set is given by  $\sol'(I) = \set{B' \subseteq B$ : $|B'| \leq t \text{ and for all } S \in \sol(I) \text{ we have } B' \cap S \neq \emptyset}$.
\end{definition}

Now, we prove that the meta-reduction from \Cref{thm:interdictionSigma2Hard} is also an SSP reduction by using the reduction from \Cref{thm:interdictionSigma2Hard} and defining the corresponding function $(f'_{I'})_{I' \in \I'}$ to complete the SSP reduction.

\begin{corollary}\label{thm:combInterdictionSSPreduction:SSP}
For all SSP-NP-complete problems $\Pi$, \textsc{Comb. Interdiction-Sat} $\leqSSP$ \textsc{Comb. Interdiction-$\Pi$}.
\end{corollary}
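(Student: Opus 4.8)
The statement to prove is that for every SSP-NP-complete problem $\Pi$, the reduction $g'$ already constructed in \Cref{thm:interdictionSigma2Hard} from \textsc{Comb. Interdiction-Sat} to \textsc{Comb. Interdiction-}$\Pi$ can be upgraded to an SSP reduction. Since $g'$ is already a polynomial-time many-one reduction preserving Yes-instances, the only missing ingredient is a family of injective embedding functions $(f'_{x'})_{x' \in X'}$ satisfying the solution-preserving equation of \Cref{def:ssp-reduction}. The plan is to simply \emph{reuse the very same embedding} $f_x$ coming from the assumed SSP reduction $\textsc{Satisfiability}\leqSSP\Pi$, because (by \Cref{obs:interdictionIsSSP}) the universe of \textsc{Comb. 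Interdiction-}$\Pi$ is literally the universe $\U(g(x))$ of $\Pi$, and the universe of \textsc{Comb. Interdiction-Sat} is the literal set $L = \U(x)$.

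\textbf{Key steps.} First, I would recall the SSP structure established in \Cref{obs:interdictionIsSSP}: an instance $x' = (x,B,t)$ of \textsc{Comb. Interdiction-Sat} has universe $\U'(x') = \U(x) = L$ and solution set $\sol'(x') = \{B'\subseteq B : |B'|\le t,\ \forall S\in\sol(x)\colon B'\cap S\neq\emptyset\}$, and analogously for \textsc{Comb. Interdiction-}$\Pi$ with instance $g'(x') = (g(x),B_\text{new},t)$, where $B_\text{new} = f_x(B)$. Second, I would define $f'_{x'} := f_x : \U(x)\to\U(g(x))$; this is injective and polynomial-time computable since $f_x$ is. Third, I must verify the solution-preserving equation
$$
\set{f_x(B') : B'\in\sol'(x')} = \set{T \cap f_x(\U(x)) : T\in\sol'(g'(x'))}.
$$
For the inclusion ``$\subseteq$'': given a valid blocker $B'\subseteq B$ for the SAT instance, $f_x(B')\subseteq f_x(B) = B_\text{new}$ has $|f_x(B')| = |B'|\le t$, and it hits every $\Pi$-solution, because any $S'\in\sol(g(x))$ satisfies $S'\cap f_x(\U(x)) = f_x(S)$ for some $S\in\sol(x)$ by the solution-preserving property of $(g,f_x)$; since $B'\cap S\neq\emptyset$ and $f_x$ is injective, $f_x(B')\cap f_x(S) = f_x(B'\cap S)\neq\emptyset$, and as $f_x(B')\subseteq f_x(\U(x))$ this gives $f_x(B')\cap S'\neq\emptyset$. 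Moreover $f_x(B') = f_x(B')\cap f_x(\U(x))$, so it appears on the right-hand side. For ``$\supseteq$'': given $T\in\sol'(g'(x'))$, we have $T\subseteq B_\text{new} = f_x(B)\subseteq f_x(\U(x))$, so $T\cap f_x(\U(x)) = T$, and $T = f_x(B')$ for the unique $B' := f_x^{-1}(T)\subseteq B$; one checks $|B'| = |T|\le t$ and that $B'$ hits every $S\in\sol(x)$, again using that $f_x(S) = S'\cap f_x(\U(x))$ for some $S'\in\sol(g(x))$, that $T$ hits $S'$ inside $f_x(\U(x))$, and injectivity of $f_x$ to pull the intersection back. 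Hence the equation holds and $(g',(f'_{x'})_{x'})$ is an SSP reduction.

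\textbf{Main obstacle.} There is no serious obstacle — the whole point is that \Cref{thm:interdictionSigma2Hard} already did the combinatorial work of matching up blocker sets, and \Cref{obs:interdictionIsSSP} guarantees the universes are unchanged, so the embedding passes through verbatim. The only place that requires genuine care is the direction ``$\supseteq$'', where one must confirm that a blocker $T$ in the $\Pi$-interdiction instance, which a priori need only hit the sets $S'\cap f_x(\U(x))$, actually pulls back to a set hitting \emph{all} of $\sol(x)$; this is exactly where $B_\text{new}\subseteq f_x(\U(x))$ is essential, since it lets us intersect everything down to the image of the embedding before applying $f_x^{-1}$. Once that observation is in place, the corollary follows, and by transitivity of SSP reductions (\Cref{lem:SSP-transitive}) combined with \Cref{thm:cook-levin} it further yields that $\textsc{Comb. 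Interdiction-}\Pi_1\leqSSP\textsc{Comb. Interdiction-}\Pi_2$ for any $\Pi_1\leqSSP\Pi_2$, although stating the corollary relative to \textsc{Sat} suffices for the claimed result.
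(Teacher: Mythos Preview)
Your proposal is correct and follows essentially the same approach as the paper: you set $f'_{x'} := f_x$ and verify the solution-preserving equation, exactly as the paper does. The only difference is that the paper's proof is much terser---it simply remarks that ``the blockers correspond one-to-one to each other on the set $f_x(B)$'' by reference to the proof of \Cref{thm:interdictionSigma2Hard}, whereas you spell out both inclusions in full detail (which is arguably an improvement in clarity).
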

\begin{proof}
    As implied by \Cref{obs:interdictionIsSSP}, \textsc{Comb. Interdiction-Sat} and \textsc{Comb. Inter\-dic\-tion-$\Pi$} are also an SSP problems.
    We extend the SSP reduction $(g, (f_I)_{I\in \I})$ to an SSP reduction $(g', (f_{I'})_{I'\in \I'})$ from \textsc{Comb. Interdiction-Sat} to \textsc{Comb. Interdiction-$\Pi$} as depicted in \Cref{fig:interdiction-ssp-reduction}, where $g'$ is the reduction from the proof of \Cref{thm:interdictionSigma2Hard}.
    
    \begin{figure}[!ht]
        \centering
        \scalebox{1}{
            \begin{tikzpicture}
                \node[] (Sat) at (0,0) {\textsc{Sat}};
                \node[] (Pi) at (6,0) {$\Pi$};
                \node[] (ISat) at (-0.5,-2) {\textsc{Comb. Interdiction-Sat}};
                \node[] (IPi) at (6.5,-2) {\textsc{Comb. Interdiction-$\Pi$}};
                \draw[->] (Sat) to node[above] {$(g,(f_I)_{I \in \I})$} (Pi);
                \draw[->] (ISat) to node[above] {$(g',(f'_{I'})_{I' \in \I'})$} (IPi);
                \draw[->] (0,-0.5) to node[left] {$I' = (I,B,t)$} (0,-1.5);
                \draw[->] (6,-0.5) to (6,-1.5);
            \end{tikzpicture}
        }
        \caption{\textsc{Satisfiability} $\leqSSP$ $\Pi$ implies \textsc{Comb. Interdiction-Sat} $\leqSSP$ \textsc{Comb. Interdiction-$\Pi$}.}
        \label{fig:interdiction-ssp-reduction}
    \end{figure}

    We recall the reduction from the proof of \Cref{thm:interdictionSigma2Hard}.
    As we have already proven the correctness of $g'$, we focus solely on the function $(f'_{I'})_{I' \in \I'}$.
    We define $f_I = f'_{I'}$ such that $B_\text{new}$ is the new blockable set in \textsc{Comb. Interdiction-$\Pi$} defined as
    $$
        B_\text{new} = f'_{I'}(B) = \{u \in \U(g(I)) \mid u \in f'_{I'}(B)\} =  \{u \in \U(g(I)) \mid u \in f_{I}(B)\}.
    $$
    Because the universe for \textsc{Sat} and \textsc{Comb. Interdiction-Sat} is the same, $f'_{I'}$ is well-defined.
    Consequently, we have $f_I(B) = f'_{I'}(B) \subseteq \U(g(I))$.
    As was stated in the proof of \Cref{thm:interdictionSigma2Hard}, the blockers correspond one-to-one to each other on the set $f_I(B)$.
    Hence, this reduction is indeed solution preserving.
\end{proof}

\section{Min-Max Regret Problems with Interval Uncertainty}
\label{sec:regret}
In this section, we consider min-max regret robust optimization problems with interval uncertainty.
Our main result in this chapter is that for every single LOP problem with the property that the SSP problem derived from it is in SSP-NPc (compare \cref{sec:framework}) by a so-called \emph{tight} reduction -- we define this term in \Cref{subsec:adapting-meta-reduction-regret} --, the corresponding min-max regret problem is $\Sigma^p_2$-complete.
This means that the problem is described by $\Pi = (\I, \U, \F, d, t)$, where for every instance $I \in \I$, $\U(I)$ denotes its universe, $\F(I)$ denotes its feasible solutions, $d^{(I)} : \U(I) \to \Z$ denotes its cost function, and $t^{(I)}$ denotes its threshold.
Like in \cref{def:LOSPP}, we define $\sol(I) = \set{S \in \F(I) : d^{(I)}(S) \leq t^{(I)}}$.
As an example, if $\Pi = \textsc{vertex cover}$, then $\F(I)$ contains all vertex covers, but $\sol(I)$ only contains those vertex covers of size at most the threshold $t^{(I)}$.

In order to define the min-max regret version of $\Pi$, we use the following definitions, which are standard in the area of min-max regret robust optimization \cite{DBLP:journals/disopt/DeinekoW10}:
For some cost function $c :  \U(I) \to \Z$, we define $S^\star_c \in \argmin_{S' \in \F(I)} c(S')$ to be an optimal feasible solution of $\Pi$ with respect to cost function $c$.
The \emph{regret} of some feasible solution $S \in \F(I)$ is defined as $\reg(S,c) := c(S) - c(S^\star_c)$. 
Given numbers $\underline c_e, \overline c_e \in \Z$ for every universe element $e \in \U(I)$ such that $\underline c_e \leq \overline c_e$, the \emph{interval uncertainty set} defined by the coefficient sequence $(\underline c_e, \overline c_e)_{e \in \U(I)}$ is defined as
$$
    C := \set{c \mid c : \U(I) \to \R \text{ is a function s.t. } c(e) \in [\underline c_e, \overline c_e] \ \forall e \in \U(I)}.
$$
(We remark that we are using the letter $C$ for the uncertainty set instead of the more standard letter $\U$, since we are using $\U$ already to denote the universe of the nominal problem.)

\begin{definition}[Min-Max Regret Problem with Interval Uncertainty]
\label{def:interval-regret}
    Let an LOP problem $\Pi = (\I, \U, \F, d, t)$ be given.
    The min-max regret problem with interval uncertainty associated to $\Pi$ is denoted by \textsc{Interval Min-Max Regret-$\Pi$} and defined as follows:
    The input is an instance $I \in \I$ together with integers $\underline c_e \leq \overline c_e$ for all $e \in \U(I)$ and a threshold $t_R \in \Z$.
    The question is whether 
    $$
        \min_{S \in \F(I)} \ \max_{c \in C} \ \reg(S,c) \leq t_R.
    $$
\end{definition}

We remark that this definition only applies to LOP problems, i.e.\ not to every problem in \cref{sec:ssp-reductions}.
This is simply due to the fact that it makes use of the concept of the feasible solutions $\F(I)$, and not every SSP problem has a set $\F(I)$.
However, in the next subsection, we introduce a slight variant of the min-max regret problem which is defined for every SSP problem and also show $\Sigma^p_2$-completeness for that variant.

Before we start, let us define the \emph{maximum regret} $r_\text{max}(S) := \max_{c \in C} \reg(S,c)$.
For a fixed feasible set $S \in \F(I)$, it is a well-known fact \cite{DBLP:journals/disopt/DeinekoW10} that the maximum regret is caused by the so-called \emph{canonical cost scenario} $c_S : \U(I) \to \Z$, defined by
$$
    c_S(e) :=   \begin{cases} \overline c_e & \text{ if $e \in S$} \\
                    \underline c_e & \text{ if $e \not\in S$}. 
                \end{cases}
$$
This implies that $r_\text{max}(S) = \reg(S, c_S) = c_S(S) - c_S(S^\star_{c_S}) = c_S(S) - \min_{S' \in \F(I)} c_S(S')$.

With this observation on the canonical cost scenario, we can view interval min-max regret problems as a two-player game of an $\exists$-player that wants to find a feasible solution against an adversary (the $\forall$-player).
The $\exists$-player is in control of the \emph{min} and thus is able to choose the solution $S$.
On the other hand, the adversary is in control of the \emph{max} and is thus able to choose the cost function $c$ from the set $C$ as well as the solution $S'$.
Consequently, we can reformulate the question to 
$$
    \exists S \in \F(I) : \forall S' \in \F(I) : c_S(S) - c_S(S') \leq t_R.
$$
With this perspective, it is easy to show the containment in the class $\Sigma^p_2$. 

\begin{lemma}
\label{lem:regret-containment-sigma-2}
    If $\Pi = (\I, \F, \U, d, t)$ is an LOP problem such that the derived SSP problem is in SSP-\NP, then \textsc{Interval Min-Max Regret-$\Pi$} is contained in the class $\Sigma^p_2$.
\end{lemma}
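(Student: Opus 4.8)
The plan is to use the game-theoretic reformulation recorded just above the statement, which says that $x$ is a yes-instance if and only if
$$
\exists S \in \F(x) \ \ \forall S' \in \F(x) : \ c_S(S) - c_S(S') \leq t,
$$
and then to observe that this is already a $\Sigma^p_2$-sentence, provided (i) the two quantified objects can be encoded by polynomially long bit strings and (ii) the inner predicate is decidable in polynomial time. The argument is essentially identical to the proof of \cref{thm:interdictionInSigma2}; the crucial point is that we never actually compute the nominal optimum $S^\star_{c_S}$ (which would in general be NP-hard), because the universally quantified $S'$ takes care of it.

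For (i): both $S$ and $S'$ are subsets of $\U(x)$, and since $\Pi$ is in NP we have $|\U(x)| = \poly(|x|)$ (cf. the proof of \cref{thm:interdictionInSigma2}), so each can be described by a string of length $m_1 = m_2 = \poly(|x|)$. For the verifier $V(x,y_1,y_2)$ I would proceed as follows. Read $y_1$ as a subset $S \subseteq \U(x)$; if $S \notin \F(x)$, output $0$ (this feasibility test runs in polynomial time by the assumption in \cref{def:LOSPP}). Otherwise read $y_2$ as a subset $S' \subseteq \U(x)$; if $S' \notin \F(x)$, output $1$ (the universally quantified clause is vacuously true on non-feasible $S'$). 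Otherwise form the canonical cost scenario $c_S$ from the input intervals $(\underline c_e, \overline c_e)_{e \in \U(x)}$ — which only requires, for each $e \in \U(x)$, checking whether $e \in S$ — compute the integers $c_S(S)$ and $c_S(S')$, and output $1$ iff $c_S(S) - c_S(S') \leq t$. Every step is clearly polynomial in $|x|$, so (ii) holds and $V$ is a valid $\Sigma^p_2$-verifier.

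It then remains to check correctness, i.e.\ that $x$ is a yes-instance iff $\exists y_1 \, \forall y_2 : V(x,y_1,y_2) = 1$. By construction of $V$, the right-hand side holds exactly when there is an $S \in \F(x)$ with $c_S(S) - c_S(S') \leq t$ for all $S' \in \F(x)$. Since $\min_{S' \in \F(x)} c_S(S')$ is attained (assuming, as is standard, $\F(x) \neq \emptyset$; the empty case makes $x$ trivially a no-instance and $V$ reject on every $y_1$), this is equivalent to $c_S(S) - \min_{S' \in \F(x)} c_S(S') \leq t$, which by the canonical-cost-scenario identity $r_\text{max}(S) = c_S(S) - \min_{S' \in \F(x)} c_S(S')$ recalled above is precisely $r_\text{max}(S) \leq t$. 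Hence $\exists y_1 \, \forall y_2 : V(x,y_1,y_2) = 1$ is equivalent to $\exists S \in \F(x) : \max_{c \in C_I} \reg(S,c) \leq t$, i.e.\ to $\min_{S \in \F(x)} \max_{c \in C_I} \reg(S,c) \leq t$, which is the definition of $x$ being a yes-instance. Therefore \textsc{Interval Min-Max Regret-}$\Pi \in \Sigma^p_2$.

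I do not expect a genuine obstacle here; the only points needing slight care are the bookkeeping for strings $y_1, y_2$ that fail to encode feasible solutions (resolved by letting $V$ reject on a bad $y_1$ and accept on a bad $y_2$) and the trivial edge case $\F(x) = \emptyset$.
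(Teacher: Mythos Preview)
Your proof is correct and follows essentially the same approach as the paper: encode $S$ and $S'$ as polynomial-size bit strings, use the canonical cost scenario $c_S$ to eliminate the inner maximization, and let the verifier check feasibility of $S,S'$ and the inequality $c_S(S)-c_S(S')\le t$. If anything, your version is more careful than the paper's, since you explicitly handle the bookkeeping for infeasible $y_1,y_2$ and the degenerate case $\F(x)=\emptyset$.
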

\begin{proof}
    We provide a polynomial time algorithm $V$ such that for $m_1, m_2 \leq poly(|I |)$:
    $$
        I \in L \ \Leftrightarrow \ \exists y_1 \in \{0,1\}^{m_1} \ \forall y_2 \in \{0,1\}^{m_2} : V(I, y_1, y_2) = 1.
    $$
    Observe that for \textsc{Interval Min-Max Regret-$\Pi$} we can characterize the yes-instances as follows:
    $$
        I \in L \ \Leftrightarrow \ \exists S \in \F(I) \ \forall S' \in \F(I) : c_S(S) - c_S(S') \leq t_R.
    $$
    Therefore with the $\exists$-quantified $y_1$, we encode the solution $S \subseteq \U(I)$ with $S \in \F(I)$.
    Because $|\U(I)| \leq poly(|I|)$, the encoding size of $y_1$ is polynomially bounded in the input size of $\Pi$.
    Next, we encode all solutions $S' \in \F(I)$ to the nominal problem $\Pi$ using the $\forall$-quantified $y_2$ within polynomial space as well.
    At last, the verifier $V$ has to verify the correctness of the given solution provided by the $\exists$-quantified $y_1$ and $\forall$-quantified $y_2$.
    Since we have assumed that we can efficiently check (for $\Pi$ in NP) whether proposed solutions $F \subseteq \U(I)$ are indeed feasible solutions, it can be checked in polynomial time whether $S \textit{ and } S' \in \F(I)$.
    Furthermore, the canonical cost scenario $c_S$ can be efficiently computed from $S$.
    Consequently, it can be checked in polynomial time whether $c_S(S) - c_S(S') \leq t_R$.
    It follows that $\textsc{Interdiction-}\Pi$ is in $\Sigma^p_2$.
\end{proof}

\subsection{Simplifying the Structure: Restricted Interval Min-Max Regret Problems}
In order to show the meta-theorem, we define a restricted variant of interval min-max regret problems.
In contrast to the non-restricted variant, which can only be defined for LOP problems (since the definition relies on the concept of $\F(I)$), the restricted version can be defined more generally for every SSP.
We call these problems \textsc{Restricted Interval Min-Max Regret-$\Pi$} for a nominal SSP problem $\Pi$.

\begin{definition}[Restricted Interval Min-Max Regret Problem]
\label{def:restricted-min-max-regret-problem}
    Let an SSP problem $\Pi = (\I, \U, \sol)$ be given.
    The restricted regret problem associated to $\Pi$ is denoted by \textsc{Restricted Interval Min-Max Regret-$\Pi$} and defined as follows:
    The input is an instance $I \in \I$ together with integers $\underline h_e, \overline h_e$ for all $e \in \U(I)$, such that $\underline h_e \leq \overline h_e$ and both $\underline h_e, \overline h_e \in \bin$, and a threshold $q \in \Z$.
    The question is whether 
    $$
        \min_{S \in \sol(I)} \ \max_{h \in H} \ \left( h(S) - \min_{S' \in \sol(I) } h(S') \right) \leq q.
    $$
    Here $H$ denotes the interval uncertainty set defined by the coefficients $\underline h_e, \overline h_e$. We assume that the input is a Yes-instance ($\sol(I) \neq \emptyset$), otherwise the (restricted) min-max regret is undefined.
\end{definition}

Observe that the restricted interval min-max regret is defined over the solutions $\sol(I)$ instead of the feasible solutions $\F(I)$ as for the standard interval min-max regret problems based on an LOP problem. 
Furthermore, the uncertainty set is restricted by $\underline h_e, \overline h_e$ to be from the set $\bin$.
(We remark that while it is not standard to calculate the regret relative to $\sol(I)$ instead of $\F(I)$, this has been considered recently by Coco, Santos and Noronha for the case of min-max regret for the set cover problem \cite{DBLP:journals/coap/CocoSN22}.) 

We begin with the canonical \textsc{Sat} variant \textsc{Restricted Interval Min-Max Regret-Sat} and show that it is $\Sigma^p_2$-complete.
Then, using this \textsc{Sat} problem as a basis for a reduction, we explain how this can be used to show that \textsc{Restricted Interval Min-Max Regret-$\Pi$} is $\Sigma^p_2$-hard.
In a similar spirit to previous sections of this paper, we consider $L$ as the universe of the problem and we define the set of solutions 
$$
    \sol(\varphi) := \set{L' \subseteq L : |L' \cap \set{x_i, \overline x_i}| = 1 \ \forall i = 1,\dots,n; \ |L' \cap c_j| \geq 1 \ \forall j = 1,\dots,m}
$$
as the subset of all literals encoding a satisfying assignment. 
Applying \cref{def:restricted-min-max-regret-problem} to the satisfiability problem yields the following.

\begin{definition}[\textsc{Restricted Interval Min-Max Regret-Sat}]
\label{def:min-max-regret-sat}
    We denote the restricted interval min-max regret version of \textsc{Sat} by \textsc{Restricted Interval Min-Max Regret-Sat}. The input is a CNF with clauses C and literals L, together with a threshold parameter $q \in \Z$ and integers $\underline h_\ell, \overline h_\ell$ for all $\ell \in L$, such that $\underline h_\ell \leq \overline h_\ell$ and both $\underline h_\ell, \overline h_\ell \in \bin$ for all $\ell \in L$.
    Let $H$ be the interval uncertainty set defined by the $\underline h_\ell, \overline h_\ell$.
    The question is whether
    $$
        \min_{S \in \sol(\varphi)} \ \max_{h \in H} \ \left( h(S) - \min_{S' \in \sol(\varphi) } h(S') \right) \leq q.
    $$
    We assume that the input formula $\varphi$ is satisfiable ($\sol(\varphi) \neq \emptyset$), otherwise the (restricted) min-max regret is undefined.
\end{definition}

As the next step, we show the $\Sigma^p_2$-completeness of \textsc{Restricted Interval Min-Max Regret-Sat}.

\begin{lemma}
    \textsc{Restricted Interval Min-Max Regret-Sat} is $\Sigma^p_2$-complete.
\end{lemma}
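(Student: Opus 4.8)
The plan is to prove containment in $\Sigma^p_2$ by a verbatim adaptation of \Cref{lem:regret-containment-sigma-2}, and to prove $\Sigma^p_2$-hardness by a reduction from \textsc{$\exists\forall$DNF-Sat} (Stockmeyer's problem, already used in \Cref{lem:interdiction-sat-sigma-2-complete}). For containment one simply reads $\sol(\varphi)$ instead of $\F(x)$: existentially guess $S\in\sol(\varphi)$, universally guess $S'\in\sol(\varphi)$, and check $c_S(S)-c_S(S')\le q$ in polynomial time. Here one uses that the canonical-scenario identity $\max_{c\in H_I}(c(S)-\min_{S'\in\sol(\varphi)}c(S'))=c_S(S)-\min_{S'\in\sol(\varphi)}c_S(S')$ holds with $\sol$ in place of $\F$, because for every fixed $S'$ and every $c\in H_I$ one has $c(S)-c(S')\le c_S(S)-c_S(S')$. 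Consequently the decision question is equivalent to: $\exists S\in\sol(\varphi)\,\forall S'\in\sol(\varphi)\colon c_S(S)-c_S(S')\le q$.

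For the hardness reduction, take an instance $\exists X\forall Y\,\varphi(X,Y)$ with $\varphi$ in DNF and $X=\fromto{x_1}{x_n}$. Let $\varphi':=\neg\varphi$, which is a CNF by De~Morgan, and let $\tilde\varphi'$ be the CNF obtained from $\varphi'$ by replacing every $X$-literal by its negation, so that $\tilde\varphi'(\alpha,\beta)=\varphi'(\overline\alpha,\beta)$ for the bitwise complement $\overline\alpha$. Introduce one fresh variable $s$ and let $\psi$ be the CNF on $X\cup Y\cup\{s\}$ whose clauses are $\tilde C\lor\overline s$ for every clause $\tilde C$ of $\tilde\varphi'$. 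Set the interval bounds to $\underline h_\ell=0$, $\overline h_\ell=1$ for every literal $\ell$ of a variable in $X\cup\{s\}$, and $\underline h_\ell=\overline h_\ell=0$ for every literal of a $Y$-variable; set the threshold $q:=n$. I would then carry out three steps. (i) \emph{Identify the solutions.} If $s$ is false all clauses hold, so for every $\alpha\in\bin^n$ there is a ``hub'' $H_\alpha$ with $X=\alpha$ (and $Y$ arbitrary); if $s$ is true the solutions are exactly the ``spikes'' $K_{\gamma,\beta}$ with $X=\overline\gamma$, $Y=\beta$, for the pairs $(\gamma,\beta)$ with $\varphi'(\gamma,\beta)$ true; these are all solutions, and $\sol(\psi)\neq\emptyset$. (ii) \emph{Reduce the regret to a distance.} A $Y$-variable has equal bounds on both literals, hence never contributes to $c_S(S)-c_S(S')$, while an $X$- or $s$-variable contributes $1$ if $S,S'$ disagree on it and $0$ otherwise; so $c_S(S)-c_S(S')=\delta(S,S')$, the number of variables of $X\cup\{s\}$ on which $S$ and $S'$ disagree, and the question becomes $\exists S\in\sol(\psi)\,\forall S'\colon\delta(S,S')\le n$. (iii) \emph{Verify correctness}, writing $d(\alpha,\gamma):=|\{i:\alpha_i\ne\gamma_i\}|$. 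If $\alpha^\star$ witnesses $\exists X\forall Y\,\varphi$ then $\varphi'(\alpha^\star,\cdot)$ is unsatisfiable, so every spike $K_{\gamma,\beta}$ has $\gamma\ne\alpha^\star$ and $\delta(H_{\alpha^\star},K_{\gamma,\beta})=(n-d(\alpha^\star,\gamma))+1\le n$, whereas $\delta(H_{\alpha^\star},H_\alpha)=d(\alpha^\star,\alpha)\le n$; thus $H_{\alpha^\star}$ certifies a Yes-instance. Conversely, if $\varphi'(\alpha,\cdot)$ is satisfiable for every $\alpha$ (the negation of $\exists X\forall Y\,\varphi$), then for every hub $H_\alpha$ there is a spike $K_{\alpha,\beta}$ with $\delta(H_\alpha,K_{\alpha,\beta})=n+1>n$, and for every spike $K_{\gamma,\beta}$ we have $\delta(K_{\gamma,\beta},H_\gamma)=n+1>n$; since these are all solutions, no solution is a good center and the instance is a No-instance. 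Polynomial-time computability of the whole construction is immediate.

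The delicate point — and the reason the ``complement routing'' $\tilde\varphi'(\alpha,\beta)=\varphi'(\overline\alpha,\beta)$ is needed — is that with $\{0,1\}$ interval bounds the canonical scenario $c_S$ makes exactly the min-player's own literals expensive, so the adversary is never penalised (and is often rewarded) for deviating from the min-player's $X$-assignment; hence one cannot \emph{force} the adversary to keep the same $X$-assignment, and a satisfying assignment that is far from the chosen center always yields large regret. Routing $\varphi'$ through the negated $X$-variables turns this around: an escape assignment ($s$ true, $\varphi'(\gamma,\beta)$ satisfied) sits at $X$-distance $n-d(\alpha^\star,\gamma)$ from the center $H_{\alpha^\star}$, which is \emph{largest} — equal to $n$, hence $\delta=n+1$ — exactly when $\gamma=\alpha^\star$, i.e.\ precisely in the bad case. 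Making the polarity, the unit weights, and the single switch bit $s$ cooperate so that the single threshold $q=n$ cleanly separates the two cases is where essentially all of the work lies; everything else is bookkeeping.
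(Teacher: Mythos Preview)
Your proof is correct and follows essentially the same approach as the paper: reduce from $\exists\forall$DNF-Sat by negating to a CNF, adjoining a single switch variable that trivially satisfies the formula, putting $[0,1]$ intervals on the $X$- and switch-literals and $[0,0]$ on the $Y$-literals, and setting the threshold to $|X|$. The only differences are cosmetic: the paper adds the \emph{positive} switch literal $z$ to every clause (so $z$ true is the trivial side) and has the min-player choose the \emph{complement} $\overline{A_1}$ of the witness on $X$, whereas you add $\overline s$ (so $s$ false is the trivial side) and pre-negate the $X$-literals in the formula so the min-player can choose the witness $\alpha^\star$ directly; your explicit Hamming-distance reformulation of the regret is exactly the paper's ``opposite literals on $L_X\cup\{z,\overline z\}$'' argument made quantitative.
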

\begin{proof}
The containment in the class $\Sigma^p_2$ is analogous to \cref{lem:regret-containment-sigma-2}.
It remains to show hardness.
For the hardness, we reduce from \textsc{$\exists\forall$DNF-Sat}, which is $\Sigma^p_2$-hard as shown by Stockmeyer \cite{DBLP:journals/tcs/Stockmeyer76}.

\begin{description}
\item[Definition of the instance]
Assume we are given an instance $\exists X \forall Y \psi(X, Y)$ of \textsc{$\exists\forall$DNF-Sat}, where $\psi$ is in DNF.
We show how to construct an equivalent instance of \textsc{Restricted Interval Min-Max Regret-Sat} by describing $\varphi, \underline h_\ell,\overline h_\ell, t$ as in \cref{def:min-max-regret-sat} (where $\varphi$ is a CNF-formula and $\ell$ runs over the literals of $\varphi$).

We start by letting $\psi' := \neg \psi$. Note that because $\psi$ is in DNF, $\psi'$ is in CNF by applying De Morgan's rule. The variables and literals are the same between $\psi$ and $\psi'$. Observe that
$$
    \exists X \forall Y \psi(X,Y) \ \Leftrightarrow \ \exists X \neg \exists Y \psi'(X,Y).
$$
We define a new CNF formula $\varphi$, by introducing a new variable $z$ and appending it to every clause of $\psi'$, that is $\varphi \equiv \psi' \lor z$.
Note that $\varphi$ is satisfiable ($\sol(\varphi) \neq \emptyset$), since any assignment $\alpha$ with $\alpha(z) = 1$ is satisfying. 
Let $L_X := X \cup \overline X$ be the literal set belonging to the variables $X$ and $L_Y := Y \cup \overline Y$ be the literal set belonging to the variables $Y$. The formula $\varphi$ has literal set $L := L_X \cup L_Y \cup \set{z, \overline z}$.
We define the uncertainty set $H$, by making a case distinction.
For a literal $\ell \in L_X \cup \set{z, \overline z}$, we let $\underline h_\ell = 0$ and $\overline h_\ell = 1$.
For a literal $\ell \in L_Y$, we let $\underline h_\ell = \overline h_\ell = 0$. 
Finally we let $q := |X| = |L_X| / 2$.
This completes the description of the \textsc{Restricted Interval Min-Max Regret-Sat} instance $(\varphi, H, q)$.

\item[Correctness]
We claim that the regret for $(\varphi, H, q)$ is bounded from above by $q$ if and only if $\exists X \neg \exists Y \psi'(X,Y)$ is satisfiable.
The main idea behind this claim is that the min-max regret problem can be understood as a game between Alice and Bob: 
Alice selects an initial solution $S \in \sol(\varphi)$, Bob selects the cost function $h \in H$ together with some other solution $S' \in \sol(\varphi)$. 
Bob's goal is to cause a large regret for Alice (at least $q+1$).
By construction, such a large regret is only possible, if $S'$ and $S$ take exactly opposite literals on the set $L_X \cup \set{z, \overline z}$.
But this means Alice can \enquote{block} Bob, if she chooses her assignment on $X$ in a smart way.

We are now ready to formally prove the claim.
First, assume $\exists \alpha_1 \neg \exists \alpha_2 \psi'(\alpha_1,\alpha_2)$, where $\alpha_1 : X \to \bin$ and $\alpha_2 : Y \to \bin$ are truth assignments.
We let $A_1$ be the literals corresponding to $\alpha_1$ (i.e. if $\alpha(x) = 1$ then $x \in A_1$, else $\overline x \in A_1$).
Then Alice chooses a solution $S$ the following way: From $\set{z,\overline z}$, she chooses the literal $z$. 
From $L_X$, she chooses exactly the literals $\overline A_1$ opposite to $A_1$. From $L_Y$, she chooses an arbitrary literal for each variable $y \in Y$.
Observe that the subset $S \subseteq L$ described this way encodes a satisfying assignment of $\varphi$, since $z \in S$.
Furthermore, observe that the canonical cost scenario causing the maximum regret to Alice is given by 
$$
    h_S(\ell) =     \begin{cases}
                        1 & \text{ if } \ell \in (L_X \cup \set{z, \overline z}) \cap S \\
                        0 & \text{ if } \ell \in (L_X \cup \set{z, \overline z}) \setminus S \\
                        0 & \text{ if } \ell \in L_Y.
                    \end{cases}
$$
The properties of $h_S$ imply the following: The only way for Bob to cause a regret of $q+1 = |X| + 1$ is if  $S' \supseteq A_1 \cup \set{\overline z}$ (i.e. if he makes the exact opposite choice of Alice on the set $L_X \cup \set{z, \overline z}$).
But this is impossible: since $ \neg \exists \alpha_2 \psi'(\alpha_1,\alpha_2)$, Bob can never complete such a choice to a satisfying assignment $S' \in \sol(\varphi)$.
This means that the maximum regret is at most $q$.

On the other hand, assume that the regret is at most $q$.
In this case, Alice must have chosen $z$, (that is,\ $z \in S$), because otherwise
$$
    \min_{S \in \sol(\varphi)} \ \max_{h \in H} \ \left( h(S) - \min_{S' \in \sol(\varphi) } h(S') \right) = q+1.
$$
This is because if Alice did not choose $z$, then Bob can choose $z$ and exactly the opposite of Alice's choices on the set $L_X$ and an arbitrary assignment of $Y$.
(This choice by Bob satisfies the formula, since every assignment with $z = 1$ trivially satisfies the formula.)
Furthermore, it causes a regret of $q + 1$ since on the set $L_X \cup \set{z, \overline z}$ it is exactly the opposite of Alice's choice.
Hence we conclude that Alice must have chosen $z$.
But then, by a similar argument to the above, we have that $\exists \alpha_1 \neg \exists \alpha_2 \psi'(\alpha_1,\alpha_2)$.

We have successfully proven the claim.
In conclusion, we have that the regret is bounded by $q$ if and only if $\exists X \forall Y \psi(X,Y)$ is satisfiable.

\item[Polynomial Time]
Moreover, the reduction is polynomial-time computable because all transformations of the formula $\psi$ as well as the computation of the numbers $\underline h_\ell, \overline h_\ell$ are polynomial-time computable.
\end{description}

In conclusion, we have shown that \textsc{Restricted Interval Min-Max Regret-Sat} is $\Sigma^p_2$-complete.
\end{proof}

\subsection{A Meta-Reduction from Restricted Interval Min-Max Regret Problems}
\label{sec:restricted-regret-hardness}
Having established the hardness of our base problem \textsc{Restricted Interval Min-Max Regret-Sat}, we are now ready to prove our meta-theorem, i.e. we prove that for every SSP problem $\Pi$, which is also SSP-\NP-complete, its corresponding restricted interval min-max regret version is $\Sigma^p_2$-complete.

\begin{theorem}
    \label{thm:restricted-min-max-regret-reduction}
    For all SSP-NP-complete problems $\Pi$, the restricted interval min-max regret variant \textsc{Restricted Interval Min-Max Regret-$\Pi$} is $\Sigma^p_2$-complete.
\end{theorem}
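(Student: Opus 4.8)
The plan is to follow the same three-step template already used in \cref{sec:interdiction}: first show membership in $\Sigma^p_2$, then reuse the $\Sigma^p_2$-completeness of \textsc{Restricted Interval Min-Max Regret-Sat} established in the previous lemma, and finally upgrade an arbitrary SSP reduction $\textsc{Satisfiability} \leqSSP \Pi$ into a polynomial-time reduction from \textsc{Restricted Interval Min-Max Regret-Sat} to \textsc{Restricted Interval Min-Max Regret-$\Pi$}. Since $\Pi$ is SSP-NP-complete, such an SSP reduction $(g,(f_x)_{x\in X})$ is guaranteed to exist, and it will be kept essentially unchanged on the level of instances, with only the uncertainty intervals being re-defined on the larger universe.

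For containment I would mimic \cref{lem:regret-containment-sigma-2}, replacing $\F(x)$ by $\sol(x)$ throughout. The only point needing a second look is that the canonical-cost-scenario reduction still applies when the regret is measured against $\sol(x)$ rather than $\F(x)$: for a fixed $S \in \sol(x)$ one has $\max_{c \in H_I}\bigl(c(S)-\min_{S'\in\sol(x)}c(S')\bigr)=\max_{S'\in\sol(x)}\max_{c\in H_I}\bigl(c(S)-c(S')\bigr)$, and the scenario $c_S$ (with $c_S(e)=\overline h_e$ for $e\in S$ and $c_S(e)=\underline h_e$ otherwise) attains the inner maximum simultaneously for every choice of $S'$. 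Hence the instance is a Yes-instance iff there exists $S\subseteq\U(x)$ with $S\in\sol(x)$ such that for all $S'\subseteq\U(x)$, $S'\in\sol(x)$ implies $c_S(S)-c_S(S')\le q$; since $\Pi\in\text{SSP-NP}$ this is a $\Sigma^p_2$-predicate.

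For the hardness I would, given an instance $(\varphi,(\underline h_\ell,\overline h_\ell)_{\ell\in L},q)$ of \textsc{Restricted Interval Min-Max Regret-Sat} with $L=\U(\varphi)$, output the instance $(g(\varphi),(\underline h'_e,\overline h'_e)_{e\in\U'(g(\varphi))},q)$ of \textsc{Restricted Interval Min-Max Regret-$\Pi$}, keeping the threshold and setting $\underline h'_e=\underline h_{f_\varphi^{-1}(e)},\ \overline h'_e=\overline h_{f_\varphi^{-1}(e)}$ for $e\in f_\varphi(L)$ and $\underline h'_e=\overline h'_e=0$ for $e\in\U'(g(\varphi))\setminus f_\varphi(L)$. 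All new coefficients lie in $\bin$, and since $\varphi$ is satisfiable and $g$ is a valid reduction, $\sol'(g(\varphi))\neq\emptyset$, so the target instance is well-formed. The crux of the correctness argument is that every admissible cost function $c'$ vanishes on $\U'(g(\varphi))\setminus f_\varphi(L)$, hence $c'(T)=c'(T\cap f_\varphi(L))$ for every $T\subseteq\U'(g(\varphi))$; combined with the solution-preserving identity $\set{f_\varphi(S):S\in\sol(\varphi)}=\set{S'\cap f_\varphi(L):S'\in\sol'(g(\varphi))}$, both the outer and the inner minimum over $\sol'(g(\varphi))$ effectively range over the family $\set{f_\varphi(S):S\in\sol(\varphi)}$ only. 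Transporting $c'$ along the bijection induced by $f_\varphi$ on the image then yields
$$
\min_{S'\in\sol'(g(\varphi))}\ \max_{c'\in H'_I}\Bigl(c'(S')-\min_{S''\in\sol'(g(\varphi))}c'(S'')\Bigr)=\min_{S\in\sol(\varphi)}\ \max_{c\in H_I}\Bigl(c(S)-\min_{S''\in\sol(\varphi)}c(S'')\Bigr),
$$
so the reduction preserves Yes/No-instances; polynomial-time computability is inherited from $g$ and the $f_x$, which together with the previous lemma gives $\Sigma^p_2$-hardness.

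The step I expect to be the main obstacle is the collapse of the inner minimum $\min_{S''\in\sol'(g(\varphi))}c'(S'')$: one must argue carefully that although solutions of $\Pi$ may differ arbitrarily outside the embedded universe $f_\varphi(L)$, these differences are invisible to every admissible cost function $c'$ (because $c'$ is forced to be $0$ there), and that, modulo projecting onto $f_\varphi(L)$, the relevant solution family is precisely $f_\varphi(\sol(\varphi))$ — this uses both inclusions of the solution-preserving equality, not just one. A minor additional check is that the $\bin$-restriction on the uncertainty coefficients is respected by both the inherited and the padded coefficients, and that the canonical-cost-scenario lemma genuinely survives the switch from $\F$ to $\sol$.
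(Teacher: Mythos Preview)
Your proposal is correct and follows essentially the same approach as the paper: the same instance transformation (copying the interval endpoints along $f_\varphi$ and padding with zeros outside the image), the same key observation that every admissible cost function vanishes off $f_\varphi(L)$, and the same use of both inclusions of the solution-preserving identity to conclude that the two min-max regrets coincide. Your containment argument is slightly more explicit than the paper's (which simply defers to \cref{lem:regret-containment-sigma-2}), but the overall strategy is identical.
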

\begin{proof}
The containment in the class $\Sigma^p_2$ is analogous to \cref{lem:regret-containment-sigma-2}.
Let an abstract, SSP-NP-complete problem $\Pi = (\I, \U, \sol)$ be given.
SSP-NP-completeness implies that $\textsc{Satisfiability} \leqSSP \Pi$.
(Where \textsc{Satisfiability} is defined like in \cref{sec:framework}). 
By the definition of an SSP reduction, there exists a poly-time computable function $g$ which maps CNF-formulas $\varphi$ to equivalent instances $g(\varphi) \in \I$ of $\Pi$.
Let $I := g(\varphi)$.
Furthermore there exist poly-time computable functions $f_\varphi : \U(\varphi) \to \U(I)$ with the SSP property. 
Here, $\U(\varphi)$ is the literal set of $\varphi$, and $\U(I)$ is the universe of the instance $I = g(\varphi)$ of $\Pi$. In order to reduce the notation, we write $f$ instead of $f_\varphi$.
The SSP property states that
$$
    \set{f(S) : S \in \sol(\varphi)} = \set{S' \cap f(\U(\varphi)) : S' \in \sol(I)}
$$
where $\sol(\varphi)$ is the set of all subsets of literals which satisfy $\varphi$.

The above is a reduction from \textsc{Satisfiability} to $\Pi$.
We now show that we can \enquote{upgrade} the reduction to obtain a reduction from \textsc{Restricted Interval Min-Max Regret-SAT} to \textsc{Restricted Interval Min-Max Regret-$\Pi$}. This upgraded reduction is described the following way:

Assume we are given an instance $(\varphi, (\underline h_\ell, \overline h_\ell)_{\ell\in L}, q)$ of \textsc{Restricted Interval Min-Max Regret-SAT} with the properties as described in \cref{def:min-max-regret-sat}.
We define an instance $(I, t_R, (\underline h_e, \overline h_e)_{e \in \U(I)})$ of \textsc{Restricted Interval Min-Max Regret-$\Pi$} in the following way:
\begin{itemize}
    \item   The instance is given by $I = g(\varphi)$.
            (Note this can be computed in polynomial time).
    \item   Let $n := |\U(I)|$.
            The numbers $\underline h_e, \overline h_e$ for $e \in \U(I)$ are defined by a case distinction.
            If $e \in f(\U(\varphi))$, then we let $\ell := f^{-1}(e)$ and 
            $$ 
                \underline h_e := \underline h_\ell, \ \text{ and } \ \overline h_e := \overline h_\ell.
            $$
            In the other case, $e \not \in f(\U(\varphi))$, we let $\underline h_e = \overline h_e := 0$.
            We define the resulting interval uncertainty set by $\tilde H$.
    \item   The threshold $t_R$ is given by $t_R := q$. 
\end{itemize}
This completes our description of the instance of \textsc{Restricted Interval Min-Max Regret-$\Pi$}.
Note that $\varphi$ is a satisfiable formula by \cref{def:min-max-regret-sat}, hence $g(\varphi) = I \in \I$ is a Yes-instance of $\Pi$ and $\sol(I) \neq \emptyset$.
Furthermore, the transformation is computable in polynomial time because $g$ and $f(\U)$ are polynomial-time computable.

At last, we prove that for the two instances of \textsc{Restricted Interval Min-max Regret Sat} and \textsc{Restricted Interval Min-Max Regret-$\Pi$} as defined above, it holds that their min-max regrets are equal.
Let $W :=  f(\U(\varphi))$ and consider the following chain of equalities:

\begin{align*}
    \text{min-max-regret}&(I, (\underline h_e, \overline h_e)_{e \in \U(I)})\\
    = &\min_{S \in \sol(I)} \ \max_{\Tilde h \in \Tilde H} \ \left( \Tilde h(S) - \min_{S' \in \sol(I) } \Tilde h(S') \right) \\
    = &\min_{S \in \sol(I)} \ \max_{\Tilde h \in \Tilde H} \ \left( \Tilde h(S \cap W) - \min_{S' \in \sol(I) } \Tilde h(S' \cap W) \right) \\
    = &\min_{S \in \sol(I)} \ \max_{h \in H} \ \left( h(f^{-1}(S \cap W)) - \min_{S' \in \sol(I) } h(f^{-1}(S' \cap W)) \right) \\
    = &\min_{T \in \sol(\varphi)} \ \max_{h \in H} \ \left( h(T) - \min_{T' \in \sol(\varphi) } h(T') \right)\\
    = &\ \text{min-max-regret}(\varphi, (\underline h_\ell, \overline h_\ell)_{\ell \in L})
\end{align*}

The first equality is by definition.
The second equality follows from the fact that $\tilde h$ is only non-zero on the set $W$.
The third equality follows since the uncertainty set $\tilde H$ behaves on $W$ analogous to $H$ on $f^{-1}(W) = L$. 
The fourth equality is due to the SSP property.
To see this, apply $f^{-1}$ to both sides of the equation $\set{f(S) : S \in \sol(\varphi)} = \set{S' \cap f(\U(\varphi)) : S' \in \sol(I)}$.
The last equality is again by definition.

We remark that the crucial step in the proof of the above lemma is the usage of the SSP property.
In summary, we have described a polynomial time reduction from the $\Sigma^p_2$-complete problem \textsc{Restricted Interval Min-max Regret Sat} to \textsc{Restricted Interval Min-Max Regret-$\Pi$}, such that the objective value of both problems stays exactly the same.
This proves that \textsc{Restricted Interval Min-Max Regret-$\Pi$} is $\Sigma^p_2$-hard. Together with the containment, as argued above, the problem is $\Sigma^p_2$-complete.
\end{proof}

\subsection{Adapting the Meta-Reduction to the Interval Min-Max Regret Version}\label{subsec:adapting-meta-reduction-regret}
For the completion of our main result, we have to show that the result on the restricted interval min-max regret problem is also applicable to the non-restricted interval min-max regret problem.
We do this by re-adapting the cost function of the restricted version back to the standard version based on the LOP problem.
Furthermore, we have to prove that the regret is not affected by this adaptation.
The underlying instance, however, stays the same within the reduction.

In order not to affect the regret, we use an additional property on the SSP reduction from the SSP problem $\Pi_1$ to the (SSP version derived from) LOP problem $\Pi_2$.
Specifically, the reduction needs to include only optimal solutions in the solution set.
Then, we can split the costs of each element into a dominant part, the original costs $d$ from the LOP problem, and a subordinate part, the costs based on the restricted interval min-max regret problem $h$.
Thus, all possible solutions that are considered for the regret are minimizers for $d$, i.e., optimal solutions for problem $\Pi_2$.
Accordingly, we are able to subtract the costs for $d$ and obtain the regret solely based on the costs $h$.
We call these reductions \emph{tight}.

\begin{definition}
    Let $\Pi_1$ be an SSP problem and $\Pi_2 = (\I, \U, \F, d, t)$ be an LOP problem.
    Consider an SSP reduction $(g, (f_I)_{I \in \I})$ from $\Pi_1$ to (the SSP problem derived from) $\Pi_2$.
    The reduction is called tight if for all yes-instances $I_1$ of $\Pi_1$, the corresponding instance $I_2 = g(I_1)$ of $\Pi_2$ with the associated parameter $t := t^{(I_2)}$ and associated cost function $d := d^{(I_2)}$, the following holds:
    $$
    	\{F \in \F(I_2) : d(F) \leq t\} \neq \emptyset \text{ and } \{F \in \F(I_2) : d(F) \leq t - 1\} = \emptyset
    $$
\end{definition}

All SSP reductions (to SSP problems derived from LOP problems) that can be found in \Cref{sec:ssp-reductions} fulfill this definition and are thus tight.

\begin{theorem}
    \label{thm:min-max-regret-reduction}
    For all LOP problems $\Pi$ with the property that the SSP problem derived from them is contained in SSP-NPc by a tight SSP reduction, the interval min-max regret variant \textsc{Interval Min-Max Regret-$\Pi$} is $\Sigma^p_2$-complete.
\end{theorem}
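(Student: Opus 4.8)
The plan is to combine the $\Sigma^p_2$ containment of \cref{lem:regret-containment-sigma-2} with a hardness reduction from \textsc{Restricted Interval Min-Max Regret-$\Pi$}, which is $\Sigma^p_2$-complete by \cref{thm:restricted-min-max-regret-reduction}. So all the work is in the hardness direction, and the idea is to ``re-adapt'' a restricted instance into a genuine LOP-based instance \emph{without} changing the underlying LOP instance, by folding the LOP's own cost function $d^{(x)}$ and threshold $t^{(x)}$ into the uncertainty intervals.

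Concretely, given a restricted instance $(x, (\underline h_e, \overline h_e)_{e \in \U(x)}, q)$ — where $x$ is an LOP instance carrying its own $d^{(x)}, t^{(x)}$, with $\sol(x) = \{S \in \F(x) : d^{(x)}(S) \le t^{(x)}\} \neq \emptyset$ — I would keep the instance $x$ and, using a large integer $N := |\U(x)| + q + 1$, set
$$\underline c_e := N \cdot d^{(x)}(e) + \underline h_e, \qquad \overline c_e := N \cdot d^{(x)}(e) + \overline h_e \qquad (e \in \U(x)),$$
keeping the threshold $T := q$. (We may assume $q < |\U(x)|$, since otherwise the restricted regret, built from $\{0,1\}$-intervals, is trivially at most $q$.) This is polynomial-time computable and satisfies $\underline c_e \le \overline c_e$.

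The crux is to show that the min-max regret of this \textsc{Interval Min-Max Regret-$\Pi$} instance, taken over all of $\F(x)$, equals the restricted min-max regret of the input, taken over $\sol(x)$ with the $h$-intervals. For a fixed $S \in \F(x)$ the worst scenario is the canonical one $c_S$ with $c_S(e) = N d^{(x)}(e) + h_S(e)$, hence $c_S(S') = N d^{(x)}(S') + h_S(S')$ for every $S' \in \F(x)$. Since $N > |\U(x)|$ dominates every $h$-term, two effects arise: (i) if Alice picks a feasible $S$ with $d^{(x)}(S) > t^{(x)}$, then because $\min_{S' \in \F(x)} d^{(x)}(S') \le t^{(x)} < d^{(x)}(S)$ the regret of $S$ is at least $N - |\U(x)| > q$, so Alice never benefits from choosing a non-solution; and (ii) the inner minimum $\min_{S' \in \F(x)} c_S(S')$ is attained at a feasible solution of minimum native cost, which lies in $\sol(x)$, so the common $N$-offset cancels and the regret of an $S \in \sol(x)$ reduces to $h_S(S) - \min_{S'} h_S(S')$, i.e.\ the restricted regret term. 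Putting (i) and (ii) together yields equality of the two min-max regret values, which makes the reduction correct and gives $\Sigma^p_2$-hardness; containment is \cref{lem:regret-containment-sigma-2}, and the polynomial-time bound is immediate.

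The step I expect to be the main obstacle is making (ii) fully rigorous: one must argue that the minimization over the larger family $\F(x)$ collapses to \emph{exactly} the minimization over $\sol(x)$, and not over some proper subset of it (such as only the cost-minimal feasible solutions). This requires simultaneously that no feasible non-solution can be a cheaper witness than any solution under a canonical scenario $c_S$ with $S \in \sol(x)$, and that every solution remains available as a witness after the penalty is applied — i.e.\ tuning the multiplier $N$ and the cancellation of the native-cost offset so that both requirements hold at once. (Routing the reduction through \textsc{Restricted Interval Min-Max Regret-Sat} and the given SSP reduction $\textsc{Satisfiability}\leqSSP\Pi$ before applying the big-$N$ step is one way to make sure the relevant $\Pi$-instances have the property that $\sol(x)$ coincides with the cost-minimal feasible solutions, which is what makes the collapse clean.) Everything else — the $\Sigma^p_2$ membership, the size bounds, and the trivial case $q \ge |\U(x)|$ — is routine bookkeeping.
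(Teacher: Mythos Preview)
Your approach is essentially the paper's: reduce from \textsc{Restricted Interval Min-Max Regret-$\Pi$}, keep the underlying LOP instance $x$, and fold $d^{(x)}$ into the intervals via a large multiplier so that the $d$-part dominates the $h$-part. The paper uses $2(|\U(x)|+1)$ as the multiplier and claims the two min-max regret values are literally equal, establishing the equality via a chain
\[
\min_{\F(x)}\max_{C_I}(\cdots)\;=\;\min_{\sol(x)}\max_{C_I}(\cdots)\;=\;\min_{\sol(x)}\max_{H_I}(\cdots).
\]

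You have put your finger on a genuine soft spot that the paper's write-up glosses over. After the big-$N$ scaling, both the outer and the inner optimizers are forced into the set of \emph{cost-minimal} feasible solutions $\sol^*(x)=\{S\in\F(x):d^{(x)}(S)=\min_{\F}d^{(x)}\}$, not merely into $\sol(x)=\{S\in\F(x):d^{(x)}(S)\le t^{(x)}\}$. The paper asserts that ``both $S,S'$ are minimizers of $d^{(x)}$'' and that the regret ``is always at most $n$ if both $S,S'\in\sol(x)$'', but that second assertion fails whenever $\sol(x)$ contains solutions of different $d^{(x)}$-cost. In that case the transformed interval regret equals the restricted regret taken over $\sol^*(x)$, which need not coincide with the restricted regret over $\sol(x)$; simple two- or three-solution examples show the values can differ. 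So neither the paper's proof nor your direct argument establishes the claimed equality for an \emph{arbitrary} restricted instance.

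Your suggested repair---route the hardness chain through \textsc{Restricted Interval Min-Max Regret-Sat} and the concrete SSP reduction $\textsc{Satisfiability}\leqSSP\Pi$ so that the instances $x=g(\varphi)$ satisfy $\sol(x)=\sol^*(x)$---is the right instinct, but note that this property is \emph{not} a consequence of the SSP framework; it is an extra feature of the specific reductions catalogued in \cref{sec:ssp-reductions} (they all set the threshold $t^{(x)}$ tight, e.g.\ $k=|L|/2+2|C|$ for \textsc{Vertex Cover}, $k=0$ for \textsc{TSP}, $W=P=M$ for \textsc{Knapsack}). To make the argument airtight you must either (a) add ``all solutions have equal $d^{(x)}$-cost'' as an explicit hypothesis on the instances produced by the SSP reduction, or (b) bypass the intermediate restricted-$\Pi$ problem and reduce from the SAT side directly, carrying along the knowledge of the tight threshold. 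Either way, the missing piece is exactly the one you flagged.
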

\begin{proof}
The containment in the class $\Sigma^p_2$ follows from \cref{lem:regret-containment-sigma-2}.
Let an abstract, \NP-complete LOP problem $\Pi = (\I, \F, \U, d, t)$ be given.
The corresponding SSP problem is defined by $\Pi' = (\I, \U, \sol)$ with
$$
    \sol = \{f \in \F(I) \mid d^{(I)}(f) \leq t^{(I)} \}.
$$

Assume we are given an instance $(I, q, (\underline h_e, \overline h_e)_{e \in \U(I)})$ of \textsc{Restricted Interval Min-max Regret-$\Pi'$} with the properties as described in \cref{def:restricted-min-max-regret-problem}.
We define an instance $(I, t_R, (\underline c_e, \overline c_e)_{e \in \U(I)})$ of \textsc{Interval Min-Max Regret-$\Pi$} in the following way:
\begin{itemize}
    \item The underlying instance $I$, as well as the corresponding universe $\U(I)$ remain the same.
    \item   Let $n := |\U(I)|$.
            We define the numbers $\underline c_e, \overline c_e$ for all $e \in \U(I)$ by
            $$ 
                \underline c_e := 2(n + 1)d^{(I)}(e) + \underline h_e, \ \text{ and } \overline c_e := 2(n + 1)d^{(I)}(e) + \overline h_e.
            $$
            We define the resulting interval uncertainty set by $C$.
    \item   The threshold $t_R$ is given by $t_R := q$.
\end{itemize}
This completes our description of the instance of \textsc{Interval Min-Max Regret-$\Pi$}.
(Note that $t_R$ and $t^{(I)}$ are different thresholds).
The transformation is computable in polynomial time, because the underlying instance is unchanged and only numbers encoded in polynomial size ($\Pi$ is in NP) are added to $\underline h_e$ and $\overline h_e$ resulting in $\underline c_e$ and $\overline c_e$.

Again, we prove that for the two instances of \textsc{Restricted Interval Min-max Regret-$\Pi$} and \textsc{Interval Min-Max Regret-$\Pi$} as defined above, it holds that their min-max regrets are equal.
For this, consider the following chain of equalities:

\begin{align*}
    \text{min-max-regret}(I, (\underline c_e, \overline c_e)_e) = &\min_{S \in \F(I)} \ \max_{c \in C} \ \left( c(S) - \min_{S' \in \F(I) } c(S') \right)\\
    = &\min_{S \in \sol(I')} \ \max_{c \in C} \ \left( c(S) - \min_{S' \in \sol(I) } c(S') \right) \\
    = &\min_{S \in \sol(I')} \ \max_{h \in H} \ \left( h(S) - \min_{S' \in \sol(I) } h(S') \right) \\
    = &\ \text{min-max-regret}(I', (\underline h_\ell, \overline h_\ell)_\ell)
\end{align*}

The first equality is by definition.
For the second equality, note that no matter which $c \in C$ is chosen, we always have $c(S') \in [2(n+1)d^{(I)}(S'), 2(n+1)d^{(I)}(S') + n]$ by definition of the coefficients $\underline c_e,\overline c_e$.
Hence any $S'$ minimizing $c(S')$ also minimizes $d^{(I)}(S')$.
Hence by the definition of $\sol(I)$ and by the fact that $\sol(I) \neq \emptyset$, we have that $S' \in \sol(I)$.
By a similar argument, if $S$ optimizes the first minimum, but $S \not\in \sol(I)$, then $ c(S) - \min_{S' \in \F(I) } c(S') \geq 2(n+1) - n = n+1$.
But this is a contradiction, since this last expression is always at most $n$ if both $S,S' \in \sol(I)$ and the SSP reduction is tight.

The third equality follows from the fact that the function $c$ can be decomposed into the part which is contributed by $d^{(I)}$ and the part contributed by $h$.
Since both $S,S'$ are minimizers of $d^{(I)}$, we have that $d^{(I)}(S) = d^{(I)}(S')$ and this part cancels out.
The last equality is again by definition.
\end{proof}

\subsection{The Meta-Reduction is an SSP reduction}
In \cref{sec:restricted-regret-hardness}, we showed that \textsc{Restricted Interval Min-Max Regret SAT} reduces to \textsc{Restricted Interval Min-Max Regret-$\Pi$}.
The goal of this subsection (analogous to \cref{sec:interdiction-is-also-SSP}) is to show the slightly stronger statement that 
this reduction is again an SSP reduction itself.
In order to state this result, it becomes necessary to explain in which way \textsc{Restricted Interval Min-Max Regret-$\Pi$} is interpreted as SSP problem.
This is done the following way.

\begin{observation}\label{lem:regretIsSSP}
    The restricted interval min-max regret variant \textsc{Restricted Interval Min-Max Regret-$\Pi$} of an SSP problem $\Pi$ is an SSP problem.
\end{observation}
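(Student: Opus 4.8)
The plan is to mimic the proof of \cref{obs:interdictionIsSSP}: I would exhibit explicitly a tuple $(X', \U', \sol')$ that realizes \textsc{Restricted Interval Min-Max Regret-$\Pi$} as an SSP problem in the sense of \cref{def:SSP}, and then verify the three defining conditions. Let $\Pi = (X, \U, \sol)$ be the given SSP problem. First I would take as the new instance set
$$
X' = \set{(x, (\underline h_e, \overline h_e)_{e \in \U(x)}, q) : x \in X,\ \underline h_e, \overline h_e \in \bin,\ \underline h_e \leq \overline h_e,\ q \in \Z},
$$
encoded in binary in the obvious way; this is a language, so the first condition of \cref{def:SSP} holds. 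For an instance $x' = (x, (\underline h_e, \overline h_e)_{e}, q) \in X'$ I would set $\U'(x') := \U(x)$, which supplies the universe.

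The key step is the definition of the solution set. Writing $H_I$ for the interval uncertainty set determined by the coefficients $(\underline h_e, \overline h_e)_{e}$, I would set
$$
\sol'(x') := \set{S \in \sol(x) : \max_{c \in H_I}\left(c(S) - \min_{S' \in \sol(x)} c(S')\right) \leq q}.
$$
Since $\sol'(x') \subseteq \sol(x) \subseteq \powerset{\U(x)} = \powerset{\U'(x')}$, the third condition of \cref{def:SSP} is satisfied, so $(X', \U', \sol')$ is a bona fide SSP problem. It then remains only to check that the decision problem attached to this SSP (the set of $x'$ with $\sol'(x') \neq \emptyset$) coincides with the question posed in \cref{def:restricted-min-max-regret-problem}: whenever $\sol(x) \neq \emptyset$, which is the standing assumption there, $\sol'(x') \neq \emptyset$ holds if and only if $\min_{S \in \sol(x)} \max_{c \in H_I}(c(S) - \min_{S' \in \sol(x)} c(S')) \leq q$, i.e. the Yes-instances match exactly.

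I do not expect any real obstacle here: the statement is essentially a bookkeeping exercise in unwinding \cref{def:SSP} and \cref{def:restricted-min-max-regret-problem}, fully parallel to \cref{obs:interdictionIsSSP}. The one point that warrants a second look is the degenerate case $\sol(x) = \emptyset$, where the inner minimum over $\sol(x)$ is vacuous and the maximum regret is undefined; this causes no trouble for the SSP formalism, since one may simply put $\sol'(x') := \emptyset$ in that case (equivalently, restrict $X'$ to instances with $\sol(x) \neq \emptyset$), which matches the convention already adopted in \cref{def:restricted-min-max-regret-problem}. This same construction is also what will later allow one to show, in analogy with \cref{thm:combInterdictionSSPreduction:SSP}, that the meta-reduction of \cref{thm:restricted-min-max-regret-reduction} is itself an SSP reduction.
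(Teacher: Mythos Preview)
Your proposal is correct and takes essentially the same approach as the paper: you define $X'$, $\U'(x') = \U(x)$, and $\sol'(x')$ exactly as the paper does (the paper simply writes $H_I$ in place of the coefficient sequence $(\underline h_e,\overline h_e)_e$), and then conclude. Your extra remarks on verifying the SSP axioms and on the degenerate case $\sol(x)=\emptyset$ are more careful bookkeeping than the paper provides, but do not constitute a different route.
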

\begin{proof}
    Let $\Pi = (\I, \U, \sol)$ be an SSP problem. We denote \textsc{Restricted Interval Min-Max Regret-$\Pi$} $=: (\I',\U',\sol')$.
    First, we set $\U = \U'$.
    Second, we can define the instance set $\I'$ of the restricted min-max regret variant of $\Pi$ by setting
    $$
        \I' = \{(I, H, q) \mid I \in \I, \ H \ \text{is an interval uncertainty set}, \ q \in \Z\}.
    $$
    Furthermore, for an instance $I' = (I, H, q) \in \I'$, we define its solution set as
    $$
        \sol'(I') = \left\{S \in \sol(I) \mid \max_{c \in H} \ \left( c(S) - \min_{S' \in \sol(I) } c(S') \right) \leq q \right\}.
    $$
    Thus, \textsc{Restricted Interval Min-Max Regret-$\Pi$} $= (\I',\U',\sol')$ is an SSP problem.
\end{proof}

It remains to prove that the meta-reduction from \Cref{thm:restricted-min-max-regret-reduction}, which we denote by $g'$, is also an SSP reduction by defining the corresponding function $((f'_{I'})_{I' \in \I'})$ to complete the SSP reduction.

\begin{corollary}
    For all SSP-NP-complete problems $\Pi$, \textsc{Restricted Interval Min-Max Regret-Sat} $\leqSSP$ \textsc{Restricted Interval Min-Max Regret-$\Pi$}.
\end{corollary}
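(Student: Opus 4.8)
The plan is to follow the same route as in \cref{thm:combInterdictionSSPreduction:SSP}: reuse the many-one reduction $g'$ built in the proof of \cref{thm:restricted-min-max-regret-reduction}, and supply only the missing family of injective embeddings $(f'_{x'})_{x' \in X'}$ together with a verification of the solution-preserving property. By \cref{lem:regretIsSSP}, both \textsc{Restricted Interval Min-Max Regret-Sat} and \textsc{Restricted Interval Min-Max Regret-$\Pi$} are SSP problems whose universes coincide with those of their nominal problems. Hence, writing $(g, (f_x)_{x \in X})$ for the SSP reduction $\textsc{Satisfiability} \leqSSP \Pi$ guaranteed by SSP-NP-completeness, the natural choice is to set $f'_{x'} := f_\varphi$ for an instance $x' = (\varphi, q, (\underline h_\ell,\overline h_\ell)_{\ell \in L})$, where $\varphi$ is the underlying CNF formula. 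Since $\U'(x') = L = \U(\varphi)$ and $\U'(g'(x')) = \U(g(\varphi))$, this is a well-defined, poly-time computable injection, and correctness of $g'$ as a many-one reduction is already established in \cref{thm:restricted-min-max-regret-reduction}.

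The core of the argument is a ``pointwise'' strengthening of the chain of equalities in the proof of \cref{thm:restricted-min-max-regret-reduction}. Write $x := g(\varphi)$ and $W := f_\varphi(\U(\varphi))$, and for $S \subseteq \U(x)$ let $\rho_x(S) := \max_{c \in \tilde H_I}\bigl(c(S) - \min_{S' \in \sol(x)} c(S')\bigr)$ be the maximum regret of $S$ in the constructed $\Pi$-instance; define $\rho_\varphi(T)$ analogously over $H_I$ and $\sol(\varphi)$. I would show that for every $S \in \sol(x)$, putting $T := f_\varphi^{-1}(S \cap W) \in \sol(\varphi)$, one has $\rho_x(S) = \rho_\varphi(T)$. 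This follows exactly as in \cref{thm:restricted-min-max-regret-reduction}: the scenarios $\tilde h$ vanish outside $W$, so $\tilde h(S) = \tilde h(S \cap W)$; on $W$ the uncertainty set $\tilde H_I$ mirrors $H_I$ on $f_\varphi^{-1}(W) = L$; and by the solution-preserving property of $(g,(f_x))$ the map $S' \mapsto f_\varphi^{-1}(S' \cap W)$ is \emph{onto} $\sol(\varphi)$, so the inner minimum over $\sol(x)$ translates into the inner minimum over $\sol(\varphi)$. In particular the canonical worst-case scenario for $S$ corresponds under $f_\varphi$ to the canonical worst-case scenario for $T$.

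Granting this claim, the solution-preserving property is immediate, using that by \cref{lem:regretIsSSP} we have $\sol'(x') = \set{T \in \sol(\varphi) : \rho_\varphi(T) \le q}$ and $\sol'(g'(x')) = \set{S \in \sol(x) : \rho_x(S) \le q}$ (recall the threshold is $t = q$). For ``$\subseteq$'': given $T \in \sol'(x')$, the solution-preserving property of $(g,(f_x))$ yields some $S \in \sol(x)$ with $S \cap W = f_\varphi(T)$; then $f_\varphi^{-1}(S \cap W) = T$, so $\rho_x(S) = \rho_\varphi(T) \le q$ and $S \in \sol'(g'(x'))$, whence $f_\varphi(T) = S \cap W = S \cap f_\varphi(\U'(x'))$. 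For ``$\supseteq$'': given $S \in \sol'(g'(x'))$, set $T := f_\varphi^{-1}(S \cap W) \in \sol(\varphi)$; then $\rho_\varphi(T) = \rho_x(S) \le q$, so $T \in \sol'(x')$ and $f_\varphi(T) = S \cap W$. Together this gives $\set{f_\varphi(T) : T \in \sol'(x')} = \set{S \cap f_\varphi(\U'(x')) : S \in \sol'(g'(x'))}$, i.e.\ $(g', (f'_{x'})_{x'\in X'})$ is an SSP reduction, which proves $\textsc{Restricted Interval Min-Max Regret-Sat} \leqSSP \textsc{Restricted Interval Min-Max Regret-}\Pi$.

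The step I expect to require the most care is the pointwise strengthening in the second paragraph: one must check that it is exactly the \emph{surjectivity} (and not injectivity) of $S' \mapsto f_\varphi^{-1}(S'\cap W)$ provided by the solution-preserving property that makes the two inner minima agree, and that restricting the canonical worst-case cost scenario to $W$ and pulling it back along $f_\varphi$ genuinely yields a feasible scenario in $H_I$ — both being consequences of $\tilde H_I$ being supported on $W$ and coinciding there with the pullback of $H_I$. Everything else is bookkeeping already carried out in \cref{thm:restricted-min-max-regret-reduction} and mirrors \cref{thm:combInterdictionSSPreduction:SSP}.
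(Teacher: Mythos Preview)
Your proposal is correct and follows essentially the same approach as the paper: reuse $g'$ from \cref{thm:restricted-min-max-regret-reduction}, set $f'_{x'} := f_\varphi$, and argue that the regret of corresponding solutions coincides because the uncertainty vanishes outside $W = f_\varphi(\U(\varphi))$ and the nominal solution-preserving property transfers the inner minimum. Your write-up is in fact more careful than the paper's own proof, which merely asserts that ``the regret of a solution stays the same when transferring it from \textsc{Sat} to $\Pi$'' without spelling out the pointwise equality $\rho_x(S) = \rho_\varphi(T)$ or the two inclusions of the SSP equation.
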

\begin{proof}
    The underlying instance of the reduction $(g, (f_I)_{I \in \I})$ from \textsc{Sat} to $\Pi$ remains the same in the reduction of \Cref{thm:restricted-min-max-regret-reduction}. Note that this reduction transforms instances $I$ of $\textsc{Sat}$ into instances $g(I)$ of $\Pi$.
    We now want to show that the reduction $(g', (f'_{I'})_{I' \in \I'})$ with some additional function $((f'_{I'})_{I' \in \I'})$ from \textsc{Restricted Interval Min-Max Regret-Sat} to \textsc{Restricted Interval Min-Max Regret-$\Pi$} indeed is an SSP reduction.
    For this, we set $f'_{I'} := f_I$.
    Recall that if $e \notin f(\U(\varphi))$, the costs defined by the reduction are $\underline h_e = \overline h_e = 0$.
    Thus, no restriction is imposed by these elements on any solution.
    Now, we consider the elements that are part of $f_I(\U(\varphi)) = f'_{I'}(\U(\varphi))$.
    (Note that the instance remains the same.)
    If $e \in f(\U(\varphi))$, the costs are set to $\underline h_e = \underline h_\ell$ and $\overline h_e = \overline h_\ell$.
    The original reduction $\textsc{Sat} \leqSSP \Pi$ has the solution preserving property.
    Hence the solution sets $\sol(I)$ and $\sol(g(I))$ correspond to each other.
    Furthermore, we have that $\sol'(I')$ are exactly those elements of $\sol(I)$ with regret at most $q$, and $\sol'(g'(I'))$ are exactly those elements of $\sol(g(I))$ with regret at most $q$.
    At last, the regret of a solution stays the same when transferring it from $\textsc{Sat}$ to $\Pi$.
    Consequently, we conclude that the reduction $(g', (f'_{I'})_{I' \in \I'})$ is an SSP reduction.
 \end{proof}
\section{Two-Stage Adjustable Problems}
\label{sec:two-stage}
In this section, we consider two-stage adjustable robust optimization problems with discrete budgeted uncertainty.
We show that for every single LOP problem, whose derived SSP problem is SSP-NP-complete, that the corresponding two-stage adjustable problem is $\Sigma^p_3$-complete.
We choose to make the following definition of a two-stage adjustable problem, where in the first stage an arbitrary set $S_1$ can be selected, but needs to be completed to a feasible set in the second stage (i.e. $S_1 \cup S_2 \in \F(I)$).
We remark that the formal definition mentions the cost function $d$ and the cost threshold $t$ of the LOP.
While these are not necessary for the definition (the function $d$ and the threshold $t$ get substituted with new ones), they are necessary for the proofs.

\begin{definition}[Two-Stage Adjustable Problem]
\label{def:two-stage}
    Let an LOP problem $\Pi = (\I, \U, \F, d, t)$ be given.
    The two-stage adjustable problem associated to $\Pi$ is denoted by \textsc{Two-Stage Adjustable-$\Pi$} and defined as follows:
    The input is an instance $I \in \I$ together with three cost functions: the first stage cost function $c_1: \U(I) \rightarrow \Z$ and the second stage cost functions $\underline c_2: \U(I) \rightarrow \Z$ and $\overline c_2: \U(I) \rightarrow \Z$, a threshold $t_{TS} \in \Z$ and an uncertainty parameter $\Gamma \in \Z$.
    Then, the uncertainty set is defined by all cost functions such that at most $\Gamma$ elements $u \in \U(I)$ have costs of $\overline c_2(u)$, while all other have $\underline c_2(u)$:
    $$
        C_\Gamma := \{c_2 \mid \forall u \in \U(I) : c_2(u) = \underline c_2(u) + \delta_u(\overline c_2(u) - \underline c_2(u)), \ \delta_u \in \bin, \sum_{u \in \U(I)} \delta_u \leq \Gamma \}.
    $$
    The question is whether
    $$
        \min_{S_1 \subseteq \U(I)} \max_{c_2 \in C_\Gamma} \min_{\substack{S_2 \subseteq \U(I) \setminus S_1\\ S_1 \cup S_2 \in \F(I)}} c_1(S_1) + c_2(S_2) \leq t_{TS}.
    $$
\end{definition}

We remark that this definition only applies to LOP problems, i.e.\ not to every problem in \cref{sec:ssp-reductions}.
This is simply due to the fact that it makes use of the concept of the feasible solutions $\F(I)$, and not every SSP problem has a set $\F(I)$.
However, in the next subsection, we introduce a slight variant of the two-stage problem which is defined for every SSP problem and also show $\Sigma^p_3$-completeness for that variant.

Two-stage adjustable problems can be also viewed as a two-player game of an $\exists$-player that wants to find a feasible solution against an adversary (the $\forall$-player).
The $\exists$-player is in control of both \emph{min} and thus is able to choose the partial solutions $S_1$ and $S_2$.
On the other hand, the adversary is in control of the \emph{max} and is thus able to choose the cost function $c_2$ from the set $C_\Gamma$.
Consequently, we can reformulate the question to 
$$
    \exists S_1 \subseteq \U(I) : \forall c_2 \in C_\Gamma : \exists S_2 \subseteq \U(I) \setminus S_1 : S_1 \cup S_2 \in \F(I) \ \textit{and} \ c_1(S_1) + c_2(S_2) \leq t_{TS}.
$$

Again, we aim to locate the complexity of two-stage adjustable problems based on LOP problems exactly.
By the game theoretical perspective, it is intuitive to see the containment in $\Sigma^p_3$ for all problems that are polynomial-time checkable.
Thus, we derive the following lemma and prove this intuition to be true.

\begin{lemma}\label{thm:two-stageInSigma3}
    If $\Pi = (\I, \F, \U, d, t)$ is an LOP problem such that the derived SSP problem is in SSP-\NP, then $\textsc{Two-Stage-}\Pi$ is in $\Sigma^p_3$.
\end{lemma}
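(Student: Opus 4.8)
The plan is to apply the quantifier definition of $\Sigma^p_3$ directly to the $\exists\forall\exists$ reformulation of the two-stage question stated just above the lemma. I would exhibit a polynomial-time verifier $V$ and polynomial length bounds $m_1,m_2,m_3 = \poly(|x|)$ such that an instance $(x, c_1, \underline c_2, \overline c_2, t_{TS}, \Gamma)$ is a Yes-instance if and only if $\exists y_1 \in \{0,1\}^{m_1}\ \forall y_2 \in \{0,1\}^{m_2}\ \exists y_3 \in \{0,1\}^{m_3}: V(x,y_1,y_2,y_3) = 1$. Here $y_1$ encodes a first-stage set $S_1 \subseteq \U(x)$, the string $y_2$ encodes a vector $(\delta_u)_{u \in \U(x)} \in \{0,1\}^{|\U(x)|}$ which, via the formula in \cref{def:two-stage}, determines a candidate scenario $c_2$, and $y_3$ encodes a second-stage set $S_2 \subseteq \U(x)$. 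Since $|\U(x)| = \poly(|x|)$, all three strings have polynomial length.

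The verifier $V(x,y_1,y_2,y_3)$ works as follows: it decodes $S_1$, $(\delta_u)_u$, and $S_2$; if $\sum_{u \in \U(x)} \delta_u > \Gamma$ --- i.e.\ $y_2$ does not encode a legal scenario $c_2 \in C_\Gamma$ --- it immediately outputs $1$; otherwise it forms $c_2(u) := \underline c_2(u) + \delta_u(\overline c_2(u) - \underline c_2(u))$ and outputs $1$ if and only if all of (i) $S_2 \cap S_1 = \emptyset$, (ii) $S_1 \cup S_2 \in \F(x)$, and (iii) $c_1(S_1) + c_2(S_2) \leq t_{TS}$ hold. All steps run in polynomial time: decoding and the integer arithmetic on the binary-encoded coefficients are clearly polynomial, and checking $S_1 \cup S_2 \in \F(x)$ is polynomial by the standing assumption (\cref{def:LOSPP}) that feasibility is poly-time decidable for LOP problems in \NP.

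For correctness one reads the equivalence off the reformulation $\exists S_1\ \forall c_2 \in C_\Gamma\ \exists S_2$ stated above: a legal $y_2$ contributes exactly the requirement that there be a feasible completion $S_2 \subseteq \U(x)\setminus S_1$ of cost $c_1(S_1) + c_2(S_2) \leq t_{TS}$, and if no feasible completion exists for some legal $c_2$ then the inner minimum is $+\infty > t_{TS}$, matching the fact that then no $y_3$ makes $V = 1$. The only point that needs care --- and the only thing resembling an obstacle --- is the semantics of the middle $\forall$: an illegal encoding $y_2$ of a scenario must not be allowed to spuriously turn a Yes-instance into a No-instance, which is exactly why $V$ is defined to output $1$ on such $y_2$ (so that the inner $\exists y_3$ is trivially satisfied there). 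With this convention the displayed $\exists\forall\exists$ statement is equivalent to $\min_{S_1}\max_{c_2 \in C_\Gamma}\min_{S_2}\,(c_1(S_1)+c_2(S_2)) \leq t_{TS}$, and hence $\textsc{Two-Stage-}\Pi \in \Sigma^p_3$.
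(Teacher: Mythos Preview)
Your proof is correct and follows essentially the same approach as the paper: encode $S_1$, the scenario $c_2$, and $S_2$ in the three quantifier blocks and let the verifier check disjointness, feasibility (polynomial by the LOP-in-\NP\ assumption), and the cost threshold. You are in fact slightly more careful than the paper's version, since you explicitly handle the case where $y_2$ fails to encode a legal element of $C_\Gamma$ by having $V$ output~$1$; the paper glosses over this point.
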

\begin{proof}
    We provide a polynomial time algorithm that verifies a specific solution $y_1, y_2, y_3$ of polynomial size for instance $I$ such that
    $$
        I \in L \ \Leftrightarrow \ \exists y_1 \in \{0,1\}^{m_1} \ \forall y_2 \in \{0,1\}^{m_2} \ \exists y_3 \in \{0,1\}^{m_3} : V(I, y_1, y_2, y_3) = 1.
    $$
    With the $\exists$-quantified $y_1$, we encode the first partial solution to $S_1 \subseteq \U(I)$.
    Because $|\U(I)| \leq poly(|I|)$, the encoding size of $y_1$ is polynomially bounded in the input size of $\Pi$.
    Next, we encode all cost functions $c_2 \in C_\Gamma$ over the $\forall$-quantified $y_2$.
    For this, we encode which of the at most $\Gamma$ elements are chosen to have costs $\overline c_2$.
    This is at most polynomial in the input size.
    At last, we encode the second partial solution $S_2 \subseteq \U(I)$ with the help of the $\exists$-quantified $y_3$.
    Again, this is at most polynomial in the input size as before.

    Now, $V$ has to check whether $S_2 \subseteq \U(I) \setminus S_1$,  $S_1 \cup S_2 \in \F(I)$ \textit{and} $c_1(S_1) + c_2(S_2) \leq t_{TS}$.
    All of these checks can be done in polynomial time, where $S_1 \cup S_2 \in \F(I)$ \textit{and} $c_1(S_1) + c_2(S_2) \leq t_{TS}$ can be checked in polynomial time because $\Pi \in \NP$ and we have assumed that feasible solutions can be checked in polynomial time.
    It follows that $\textsc{Two-Stage Adjustable-}\Pi$ is in $\Sigma^p_3$.
\end{proof}

\subsection{Combinatorial Two-Stage Adjustable Problems}
To show the hardness of two-stage adjustable problems, we introduce a new problem which we call the \emph{combinatorial version} of two-stage adjustable problems.
In this problem, the cost function is substituted by a set of blockable elements.
This combinatorial version is slightly more specific than the cost version and allows for a more precise reduction in the SSP framework.
The hardness of the combinatorial version also implies the hardness of the cost version as we show later.
For this, we redefine the cost functions into sets of elements:
The set of first stage elements $U_1 \subseteq \U(I)$ and the set of second stage elements $U_2 = \U(I) \setminus U_1$.
The set of blockable elements $B \subseteq U_2$ is a subset of the second stage elements.

We define the combinatorial version on basis of an SSP problem, which is no restriction because LOP problems are special SSP problems, which contain more structure.
On the contrary, SSP reductions allow for a more succinct presentation of the hardness proof and its preliminaries.

\begin{definition}[Combinatorial Two-Stage Adjustable Problem]
\label{def:comb-two-stage}
    Let an SSP problem $\Pi = (\I, \U, \sol)$ be given.
    The combinatorial two-stage adjustable problem associated to $\Pi$ is denoted by \textsc{Comb. Two-Stage Adjustable-$\Pi$} and defined as follows:
    The input is an instance $I \in \I$ together with three sets $U_1 \subseteq \U(I)$, $U_2 = \U(I) \setminus U_1$ and $B \subseteq U_2$, and an uncertainty parameter $\Gamma \in \N_0$.
    The question is whether
    \begin{align*}
        & \exists S_1 \subseteq U_1 : \forall B' \subseteq B \ \text{with} \ |B'| \leq \Gamma : \exists S_2 \subseteq U_2 : S_1 \cup S_2 \in \sol(I) \ \text{and} \ S_2 \cap B' = \emptyset.
    \end{align*}
\end{definition}

As the basis for our meta-reduction, we use the canonical \textsc{Satisfiability} problem.
\cref{def:comb-two-stage} applied to $\Pi = \textsc{Satisfiability}$ yields the following:

\begin{definition}[\textsc{Combinatorial Two-Stage Adjustable-Satisfiability}]
The problem \textsc{Comb. Two-Stage Adjustable-Sat} is defined as follows.
The input is a CNF with clauses $C$ and literal set $L$, sets $U_1 \subseteq L$, $U_2 = L \setminus U_1$ and $B \subseteq U_2$, and an uncertainty parameter $\Gamma \in \N_0$.
Then the question is whether there is $S_1 \subseteq U_1$ such that for all $B' \subseteq B$ with $|B'| \leq \Gamma$, there is $S_2 \subseteq U_2$, such that $|(S_1 \cup S_2) \cap c_j| \geq 1$ for all $c_j \in C$, $|(S_1 \cup S_2) \cap \{x_i, \overline x_i\}| = 1$ for all pairs $x_i, \overline x_i \in L$, and $S_2 \cap B' = \emptyset$.
\end{definition}

\subsection{A Meta-Reduction for Combinatorial Two-Stage Adjustable Problems}
Now, we show the $\Sigma^p_3$-hardness of the canonical \textsc{Satisfiability} problem, \textsc{Comb. Two-Stage Adjustable-Satisfiability}.

\begin{lemma}
    \textsc{Combinatorial Two-Stage Adjustable-Satisfiability} is $\Sigma^p_3$-hard.
\end{lemma}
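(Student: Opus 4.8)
The plan is to mirror the $\Sigma^p_2$-hardness proof of \textsc{Comb. Interdiction-Sat} (\cref{lem:interdiction-sat-sigma-2-complete}), lifting it by one quantifier. I would reduce from \textsc{$\exists\forall\exists$-Sat}, i.e.\ the problem of deciding, for a CNF formula $\varphi(X,Y,Z)$ with partitioned variables, whether $\exists X\,\forall Y\,\exists Z\;\varphi(X,Y,Z)$; this is $\Sigma^p_3$-complete \cite{DBLP:journals/tcs/Stockmeyer76}. The three layers of the combinatorial two-stage game will correspond to the three quantifier blocks: the first-stage set $S_1$ will encode the $X$-assignment, the adversarial blocker $B'$ will encode the $Y$-assignment, and the second-stage set $S_2$ will encode the $Z$-assignment together with all forced auxiliary literals.

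First I would fix the skeleton of the instance. I put all literals of the $X$-variables (and nothing else) into $U_1$; since $S_1\cup S_2$ must lie in $\sol(\varphi)$ (a set of literals containing exactly one literal per variable and hitting every clause) and $S_2\subseteq U_2$ cannot supply any $X$-literal, $S_1$ is forced to select exactly one literal per $X$-variable, i.e.\ a full assignment $\alpha_X$. All remaining literals go into $U_2$. For each $Y$-variable $y_j$ I introduce two fresh variables $y^t_j,y^f_j$, declare the blockable set to be $B:=\set{y^t_j,y^f_j : j}$, substitute $y_j\mapsto y^t_j$ and $\overline y_j\mapsto y^f_j$ throughout $\varphi$, and set $\Gamma:=|Y|$. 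The literals of $Z$ stay unblockable in $U_2$, so $S_2$ freely picks a full assignment $\alpha_Z$ on them. With $\Gamma=|Y|$, a blocker that blocks exactly one of $\set{y^t_j,y^f_j}$ for every $j$ (``honest, full-budget'') encodes a $Y$-assignment $\alpha_Y$; blocking $y^f_j$ corresponds to $\alpha_Y(y_j)=1$ because $S_2$ can then no longer use $y^f_j$ and making $y^t_j$ true is never harmful (and symmetrically for $y^t_j$).

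The heart of the construction is a cheat-detection gadget, adapted from \cref{lem:interdiction-sat-sigma-2-complete} but now punishing the \emph{adversary} for over-blocking. Introducing fresh variables $s,s_1,\dots,s_{|Y|}$, I would append the literal $s$ to every (substituted) clause of $\varphi$ and add the clauses $(y^t_j\lor\overline{s_j})$ and $(y^f_j\lor\overline{s_j})$ for all $j$ together with $(\overline s\lor s_1\lor\dots\lor s_{|Y|})$. Then $S_2$ can switch $s$ on only if it can make both $y^t_j$ and $y^f_j$ true for some $j$, which requires that $j$ be left unblocked; and a pigeonhole count ($\sum_j |B'\cap\set{y^t_j,y^f_j}|=|B'|\le\Gamma=|Y|$) shows that a blocker leaves some index unblocked \emph{unless} it blocks exactly one literal per index. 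Hence a dishonest adversary hands $S_2$ a trivial satisfying assignment (so the first player wins that branch), whereas against an honest blocker the gadget is inert, $s$ must be false, and $S_1\cup S_2$ genuinely encodes a triple satisfying $\varphi$. I would then check both implications: ($\Rightarrow$) given a winning $X$-assignment, let $S_1$ encode it and answer every adversary move either through the gadget (dishonest case) or by a $Z$-assignment witnessing $\forall Y\exists Z\,\varphi$ (honest case); ($\Leftarrow$) extract $\alpha_X$ from any winning $S_1$, feed the adversary all honest full-budget blockers, read $\alpha_Z$ off $S_2$, and verify that every satisfied clause of the substituted formula corresponds to a clause of $\varphi$ satisfied by $(\alpha_X,\alpha_Y,\alpha_Z)$. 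Polynomial time is clear, since only a linear number of new variables and clauses are added.

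I expect the main obstacle to be getting the cheat-detection gadget exactly right: over-blocking is the adversary's natural attempt to wreck the first player, yet it must instead hand the first player a free win, while honest play must leave the gadget completely inert. In particular $s$ must not be freely settable when the adversary is honest, which forces the activation chain $s\leftarrow s_j\leftarrow(y^t_j\wedge y^f_j)$ to route through a pair of blockable literals, and it is why I would deliberately \emph{omit} the clause $(y^t_j\lor y^f_j)$: including it would make the over-blocked index itself unsatisfiable for $S_2$ and thus reward the adversary. Checking that the gadget behaves correctly across all of the adversary's deviations (over-blocking, under-blocking, and mixtures) simultaneously, without interfering with genuine $\varphi$-satisfiability in the honest case, is the delicate part; the remaining work — forcing $S_1$ to commit to $\alpha_X$ and translating between $\varphi$ and the substituted formula — is routine.
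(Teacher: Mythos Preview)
Your proposal is correct and follows essentially the same approach as the paper: reduce from \textsc{$\exists\forall\exists$CNF-Sat}, let $U_1$ be the $X$-literals, introduce substitution variables $y^t_j,y^f_j$ for the $Y$-block with $B=\{y^t_j,y^f_j\}$ and $\Gamma=|Y|$, and attach the same cheat-detection gadget (the ``$s,s_1,\dots,s_{|Y|}$'' chain) so that an adversary who fails to block exactly one literal per $Y$-index hands the first player a trivial win via the pigeonhole argument. Your write-up is in fact somewhat more explicit than the paper's sketch, in particular your remark about deliberately omitting the clause $(y^t_j\lor y^f_j)$ and your separate treatment of over- and under-blocking.
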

\begin{proof}
    The containment in the class $\Sigma^p_3$ is analogous to \cref{thm:two-stageInSigma3}.
    For the hardness, we reduce \textsc{$\exists\forall\exists$CNF-Sat}, which is $\Sigma^p_3$-hard as shown by Stockmeyer \cite{DBLP:journals/tcs/Stockmeyer76}, to \textsc{Comb. Two-Stage Adjustable-Sat}.
    Let $\exists X \forall Y \exists Z \varphi(X, Y, Z)$ be the \textsc{$\exists\forall\exists$CNF-Sat} instance.
    We transform this into an instance $(\exists X' \psi(X'), U_1, U_2, B, \Gamma)$ of the combinatorial two-stage adjustable satisfiability problem, with the set of first stage elements $U_1$, set of second stage elements $U_2$, set of blockable elements $B$, and uncertainty parameter $\Gamma$.

    Again, we use the idea of Goerigk, Lendl and Wulf \cite[Theorem 1]{DBLP:journals/dam/GoerigkLW24} for their reduction for \textsc{Robust Adjustable Sat} to model the $\forall$-quantifier with $\Gamma$-uncertainty. 
    The proof is very similar to the proof of \cref{lem:interdiction-sat-sigma-2-complete}. 
    The difference this time is that the two-stage problem can be understood as a game between Alice and Bob, 
    where Alice selects the literals from $U_1$, Bob selects some blocker $B' \subseteq B$ with $|B'| \leq \Gamma$, and Alice responds by selecting literals from $U_2 \setminus B'$.
    Another difference is that we start with a CNF formula.
    Analogously to \cref{lem:interdiction-sat-sigma-2-complete}, for $n := |Y|$, we introduce new variables $Y^t = \set{y^t_1, \dots, y^t_n}$ and $Y^f = \set{y^f_1, \dots, y^f_n}$. 
    We say that Bob plays honestly, if he chooses a blocker $B'$ with $|B' \cap \set{y^t_i, y^f_i}| = 1$ for all $i=1,\dots,n$. 
    Otherwise we say that Bob cheats.

    \begin{description}
        \item[Description of the instance] Given an instance $\exists X \forall Y \exists Z \varphi(X, Y, Z)$ of \textsc{$\exists \forall \exists$CNF-Sat}, we define an instance of the two-stage adjustable satisfiability problem the following way: 
        We introduce new variables $Y^t, Y^f$ as explained above. A new formula $\varphi'$ is created from $\varphi$ in terms of a literal substitution process. 
        For each $i \in \set{1,\dots,n}$, each occurrence of some literal $y_i$ is substituted with the positive literal $y^t_i$. 
        Each occurrence of some literal $\overline y_i$ is substituted with the positive literal $y^f_i$. 
        We introduce a new variable $s$ and obtain a new formula $\varphi''$ by adding $s$ to every clause of $\varphi'$, that is $\varphi'' \equiv \varphi' \lor s$.
        We introduce new variables $s_1, \dots, s_n$. we let $W := \set{s} \cup \set{s_1, \dots, s_n}$. The formula $\psi$ is then defined by
        \begin{align*}
            \psi(X, Y^t, Y^f, Z, W) = \ &\varphi''(X, Y^t, Y^f, Z, s) \\
            &\land \left( \bigwedge\limits_{i=1}^n (y^t_i \lor \overline s_i)\land (y^f_i \lor \overline s_i) \right) \land (\overline s \lor s_1 \lor s_s \lor \dots \lor s_n).
        \end{align*}
        Finally, we define $L$ to be the literal set of $\psi$, i.e. the set of positive and negative literals corresponding to the variables $X, Y^t, Y^f, Z, W$. We let $U_1 := X \cup \overline X$, and $U_2 := L \setminus U_1$, and $B := Y^t \cup Y^f$, and $\Gamma = n$. This completes the description of the two-stage SAT instance. 
        \item[Correctness] Since the proof is very similar to \cref{lem:interdiction-sat-sigma-2-complete}, we only sketch the high-level argument. 

        Observe that in an optimal game, Bob can not cheat. On the other hand, if Bob plays honestly, Alice can not take any of the positive literals $s, s_1, \dots, s_n$ into her second-stage solution $S_2$. In other words, the cheat-detection gadget can be used if and only if Bob cheats.
        
        First, assume that $\varphi(X, Y, Z)$ is a Yes-instance of \textsc{$\exists \forall \exists$CNF-Sat}.
        This means that $\exists \alpha_X \forall \alpha_Y \exists \alpha_Z \varphi(\alpha_X, \alpha_Y, \alpha_Z)$. 
        Here, $\alpha_X : X \to \bin$, $\alpha_Y : Y \to \bin$ and $\alpha_Z : Z \to \bin$ denote assignments. 
        We claim that Alice has a winning strategy for the two-stage SAT game. 
        In fact, her winning strategy is as follows: In the first step, on the set $U_1$, she chooses literals which correspond to $\alpha_X$. 
        Now there are two cases: If Bob cheats, Alice can win because of the cheat-detection gadget. 
        If Bob plays honestly, Alice can win because $\forall \alpha_Y \exists \alpha_Z \varphi(\alpha_X, \alpha_Y, \alpha_Z)$.

        Second, assume that $\varphi(X, Y, Z)$ is a No-instance of \textsc{$\exists \forall \exists$CNF-Sat}. Consider the first-stage choice $S_1 \subseteq U_1$. 
        Note that Alice is forced to take a set $S_1 \subseteq U_1$ such that $|S_1 \cap \set{x_i, \overline x_i}| = 1$ for all $i$, because otherwise it is impossible that $S_1 \cup S_2 \in \sol(\psi)$. 
        Hence the choice $S_1$ of Alice corresponds to some assignment $\alpha_X : X \to \bin$. 
        Then, because $\varphi$ is a No-Instance, this means that with respect to this assignment $\alpha_X$, we have $\exists \alpha_Y \forall \alpha_Z \neg \varphi(\alpha_X, \alpha_Y, \alpha_Z)$. This means that Bob can play honestly according to $\alpha_Y$ and have a winning strategy. In total, we have shown that $\exists \forall \exists$CNF-SAT reduces to the \textsc{Comb. Two-Stage Adjustable-Sat}.
        \item[Polynomial Time] The transformation is doable in polynomial time because for each variable a constant number of variables and clauses is introduced.
    \end{description}
\end{proof}

Now, we are able to develop the meta-reduction for two-stage problems based on SSP-NP-complete problems $\Pi$. Precisely, we show a meta-reduction from \textsc{Combinatorial Two-Stage Adjustable-Satisfiability} to \textsc{Combinatorial Two-Stage Adjustable-$\Pi$}.

\begin{theorem}\label{thm:two-stageSigma3Hard}
    For all SSP-NP-complete problems $\Pi$, the combinatorial two-stage adjustable variant $\textsc{Comb. Two-Stage Adjustable-}\Pi$ is $\Sigma^p_3$-complete.
\end{theorem}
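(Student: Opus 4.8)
The plan is to mirror the strategy used for \cref{thm:interdictionSigma2Hard}, one level higher in the polynomial hierarchy. Containment in $\Sigma^p_3$ is already handled by \cref{thm:two-stageInSigma3}, since every SSP-NP-complete problem lies in SSP-NP and hence, viewed as an SSP problem, in \NP. For hardness I would start from the fact, established in the preceding lemma, that \textsc{Comb. Two-Stage Adjustable-Sat} is $\Sigma^p_3$-hard, together with the fact that SSP-NP-completeness of $\Pi = (X,\U,\sol)$ provides an SSP reduction $(g,(f_x)_{x \in X})$ from \textsc{Satisfiability} to $\Pi$. The goal is to \emph{upgrade} this SSP reduction to a polynomial-time reduction $g'$ from \textsc{Comb. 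Two-Stage Adjustable-Sat} to \textsc{Comb. Two-Stage Adjustable-}$\Pi$, exactly in the spirit of \cref{fig:interdiction-meta-reduction}.

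Concretely, given an instance $(\varphi, U_1, U_2, B, \Gamma)$ of \textsc{Comb. Two-Stage Adjustable-Sat} with $x := g(\varphi)$, I would define the image instance to have underlying instance $x$, first-stage set $U_1^{\mathrm{new}} := f_\varphi(U_1)$, blockable set $B^{\mathrm{new}} := f_\varphi(B)$, second-stage set $U_2^{\mathrm{new}} := \U(x) \setminus U_1^{\mathrm{new}}$, and the same uncertainty parameter $\Gamma$. The crucial design choice is that the ``extra'' universe elements $\U(x) \setminus f_\varphi(\U(\varphi))$ that the SSP reduction introduces all end up in $U_2^{\mathrm{new}}$ and none of them is blockable; they are available to Alice in the second stage but invisible to the adversary. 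Since $g$ and $f_\varphi$ are polynomial-time computable and $|\U(x)| = \poly(|x|)$, the map $g'$ is polynomial-time computable.

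For correctness I would show that the two min-max-min games have the same Yes/No value. Because $f_\varphi$ is injective, first-stage moves $S_1 \subseteq U_1$ of the SAT game correspond bijectively to moves $f_\varphi(S_1) \subseteq U_1^{\mathrm{new}}$, and adversary moves $B' \subseteq B$ with $|B'| \le \Gamma$ correspond bijectively, via $f_\varphi$ and $f_\varphi^{-1}$, to moves $f_\varphi(B') \subseteq B^{\mathrm{new}}$ of the same size. For the innermost $\exists$-quantifier I would invoke the solution-preserving equation $\set{f_\varphi(S) : S \in \sol(\varphi)} = \set{S' \cap f_\varphi(\U(\varphi)) : S' \in \sol(x)}$: a winning second-stage response $S_2^{\mathrm{SAT}}$ with $S_1 \cup S_2^{\mathrm{SAT}} \in \sol(\varphi)$ lifts to some $S' \in \sol(x)$ that agrees with $f_\varphi(S_1 \cup S_2^{\mathrm{SAT}})$ on $f_\varphi(\U(\varphi))$; Alice then plays $S_2 := S' \setminus f_\varphi(S_1) \subseteq U_2^{\mathrm{new}}$, and since $f_\varphi(B') \subseteq f_\varphi(\U(\varphi))$ and the extra elements lie outside $B^{\mathrm{new}}$, the condition $S_2 \cap f_\varphi(B') = \emptyset$ is equivalent to $S_2^{\mathrm{SAT}} \cap B' = \emptyset$. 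The converse direction is symmetric: any strategy for Alice in the $\Pi$-instance restricts along $f_\varphi^{-1}$ to a strategy in the SAT-instance, again by the solution-preserving equation. Hence the two instances are equivalent, which together with containment yields $\Sigma^p_3$-completeness.

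The step I expect to be the main obstacle is the careful bookkeeping around the extra universe elements $\U(x) \setminus f_\varphi(\U(\varphi))$: one must verify that routing them into the second stage and keeping them unblockable grants Alice exactly the freedom the solution-preserving property guarantees (enough to complete any lifted SAT solution to a genuine member of $\sol(x)$) while giving her no illegitimate extra power in the first stage and leaving the adversary's position no weaker. Everything else is a faithful transcription of the argument behind \cref{thm:interdictionSigma2Hard}, now carrying through the additional $\exists$-stage for $S_2$. As in \cref{thm:combInterdictionSSPreduction:SSP}, one can moreover note that $g'$ together with $f'_{x'} := f_\varphi$ is itself an SSP reduction, though this is not needed for the statement.
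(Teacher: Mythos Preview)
Your proposal is correct and follows essentially the same approach as the paper: define $g'(\varphi,U_1,U_2,B,\Gamma) = (g(\varphi),\, f_\varphi(U_1),\, \U(g(\varphi))\setminus f_\varphi(U_1),\, f_\varphi(B),\, \Gamma)$ and argue equivalence via the solution-preserving property, with containment handled analogously to \cref{thm:two-stageInSigma3}. Your correctness argument is in fact more explicit than the paper's (which simply states that the one-to-one correspondence follows from $B_{\text{new}} \subseteq f_x(\U(x))$ together with the solution-preserving property), and your identification of the ``extra universe elements go to $U_2^{\text{new}}$ and are unblockable'' as the key design choice is exactly right.
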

\begin{proof}
    The containment in the class $\Sigma^p_3$ is analogous to \cref{thm:two-stageInSigma3}.
    For the hardness, we again use that there is an SSP reduction $(g, (f_I)_{I \in \I})$ from \textsc{Sat} to $\Pi$ because $\Pi$ is an SSP-NP-complete problem.
    We design a reduction $g'$ from \textsc{Comb. Two-Stage Adjustable-Sat} to $\textsc{Comb. Two-Stage Adjustable-}\Pi$ by extending the existing reduction $(g, (f_I)_{I \in \I})$ as depicted in \Cref{fig:two-stage-meta-reduction}.
    As in the reduction for interdiction problems, the underlying reduction from \textsc{Sat} to $\Pi$ remains and we construct the corresponding combinatorial two-stage instance.
    
    \begin{figure}[!ht]
        \centering
        \scalebox{1}{
            \begin{tikzpicture}
                \node[] (Sat) at (0,0) {\textsc{Sat}};
                \node[] (Pi) at (6,0) {$\Pi$};
                \node[] (TSSat) at (-1.5,-2) {\textsc{Comb. Two-Stage Adj.-Sat}};
                \node[] (TSPi) at (7.5,-2) {\textsc{Comb. Two-Stage Adj.-$\Pi$}};
                \draw[->] (Sat) to node[above] {$(g,(f_I)_{I \in \I})$} (Pi);
                \draw[->] (TSSat) to node[above] {$g'$} (TSPi);
                \draw[->] (0,-0.5) to node[left] {$I' = (I, U_1 , U_2, B, \Gamma)$} (0,-1.5);
                \draw[->] (6,-0.5) to (6,-1.5);
            \end{tikzpicture}
        }
        \caption{The fact that $\textsc{Sat}$ is SSP reducible to $\Pi$ induces a reduction from $\textsc{Comb. Two-Stage Adjustable-Sat}$ to $\textsc{Comb. Two-Stage Adjustable-}\Pi$.}
        \label{fig:two-stage-meta-reduction}
    \end{figure}

    Let $I' = (I, U_1, U_2, B, \Gamma)$ be the instance of $\textsc{Comb. Two-Stage Adjustable-Sat}$ with the corresponding \textsc{Sat} instance $I \in \I$.
    We denote the set of first stage elements by $U_1$, the set of second stage elements by $U_2$ and the set of blockable elements by $B$.
    Furthermore, we denote the uncertainty parameter by $\Gamma$.
    The reduction $g'$ is defined by $g'(I') = (g(I), U'_1, U'_2, B_\text{new}, \Gamma)$.
    The new sets are defined by
    $$
        U'_1 = f_{I}(U_1), \ U_2' = \U(g(I)) \setminus U_1', \ B_\text{new} = f_{I}(B).
    $$
    The uncertainty parameter $\Gamma$ remains the same.
    This completes the description of the reduction.
    We claim that we have a direct one-to-one correspondence between the solutions of \textsc{Comb. Two-Stage Adjustable-Sat} and \textsc{Comb. Two-Stage Adjustable-$\Pi$}.
    Indeed, this follows from the fact $B_\text{new} \subseteq f_I(\U(I))$ together with the SSP property of the SSP-reduction $\textsc{Sat} \leq \Pi$, which states that the solutions of $\textsc{Sat}$ and $\Pi$ correspond one-to-one to each other on the set $f_I(\U(I))$.
    Furthermore, the transformation is computable in polynomial time because $g$ and $f$ are polynomial-time computable.
    It follows that \textsc{Comb. Two-Stage Adjustable-Sat} $\leq$ \textsc{Comb. Two-Stage Adjustable-$\Pi$}.
    In particular, this implies that \textsc{Comb. Two-Stage Adjustable-$\Pi$} is $\Sigma^p_3$-complete.
\end{proof}

\subsection{Adapting the Meta-Reduction to the Cost Version}

As the last step of the meta-reduction, we have to show that the result on the combinatorial version of two-stage adjustable problems is also applicable to the cost version.
This is indeed the case and we show this by reintroducing the cost function $d$ and threshold $t$ of the original nominal LOP problem.

\begin{theorem}
    For all LOP problems $\Pi$ with the property that the SSP problem derived from them is contained in SSP-NPc, the two-stage adjustable variant \textsc{Two-Stage Adjustable-$\Pi$} is $\Sigma^p_3$-complete.
\end{theorem}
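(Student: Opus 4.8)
The plan is to follow the very same two-step recipe already used for \cref{thm:main-result-interdiction} and \cref{thm:min-max-regret-reduction}. Containment in $\Sigma^p_3$ is immediate from \cref{thm:two-stageInSigma3}. For hardness, let $\Pi = (X,\U,\F,d,t)$ be the given LOP and let $\Pi' = (X,\U,\sol)$ with $\sol(x) = \set{S \in \F(x) : d^{(x)}(S) \le t^{(x)}}$ be the SSP problem derived from it; by assumption $\Pi' \in \text{SSP-NPc}$, so \cref{thm:two-stageSigma3Hard} tells us that \textsc{Comb. Two-Stage Adjustable-$\Pi'$} is $\Sigma^p_3$-complete. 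I would therefore reduce \textsc{Comb. Two-Stage Adjustable-$\Pi'$} to \textsc{Two-Stage Adjustable-$\Pi$}, keeping the underlying instance $x \in X$ (hence $\U(x)$, $\F(x)$, $d^{(x)}$, $t^{(x)}$) and the uncertainty parameter $\Gamma$ unchanged, and only designing the cost functions $c_1, \underline c_2, \overline c_2$ and the threshold $t_{TS}$ so that the min–max–min condition of \cref{def:two-stage} becomes equivalent to the quantified condition of \cref{def:comb-two-stage}. This mirrors \cref{thm:min-max-regret-reduction}, the only new ingredient being that the combinatorial split into first-stage elements $U_1$, second-stage elements $U_2$, and blockable elements $B$ now has to be encoded by costs rather than by a set.

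The construction I have in mind: pick a ``forbidding'' value $H$ that is larger than $|t^{(x)}|$ plus the sum $\sum_{u \in \U(x)} |d^{(x)}(u)|$ (so $H$ has polynomial bit-length and dominates the absolute value of the $d$-cost of any subset). Set $c_1(u) := d^{(x)}(u)$ for $u \in U_1$ and $c_1(u) := H$ for $u \in U_2$, so that putting a genuine second-stage element into the first stage is prohibitively expensive; symmetrically set $\underline c_2(u) := d^{(x)}(u)$ for $u \in U_2$ and $\underline c_2(u) := H$ for $u \in U_1$. For the upper second-stage costs, set $\overline c_2(u) := \underline c_2(u)$ for $u \notin B$ (flipping a non-blockable element has no effect, since $B \subseteq U_2$) and $\overline c_2(u) := H$ for $u \in B$ (a ``blocked'' element becomes unusable in the second stage). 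Finally set $t_{TS} := t^{(x)}$ and leave $\Gamma$ untouched.

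For correctness I would argue as follows. First, a scenario $c_2 \in C_\Gamma$ is completely determined by the set $B'$ of elements flipped up to $\overline c_2$; since $\overline c_2 = \underline c_2$ outside $B$ only the part $B' \cap B$ matters, and it satisfies $|B' \cap B| \le \Gamma$, so the adversary's scenarios correspond exactly to the blockers $B' \subseteq B$ with $|B'| \le \Gamma$ of \cref{def:comb-two-stage}. Second, for any $S_1$ and any $S_2 \subseteq \U(x) \setminus S_1$ with $S_1 \cup S_2 \in \F(x)$, the choice of $H$ guarantees that $c_1(S_1) + c_2(S_2) \le t_{TS}$ holds if and only if $S_1 \subseteq U_1$, $S_2 \subseteq U_2$, $S_2 \cap B' = \emptyset$, and $d^{(x)}(S_1 \cup S_2) \le t^{(x)}$: when all four hold, the cost equals $d^{(x)}(S_1) + d^{(x)}(S_2) = d^{(x)}(S_1 \cup S_2) \le t^{(x)} = t_{TS}$ by disjointness; if one of the first three fails, some element contributes $H$ and, since $H$ exceeds $t_{TS}$ minus the most negative possible $d$-contribution, the total strictly exceeds $t_{TS}$; and if only the last fails, then $d^{(x)}(S_1 \cup S_2) \ge t^{(x)} + 1 > t_{TS}$. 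Combining ``$S_1 \cup S_2 \in \F(x)$'' with ``$d^{(x)}(S_1 \cup S_2) \le t^{(x)}$'' is exactly ``$S_1 \cup S_2 \in \sol(x)$''. Hence the min–max–min value is at most $t_{TS}$ iff there is $S_1 \subseteq U_1$ such that for every $B' \subseteq B$ with $|B'| \le \Gamma$ there is $S_2 \subseteq U_2$ with $S_1 \cup S_2 \in \sol(x)$ and $S_2 \cap B' = \emptyset$, i.e.\ iff the \textsc{Comb. Two-Stage Adjustable-$\Pi'$} instance is a Yes-instance. All new numbers have polynomial bit-length, so the reduction runs in polynomial time, and together with \cref{thm:two-stageInSigma3} this gives $\Sigma^p_3$-completeness.

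I expect the only real (and still mild) obstacle to be the bookkeeping forced by $d^{(x)}$ being allowed to take negative values, which is how maximization problems are modelled in the LOP framework: one must make sure that a single $H$-priced element genuinely overwhelms the total cost even of an otherwise very cheap (very negative) solution, which is why $H$ is chosen to depend on the $\ell_1$-norm of $d^{(x)}$ rather than on a single coefficient. A secondary point worth stating explicitly in the write-up is the reconciliation of the fact that the cost version quantifies $S_2$ over sets with $S_1 \cup S_2 \in \F(x)$ whereas the combinatorial version quantifies over $S_1 \cup S_2 \in \sol(x)$; here the feasibility part is inherited verbatim, and the threshold part $d^{(x)}(S_1\cup S_2)\le t^{(x)}$ is exactly what the condition $c_1(S_1)+c_2(S_2)\le t_{TS}$ re-encodes.
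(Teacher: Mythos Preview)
Your proposal is correct and follows essentially the same route as the paper: containment via \cref{thm:two-stageInSigma3}, then a reduction from \textsc{Comb.\ Two-Stage Adjustable-$\Pi'$} that keeps the underlying instance $x$ and $\Gamma$ and encodes the partition $U_1,U_2,B$ by making ``wrong-stage'' or ``blocked'' elements prohibitively expensive, with $t_{TS}=t^{(x)}$. The paper uses the forbidding value $t^{(x)}+1$, whereas you use $H > |t^{(x)}| + \sum_{u}|d^{(x)}(u)|$; your choice is in fact more careful, since the LOP framework explicitly allows negative $d^{(x)}$-values (to model maximization), and a single $t^{(x)}+1$ term need not push the total above $t^{(x)}$ in that case.
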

\begin{proof}
    Let $\Pi = (\I, \U, \F, d, t)$ be the LOP problem of consideration and \textsc{Two-Stage Adjustable-$\Pi$} be the corresponding two-stage adjustable version.  
    The containment of \textsc{Two-Stage Adjustable-$\Pi$} in the class $\Sigma^p_3$ follows from \cref{thm:two-stageInSigma3}.
    
    Let $\Pi' = (\I, \U, \sol)$ be the SSP problem derived from the LOP problem $\Pi$.
    Recall that by the definition of the derived SSP problem, we have that $\Pi$ and $\Pi'$ share the same input instances and universe, and for $I \in \I$, we have $\sol(I) = \set{F \in \F(I) : d^{(I)}(F) \leq t^{(I)}}$. 
    
    Let \textsc{Comb. Two-Stage Adjustable-$\Pi'$} be the corresponding combinatorial two-stage adjustable version.
    By assumption on $\Pi$ and by the previous section, \textsc{Comb. Two-Stage Adjustable-$\Pi'$} is $\Sigma^p_3$-complete.
    The goal is now to reduce \textsc{Comb. Two-Stage Adjustable-$\Pi'$} to \textsc{Two-Stage Adjustable-$\Pi$}.
    This reduction is defined the following way:
    Let an instance $(I, U_1, U_2, B, \Gamma)$ of \textsc{Comb. Two-Stage Adjustable-$\Pi'$} be given.
    Recall that associated to the underlying instance $I$ of $\Pi$ there is the cost function $d^{(I)}$ and threshold $t^{(I)}$.
    We let the underlying instance $I$ of $\Pi$ remain the same and define an instance of \textsc{Two-Stage Adjustable-$\Pi$} via the following tuple $(c_1, \underline c_2, \overline c_2, \Gamma', t_{TS})$:
    \begin{align*}
        &c_1(u) =            \begin{cases}
                                d^{(I)}(u), u \in U_1\\
                                t^{(I)}+1, u \in U_2
                            \end{cases}
        \\
        &\underline c_2(u) = \begin{cases}
                                t^{(I)}+1, u \in U_1\\
                                d^{(I)}(u), u \in U_2
                            \end{cases}
        \quad
        \overline c_2(u) =  \begin{cases}
                                t^{(I)}+1, u \in U_1\\
                                t^{(I)}+1, u \in B\\
                                d^{(I)}(u), u \in U_2 \setminus B
                            \end{cases}
    \end{align*}
    At last, we set $t_{TS} = t^{(I)}$ and $\Gamma' = \Gamma$ and let $C_\Gamma$ be the discrete budgeted uncertainty set defined by $\underline c, \overline c$.
    This completes the description of the  \textsc{Two-Stage Adjustable-$\Pi$} instance.
    Recall that this instance by definition is a Yes-instance if and only if the following inequality is true:

    \begin{equation}
        \min_{S_1 \subseteq \U(I)} \max_{c_2 \in C_\Gamma} \min_{\substack{S_2 \subseteq \U(I) \setminus S_1\\ S_1 \cup S_2 \in \F(I)}} c_1(S_1) + c_2(S_2) \leq t_{TS} \label{eq:proof-two-stage-cost-version}
    \end{equation}

    To prove the correctness of the reduction, observe that whenever an element has cost of $t^{(I)}+1$, the element cannot be in a solution because the threshold $t_{TS}$ is set to $t^{(I)}$.

    Now, assume that inequality~(\ref{eq:proof-two-stage-cost-version}) is true. Then the solution $S_1$ in the first stage has to meet the condition $S_1 \subseteq U_1$.
    For a similar reason, the second stage solution $S_2$ has to meet the condition $S_2 \subseteq \U(I) \setminus (U_1 \cup B') = U_2 \setminus B'$, where $B' \subseteq B$ denotes the set of elements whose costs were increased in the second stage. Note that $|B'| \leq \Gamma$.
    These two facts together imply $c_1(S_1) + c_2(S_2) = d^{(I)}(S_1 \cup S_2)$. 
    This in turn implies that $S_1 \cup S_2$ is not only contained in $\F(I)$ as described in inequality~(\ref{eq:proof-two-stage-cost-version}), but we even have the stronger condition $S_1 \cup S_2 \in \sol(I)$. 
    All the arguments above together show that $(I, U_1, U_2, B, \Gamma)$ is a Yes-instance of \textsc{Comb. Two-Stage Adjustable-$\Pi$}.

    For the other direction, assume that $(I, U_1, U_2, B, \Gamma)$ is a Yes-instance of \textsc{Comb. Two-Stage Adjustable-$\Pi'$}. We can argue in a very similar way to the above that the inequality~(\ref{eq:proof-two-stage-cost-version}) is true.
    At last, this transformation is computable in polynomial time, because the underlying instance is unchanged and only numbers encoded in polynomial size ($\Pi$ is in NP) are added.
    In total, we get that the two instances are equivalent, thus the reduction is correct and \textsc{Two-Stage Adjustable-$\Pi$} is $\Sigma^p_3$-complete.
\end{proof}

\subsection{The Meta-Reduction is an SSP reduction}
The meta-reduction from \Cref{thm:two-stageSigma3Hard} again relies heavily on the one-to-one correspondence between the set of first stage elements and second stage elements as well as the set of blockable elements of the problems \textsc{Comb. Two-Stage Adjustable-Sat} and \textsc{Comb. Two-Stage Adjustable-$\Pi$}.
Because all of the sets from above are subsets of the universe, we are able to prove that the meta-reduction is also an SSP reduction.
For this, we first show that the combinatorial two-stage adjustable variant of an SSP problem again is an SSP problem.

\begin{observation}\label{obs:two-stageIsSSP}
\label{lem:two-stage-as-SSP}
    The combinatorial two-stage variant \textsc{Comb. Two-Stage Adjustable-$\Pi$} of an SSP problem $\Pi$ is an SSP problem.
\end{observation}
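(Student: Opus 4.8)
The plan is to verify that \textsc{Comb. Two-Stage Adjustable-}$\Pi$ fits the template of \cref{def:SSP} by exhibiting its three components $(X',\U',\sol')$ explicitly, mirroring the proof of \cref{obs:interdictionIsSSP} for the interdiction variant. Write $\Pi = (X,\U,\sol)$. By \cref{def:comb-two-stage} an instance of the combinatorial two-stage variant is a tuple $x' = (x, U_1, U_2, B, \Gamma)$ with $x \in X$, $U_1 \subseteq \U(x)$, $U_2 = \U(x)\setminus U_1$, $B \subseteq U_2$ and $\Gamma \in \N_0$, so I would set
$$
    X' = \set{(x, U_1, U_2, B, \Gamma) : x \in X,\ U_1 \subseteq \U(x),\ U_2 = \U(x)\setminus U_1,\ B \subseteq U_2,\ \Gamma \in \N_0},
$$
which is a language over $\bin$ under any reasonable binary encoding of the tuple.

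For the universe I would keep the universe of the underlying nominal instance, i.e.\ $\U'(x') := \U(x)$ for $x' = (x, U_1, U_2, B, \Gamma)$; this is the choice used in \cref{obs:interdictionIsSSP}, and it is what makes the later SSP-reduction corollary go through. The key decision is what counts as a \emph{solution}. In the two-stage game only the first-stage set $S_1 \subseteq U_1$ is the move of the $\exists$-player that must be exhibited as a witness; the inner $\exists S_2$ and the $\forall B'$ can be folded into a predicate, exactly as the $\forall S$ quantifier of the interdiction problem is folded into the predicate in \cref{obs:interdictionIsSSP}. Accordingly I would define
$$
    \sol'(x') := \set{S_1 \subseteq U_1 : \forall B' \subseteq B,\, |B'| \leq \Gamma,\ \exists S_2 \subseteq U_2:\ S_1 \cup S_2 \in \sol(x) \text{ and } S_2 \cap B' = \emptyset}.
$$

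The remaining checks are immediate: since $S_1 \subseteq U_1 \subseteq \U(x) = \U'(x')$ for every $S_1 \in \sol'(x')$, we have $\sol'(x') \subseteq \powerset{\U'(x')}$, so the third requirement of \cref{def:SSP} is met; and by construction $x'$ is a Yes-instance in the sense of \cref{def:comb-two-stage} if and only if $\sol'(x') \neq \emptyset$. Hence $\textsc{Comb. Two-Stage Adjustable-}\Pi = (X',\U',\sol')$ is an SSP problem. I expect no real obstacle here --- the statement is a bookkeeping verification --- and the only point requiring a moment's care is the one highlighted above, namely that the solution set records first-stage choices only, with the remaining quantifier alternation kept inside the predicate. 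This is precisely what will allow the meta-reduction of \cref{thm:two-stageSigma3Hard} to be upgraded to an SSP reduction (analogously to \cref{thm:combInterdictionSSPreduction:SSP}), since first-stage solutions of \textsc{Comb. Two-Stage Adjustable-}$\Pi$ then correspond one-to-one to those of \textsc{Comb. Two-Stage Adjustable-Sat} on the embedded sub-universe $f_x(\U(x))$.
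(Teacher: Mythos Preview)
Your proposal is correct and essentially identical to the paper's own proof: both set $\U'(x')=\U(x)$, define $X'$ as the set of tuples $(x,U_1,U_2,B,\Gamma)$, and take $\sol'(x')$ to be the collection of first-stage choices $S_1\subseteq U_1$ for which the remaining $\forall B'\,\exists S_2$ alternation succeeds. Your additional remarks explaining why the solution should record only $S_1$ and how this dovetails with the later SSP-reduction corollary are accurate and in the spirit of the paper.
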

\begin{proof}
    Let $\Pi = (\I, \U, \sol)$ be an SSP problem and denote \textsc{Comb. Two-Stage Adjustable-$\Pi$} $= (\I', \U', \sol')$.
    Then, set $\U = \U'$.
    Now, let $I \in \I$ be an instance of $\Pi$.
    Then, we can define the corresponding instances of \textsc{Comb. Two-Stage Adjustable-$\Pi$} by setting $I' = \{(I, U_1, U_2, B, \Gamma) \mid I \in \I, U_1 \subseteq \U, U_2 = \U \setminus U_1, B \subseteq U_2, \Gamma \in \Z\}$.
    We define the solutions $\sol'(I')$ to be all sets $S_1 \subseteq U_1 \subseteq \U'(I')$ such that for all $B' \subseteq B$ with $|B'| \leq \Gamma$ there is $S_2 \subseteq U_2$ such that $S_2 \cap B' = \emptyset$ and $S_1 \cup S_2 \in \sol(I)$.
    Thus, $\textsc{Two-Stage Adjustable-}\Pi = (\I', \U', \sol')$ is an SSP problem.
\end{proof}

With this additional observation, we are able to elegantly prove the meta-reduction from any SSP-NP-complete problem $\Pi$ by extending the existing SSP reduction $\textsc{Sat} \leqSSP \Pi$ to be an SSP reduction \textsc{Comb. Two-Stage Adjustable-Sat} $\leqSSP$ \textsc{Comb. Two-Stage Adjustable-$\Pi$}.

\begin{corollary}\label{thm:two-stageSigma3Hard:SSP}
    For all SSP-NP-complete problems $\Pi$, \textsc{Comb. Two-Stage Adjust\-able-Sat} $\leqSSP$ \textsc{Comb. Two-Stage Adjustable-$\Pi$}.
\end{corollary}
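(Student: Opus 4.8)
The plan is to reuse the many-one reduction $g'$ built in the proof of \cref{thm:two-stageSigma3Hard} and to equip it with a family of injective embeddings $(f'_{x'})_{x'}$ that upgrades it to an SSP reduction, exactly in the spirit of \cref{thm:combInterdictionSSPreduction:SSP}. Recall from \cref{obs:two-stageIsSSP} that both \textsc{Comb. Two-Stage Adjustable-Sat} and \textsc{Comb. Two-Stage Adjustable-$\Pi$} are SSP problems whose universe is the universe of the underlying nominal problem, and whose solution set consists precisely of the ``successful'' first-stage choices $S_1$. Since $\Pi$ is SSP-NP-complete, there is an SSP reduction $(g,(f_x)_{x\in X})$ from \textsc{Satisfiability} to $\Pi$, and in the proof of \cref{thm:two-stageSigma3Hard} the derived reduction sends an instance $x' = (x, U_1, U_2, B, \Gamma)$ to $g'(x') = (g(x),\, f_x(U_1),\, \U(g(x))\setminus f_x(U_1),\, f_x(B),\, \Gamma)$. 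Since $U_1 \subseteq \U(x)$, all three relevant sets of $x'$ live inside the domain of $f_x$, so the natural choice is $f'_{x'} := f_x$; it is injective and polynomial-time computable because $f_x$ is, and it maps the universe of $x'$ into the universe of $g'(x')$.

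It then remains to verify the solution-preserving identity $\set{f'_{x'}(S) : S \in \sol'(x')} = \set{T \cap f'_{x'}(\U(x)) : T \in \sol'(g'(x'))}$ for the two combinatorial two-stage problems. Because every solution $S_1 \in \sol'(x')$ satisfies $S_1 \subseteq U_1$, and every solution $S_1' \in \sol'(g'(x'))$ satisfies $S_1' \subseteq f_x(U_1) \subseteq f_x(\U(x))$, the intersection with $f_x(\U(x))$ on the right is vacuous, and the identity boils down to showing that $f_x$ is a bijection between the successful first-stage choices of $x'$ and those of $g'(x')$. For the forward direction, take a successful $S_1$ for $x'$ and any $B'' \subseteq f_x(B)$ with $|B''|\le\Gamma$; injectivity of $f_x$ yields a unique $B' \subseteq B$ with $f_x(B') = B''$ and $|B'|\le\Gamma$; let $S_2 \subseteq U_2$ be a witness for $S_1$ against $B'$ with $S_1\cup S_2 \in \sol(x)$; the solution-preserving property of the nominal reduction gives $T \in \sol(g(x))$ with $T \cap f_x(\U(x)) = f_x(S_1\cup S_2)$; put $S_2' := T \setminus f_x(S_1)$. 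Using $S_1\subseteq U_1$, $S_2 \subseteq U_2$ and $S_2 \cap B' = \emptyset$ one checks $T \cap f_x(U_1) = f_x(S_1)$, hence $S_2' \subseteq \U(g(x))\setminus f_x(U_1)$, then $f_x(S_1)\cup S_2' = T \in \sol(g(x))$, and finally $S_2' \cap B'' = \emptyset$; so $f_x(S_1)$ is a successful first-stage choice for $g'(x')$. The backward direction is entirely symmetric: from a successful $S_1' = f_x(S_1)$ and a blocker $B'$, pass to $B'' = f_x(B')$, obtain a witness $S_2' \subseteq \U(g(x))\setminus f_x(U_1)$ with $f_x(S_1)\cup S_2' \in \sol(g(x))$, apply the other inclusion of the solution-preserving property to $T := f_x(S_1)\cup S_2'$ to get $S := f_x^{-1}(T\cap f_x(\U(x))) \in \sol(x)$, and check that $S_2 := S \setminus S_1 \subseteq U_2$ avoids $B'$ and completes $S_1$ inside $\sol(x)$.

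The only nonroutine point, and hence the main obstacle, is the handling of the second-stage quantifier in the presence of the ``new'' universe elements $\U(g(x))\setminus f_x(\U(x))$: these all lie in the second-stage set $\U(g(x))\setminus f_x(U_1)$ of $g'(x')$ and are invisible to the blocker set $f_x(B)$, so one must confirm that they neither prevent a successful second-stage response nor manufacture a spurious one. This is precisely what the choice $S_2' := T \setminus f_x(S_1)$ (respectively $S_2 := S \setminus S_1$) together with the equality $T \cap f_x(\U(x)) = f_x(S_1\cup S_2)$ accomplishes. Everything else is bookkeeping, since the quantifier prefix $\exists S_1\,\forall B'\,\exists S_2$, the blockable set $B$, and the parameter $\Gamma$ are all transported verbatim by $f_x$. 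Hence $(g',(f'_{x'})_{x'})$ is an SSP reduction, which proves \textsc{Comb. Two-Stage Adjustable-Sat} $\leqSSP$ \textsc{Comb. Two-Stage Adjustable-$\Pi$}.
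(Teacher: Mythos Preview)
Your proposal is correct and follows exactly the paper's approach: you reuse the reduction $g'$ from \cref{thm:two-stageSigma3Hard} and set $f'_{x'} := f_x$, just as the paper does. The only difference is that you spell out in full the verification of the solution-preserving identity (both directions of the bijection between successful first-stage choices), whereas the paper's proof simply asserts the conclusion after noting that $f'_{x'}$ is well-defined; your added detail is sound and fills in what the paper leaves implicit.
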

\begin{proof}
    By \Cref{lem:two-stage-as-SSP}, \textsc{Comb. Two-Stage Adjustable-Sat} and \textsc{Comb. Two-Stage Adjustable-$\Pi$} are SSP problems and we design an SSP reduction $(g', (f_{I'})_{I' \in \I'})$ from \textsc{Comb. Two-Stage Adjustable-Sat} to \textsc{Comb. Two-Stage Adjustable-$\Pi$} by extending the existing reduction $(g, (f_I)_{I \in \I})$ as depicted in \Cref{fig:two-stage-ssp-reduction}, where $g'$ is the reduction from the proof of \Cref{thm:two-stageSigma3Hard}.
    
    \begin{figure}[!ht]
        \centering
        \scalebox{1}{
            \begin{tikzpicture}
                \node[] (Sat) at (0,0) {\textsc{Sat}};
                \node[] (Pi) at (6,0) {$\Pi$};
                \node[] (TSSat) at (-1.5,-2) {\textsc{Comb. Two-Stage Adj.-Sat}};
                \node[] (TSPi) at (7.5,-2) {\textsc{Comb. Two-Stage Adj.-$\Pi$}};
                \draw[->] (Sat) to node[above] {$(g,(f_I)_{I \in \I})$} (Pi);
                \draw[->] (TSSat) to node[above] {$(g',(f'_{I'})_{I' \in \I'})$} (TSPi);
                \draw[->] (0,-0.5) to node[left] {$I' = (I, U_1 , U_2, B, \Gamma)$} (0,-1.5);
                \draw[->] (6,-0.5) to (6,-1.5);
            \end{tikzpicture}
        }
        \caption{The fact that \textsc{Sat} is SSP reducible to $\Pi$ induces an SSP reduction from \textsc{Comb. Two-Stage Adjustable-Sat} to \textsc{Comb. Two-Stage Adjustable-$\Pi$}.}
        \label{fig:two-stage-ssp-reduction}
    \end{figure}
    We recall the reduction from the proof of \Cref{thm:two-stageSigma3Hard}.
    As we have already proven the correctness of $g'$, we focus solely on the function $(f'_{I'})_{I' \in \I'}$.
    We define $f_I = f'_{I'}$ such that the following holds:
    $$
        U'_1 = f'_{I'}(U_1), \ U_2' = \U(g(I)) \setminus U_1', \ B' = f'_{I'}(B).
    $$
    The function $f'_{I'}$ is well-defined because the universe $\U$ of the \textsc{Sat} instance $I$ and the \textsc{Comb. Two-Stage Adjustable-Sat} instance $I'$ are the same.
    Therefore, $(g', (f'_{I'})_{I' \in \I'})$ is an SSP reduction.
\end{proof}

\section{SSP Reductions for Various Problems}\label{sec:ssp-reductions}
In this section, we present a multitude of SSP-NP-complete problems.
For this, we provide SSP definitions of all the corresponding problems, i.e. for each SSP problem we describe its set $\I$ of input instances, as well as the universe $\U(I)$ and solution set $\sol(I)$ associated to each instance $I \in \I$.
We show SSP reductions between all the problems, hence proving that all considered problems are contained in the class SSP-NPc.
In this section we distinguish between SSP problems which are derived from an LOP problem, and those who are not (see \cref{sec:ssp-reduction-intro} for an explanation).
If the SSP problem is derived from an LOP problem, we additionally provide the set $\F(I)$ of feasible solutions.
In all of these cases, the cost function $d^{(I)}$ and the threshold $t^{(I)}$ of the original LOP problem can be derived from the context.

All reductions presented in this section are already known and can be found in the literature.
For each existing reduction, we shortly describe the known reduction $g: \I \rightarrow \I'$ as well as the functions $(f_I)_{I \in \I}: \U(I) \rightarrow \U'(g(I))$.
However, for the sake of conciseness, we do not always show the correctness of the original reductions explicitly and focus only on the correct embedding into the SSP framework.
In \Cref{fig:reductions}, the tree of all presented reductions can be found, beginning at \textsc{Satisfiability}.
We heavily use the transitivity of SSP reductions (\cref{lem:SSP-transitive}).

\begin{samepage}
    \begin{mdframed}
    	\begin{description}
        \item[]\textsc{Satisfiability}\hfill\\
        \textbf{Instances:} Literal Set $L = \fromto{\ell_1}{\ell_n} \cup \fromto{\overline \ell_1}{\overline \ell_n}$, Clauses $C \subseteq \powerset{L}$.\\
        \textbf{Universe:} $L =: \U$.\\
        \textbf{Solution set:} The set of all sets $L' \subseteq \U$ such that for all $i \in \fromto{1}{n}$ we have $|L' \cap \set{\ell_i, \overline \ell_i}| = 1$, and such that $|L' \cap c_j| \geq 1$ for all $c_j \in C$, $j \in \fromto{1}{|C|}$.
    	\end{description}
    \end{mdframed}
\end{samepage}

\begin{figure}[!ht]
	\centering
	\scalebox{0.67}{
	\begin{tikzpicture}[scale=1]
		\node[text width=1cm,align=center](sat) at (0, 0.75) {\textsc{Sat}};
		\node[text width=1cm,align=center](0) at (0, 0) {\textsc{3Sat}};
		\node[](1) at (-8, -1) {\textsc{VC}};
		\node[](2) at (-2.75, -1) {\textsc{IS}};
		\node[](3) at (0, -1) {\textsc{Subset Sum}};
		\node[](4) at (6.5, -0.85) {}; \node[text width=1cm,align=center](4-1) at (7, -1.19) {\textsc{Steiner Tree}};
		\node[](5) at (3.25, -1) {\textsc{DHamPath}};
		\node[](6) at (5.5, -1) {\textsc{2DDP}};
		\node[](11) at (-12.5, -2) {\textsc{DS}};
		\node[](12) at (-11.5, -2) {\textsc{SC}};
		\node[](13) at (-10.5, -2) {\textsc{HS}};
		\node[](14) at (-9.5, -2) {\textsc{FVS}};
		\node[](15) at (-8.5, -2) {\textsc{FAS}};
		\node[](10) at (-7.6, -2) {\textsc{UFL}};
		\node[](16) at (-6.25, -2) {\textsc{p-Center}};
		\node[](17) at (-4.5, -2) {\textsc{p-Median}};
		\node[](21) at (-2.75, -2) {\textsc{Clique}};
		\node[](31) at (-1, -2) {\textsc{Knapsack}};
		\node[](32) at (1, -2) {\textsc{Partition}};
		\node[](51) at (3.25, -2) {\textsc{DHamCyc}};
		\node[](61) at (5.5, -2) {\textsc{kDDP}};
		\node[](321) at (1, -2.75) {\textsc{Scheduling}};
		\node[](511) at (3.25, -2.75) {\textsc{UHamCyc}};
        \node[](5111) at (3.25, -3.5) {\textsc{TSP}};
		\path[->] (sat) edge (0);
		\path[->] (0) edge (1);
		\path[->] (0) edge (2);
		\path[->] (0) edge (3);
		\path[->] (0) edge (4);
		\path[->] (0) edge (5);
		\path[->] (0) edge (6);
		\path[->] (1) edge (10);
		\path[->] (1) edge (11);
		\path[->] (1) edge (12);
		\path[->] (1) edge (13);
		\path[->] (1) edge (14);
		\path[->] (1) edge (15);
		\path[->] (1) edge (16);
		\path[->] (1) edge (17);
		\path[->] (2) edge (21);
		\path[->] (3) edge (31);
		\path[->] (3) edge (32);
		\path[->] (32) edge (321);
		\path[->] (5) edge (51);
		\path[->] (51) edge (511);
		\path[->] (511) edge (5111);
		\path[->] (6) edge (61);
	\end{tikzpicture}
	}
	\caption{The tree of SSP reductions for all considered problems.}
	\label{fig:reductions}
\end{figure}

With the tree of SSP reductions shown in \Cref{fig:reductions}, we derive the following theorem.
\begin{samepage}
    \begin{theorem}
        The following problems are SSP-NP-complete:
        \textsc{Satisfiability},
        \textsc{3-Satis\-fiability},
        \textsc{Vertex Cover}, \textsc{Dominating Set}, \textsc{Set Cover}, \textsc{Hitting Set}, \textsc{Feedback Vertex Set}, \textsc{Feedback Arc Set}, \textsc{Uncapacitated Facility Location}, \textsc{p-Center}, \textsc{p-Median},
        \textsc{Independent Set}, \textsc{Clique},
        \textsc{Subset Sum}, \textsc{Knapsack}, \textsc{Partition}, \textsc{Scheduling},
        \textsc{Directed Hamiltonian Path}, \textsc{Directed Hamiltonian Cycle}, \textsc{Undirected Hamiltonian Cycle}, \textsc{Traveling Salesman Problem},
        \textsc{Two Directed Vertex Disjoint Path}, \textsc{$k$-Vertex Directed Disjoint Path},
        \textsc{Steiner Tree}
    \end{theorem}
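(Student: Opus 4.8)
The plan is to verify the reduction tree depicted in \Cref{fig:reductions}. By \Cref{thm:cook-levin}, \textsc{Satisfiability} is SSP-NP-complete, so by \Cref{lem:SSP-transitive} it suffices to prove two things for every problem $\Pi$ in the list: first, that $\Pi \in \text{SSP-NP}$, and second, that there is an SSP reduction to $\Pi$ from some problem already known to be SSP-NP-complete, following the edges of the tree. Membership in SSP-NP is routine in all cases: for each problem the universe $\U(x)$ (the literals, vertices, edges, or items of the instance) has size polynomial in $|x|$, and given a candidate subset $S \subseteq \U(x)$ one can check in polynomial time whether $S \in \sol(x)$ (whether $S$ encodes a satisfying assignment, a vertex cover of the prescribed size, a Hamiltonian cycle, a subset summing to the target, and so on). Hence the work is concentrated entirely in exhibiting the SSP reductions. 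For those problems in the list that are derived from an LOP problem, one additionally records the set $\F(x)$ of feasible solutions, from which the cost function and threshold are clear from context.

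For the root edge I would show $\textsc{Satisfiability} \leqSSP \textsc{3-Satisfiability}$ via the textbook clause-splitting reduction: a clause $\ell_{i_1} \vee \dots \vee \ell_{i_k}$ with $k \geq 4$ is replaced by the chain $(\ell_{i_1} \vee \ell_{i_2} \vee a_1) \wedge (\overline a_1 \vee \ell_{i_3} \vee a_2) \wedge \dots$, introducing fresh auxiliary variables. The embedding $f_x$ maps each original literal to the identically named literal in the new formula, and one checks the solution-preserving equation of \Cref{def:ssp-reduction}: every satisfying assignment of the original formula extends (by a forced choice of the auxiliary literals) to at least one satisfying assignment of the $3$-CNF formula, and every satisfying assignment of the $3$-CNF formula, restricted to the original literals, satisfies the original formula. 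The remaining edges are handled analogously, each time taking a classical Karp reduction from the literature, specifying the embedding $f_x$, and verifying the one-to-one correspondence on $f_x(\U(x))$. Concretely: from \textsc{3-Satisfiability} one reduces to \textsc{Vertex Cover} (exactly as sketched in \cref{sec:technicalOverview}, embedding literals into the variable-gadget vertices), and to \textsc{Independent Set}, \textsc{Subset Sum}, \textsc{Directed Hamiltonian Path}, \textsc{Two Directed Vertex Disjoint Path}, and \textsc{Steiner Tree}; from \textsc{Vertex Cover} to \textsc{Dominating Set}, \textsc{Set Cover}, \textsc{Hitting Set}, \textsc{Feedback Vertex Set}, \textsc{Feedback Arc Set}, \textsc{Uncapacitated Facility Location}, \textsc{p-Center}, and \textsc{p-Median}; from \textsc{Independent Set} to \textsc{Clique} (complementation, with $f_x$ the identity, so that $\sol$ is literally unchanged); from \textsc{Subset Sum} to \textsc{Knapsack} and \textsc{Partition}, and from \textsc{Partition} to \textsc{Scheduling}; from \textsc{Directed Hamiltonian Path} to \textsc{Directed Hamiltonian Cycle} to \textsc{Undirected Hamiltonian Cycle} to \textsc{Traveling Salesman Problem}; and from \textsc{Two Directed Vertex Disjoint Path} to \textsc{$k$-Vertex Directed Disjoint Path}. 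Transitivity (\Cref{lem:SSP-transitive}) then yields the claim for every problem in the list.

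The main obstacle is the solution-preserving property itself. For the bare decision question only the equivalence ``$\sol(x) \neq \emptyset$ iff $\sol'(g(x)) \neq \emptyset$'' matters, and the classical reductions are designed exactly for this. The SSP requirement is stricter: the reduction must embed the universe so that the solution sets correspond bijectively after intersecting with the image $f_x(\U(x))$. This forces attention to the auxiliary gadgets that every nontrivial reduction adds. On the one hand one must ensure that every solution of the source lifts to at least one solution of the target, which is usually immediate since the auxiliary elements can be set in a canonical (often forced) way; on the other hand one must ensure that no solution of the target projects, on $f_x(\U(x))$, to a non-solution of the source, i.e.\ the gadget choices must never be able to ``rescue'' an assignment that is invalid in the source. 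In a few cases the most standard textbook reduction is slightly too lax for this and needs a minor adjustment --- for instance, restricting the threshold so that certain gadget vertices are forced into or out of every solution, or orienting a gadget so that its internal choice is uniquely determined by the part of the solution lying in $f_x(\U(x))$. Carrying this out for each of the roughly two dozen edges of the tree is the bulk of the proof, but in every case it reduces to a mechanical verification once the embedding $f_x$ has been written down explicitly.
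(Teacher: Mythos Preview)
Your proposal is correct and follows essentially the same approach as the paper: establish \textsc{Satisfiability} as SSP-NP-complete via \Cref{thm:cook-levin}, then traverse the reduction tree of \Cref{fig:reductions} edge by edge, checking for each classical Karp-style reduction that an explicit embedding $f_x$ yields the solution-preserving property, and invoke transitivity (\Cref{lem:SSP-transitive}). The paper carries this out in detail for every edge, and your anticipation that a few standard reductions need minor tweaks is accurate (e.g.\ replicating gadget vertices $|V|+1$ times for \textsc{Dominating Set} and \textsc{Feedback Arc Set}, and breaking the $S \leftrightarrow \U\setminus S$ symmetry in \textsc{Partition}/\textsc{Scheduling} by fixing one designated element into the solution).
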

\end{samepage}

\begin{samepage}
    \begin{mdframed}
    	\begin{description}
        \item[]\textsc{3-Satisfiability}\hfill\\
        \textbf{Instances:} Literal Set $L = \fromto{\ell_1}{\ell_n} \cup \fromto{\overline \ell_1}{\overline \ell_n}$, Clauses $C \subseteq \powerset{L}$ s.t. $\forall c_j \in C : |c_j| = 3$.\\
        \textbf{Universe:} $L =: \U$.\\
        \textbf{Solution set:} The set of all sets $L' \subseteq \U$ such that for all $i \in \fromto{1}{n}$ we have $|L' \cap \set{\ell_i, \overline \ell_i}| = 1$, and such that $|L' \cap c_j| \geq 1$ for all $c_j \in C$.
    	\end{description}
    \end{mdframed}
\end{samepage}
We begin with Karp's reduction \cite{DBLP:conf/coco/Karp72} from \textsc{Satisfiability} to \textsc{3-Satisfiability}. We claim that this reduction is an SSP reduction.
Let $I = (L, C)$ be the \textsc{Sat} instance and $(L', C')$ be the corresponding \textsc{3Sat} instance.
The reduction maps each clause $c \in C$ of more than three literals to a set of clauses in $C'$ of length three by introducing helper variables $h_1, h_2, \ldots$ splitting the clause into smaller clauses.
Every clause $\{a, b, c, d, \ldots\} \in C$ with more than three literals is recursively split until there are no more clauses with more than three literals as follows:
$$
    \{a, b, c, d, \ldots\} \mapsto \{a, b, h_i\}, \{\overline h_i, c, d, \ldots\}.
$$
The number of splits is bounded by the length of the instance, thus it is computable in polynomial time.

The literals $L$ of the \textsc{Sat} instance remain in the \textsc{3Sat} instance and are one-to-one correspondent.
Thus, the corresponding solutions have the SSP property by defining the functions $(f_I)_{I \in \I}$ by $f_I(\ell) = \ell \in L'$ for $\ell \in L$.
Note that the set $f_I(L)$ is exactly the set of positive and negative literals corresponding to non-helper variables.
It is easily verified that each satisfying assignment of the \textsc{3Sat} instance restricted to the set $f_I(L)$ implies a satisfying assignment of the \textsc{Sat} instance (i.e. forgetting about the helper variables). This is property (P2) as explained in \cref{sec:technicalOverview}.
Likewise, each satisfying assignment of the \textsc{Sat} instance can be completed to a satisfying assignment of the \textsc{3Sat} instance by setting the helper variables appropriately. This is property (P1).
Hence, the solutions of \textsc{Sat} and \textsc{3Sat} correspond one-to-one to each other on the set $f_I(L)$, in the precise sense that $\set{f_I(S) : S \in \sol_\text{Sat}} = \set{S'  \cap f_I(L) : S' \in \sol_\text{3Sat}}$.
Thus this reduction is indeed an SSP reduction.

\begin{samepage}
    \begin{mdframed}
    	\begin{description}
        \item[]\textsc{Vertex Cover}\hfill\\
        \textbf{Instances:} Graph $G = (V, E)$, number $k \in \N$.\\
        \textbf{Universe:} Vertex set $V =: \U$.\\
        \textbf{Feasible solution set:} The set of all vertex covers.\\
        \textbf{Solution set:} The set of all vertex covers of size at most $k$.
    	\end{description}
    \end{mdframed}
\end{samepage}
The reduction of Garey and Johnson \cite{DBLP:books/fm/GareyJ79} from \textsc{3Sat} to \textsc{Vertex Cover} is an SSP reduction.
In order to show this, we reformulate the reduction and adapt it to the SSP framework.
Let $I = (L,C)$ be the \textsc{3Sat} instance with literals $L$ and clauses $C$.
We define the corresponding instance $g(L,C)$ of \textsc{Vertex Cover} as the following tuple $(G',k') = ((V', E'), k')$.
Each literal $\ell \in L$ is mapped to a literal vertex $v_\ell \in V'$, where $v_\ell$ and its negation $v_{\overline \ell}$ are connected by an edge $\{v_\ell, v_{\overline \ell}\} \in E'$.
Furthermore, we introduce a 3-clique for each clause $c \in C$.
Each of the three vertices $v^{c}_{\ell_{i_1}}, v^{c}_{\ell_{i_2}}, v^{c}_{\ell_{i_3}}$ represents a literal in the clause, and is then connected to the corresponding literal vertex, i.e. $\{v_\ell, v^c_\ell\} \in E'$ for $\ell \in c$.
Finally, we define the parameter $k'$ by $k' := |L|/2 + 2|C|$.
An example instance can be found in \Cref{fig:reduction:3sat-vertex-cover}, where the set of literal vertices is denoted by $W$.

\tikzstyle{vertex}=[draw,circle,fill=black, minimum size=4pt,inner sep=0pt]
\tikzstyle{edge} = [draw,-]
\begin{figure}[thpb]
\centering
\resizebox{0.67\textwidth}{!}{
\begin{tikzpicture}[scale=1,auto]

\node[vertex] (x1) at (0,0) {}; \node[above] at (x1) {$v_{\ell_1}$};
\node[vertex] (notx1) at (2,0) {}; \node[above] at (notx1) {$v_{\overline \ell_1}$};
\draw[edge] (x1) to (notx1);

\node[vertex] (x2) at (4,0) {}; \node[above] at (x2) {$v_{\ell_2}$};
\node[vertex] (notx2) at (6,0) {}; \node[above] at (notx2) {$v_{\overline \ell_2}$};
\draw[edge] (x2) to (notx2);

\node[vertex] (x3) at (8,0) {}; \node[above] at (x3) {$v_{\ell_3}$};
\node[vertex] (notx3) at (10,0) {}; \node[above] at (notx3) {$v_{\overline \ell_3}$};
\draw[edge] (x3) to (notx3);

\node[vertex] (c1) at (4,-2.25) {}; \node[below] at (c1) {$v^{c_1}_{\overline \ell_1}$};
\node[vertex] (c2) at (5,-1.25) {}; \node[above left] at (c2) {$v^{c_1}_{\overline \ell_2}$};
\node[vertex] (c3) at (6,-2.25) {}; \node[below] at (c3) {$v^{c_1}_{\ell_3}$};
\node[] at (5,-1.92) {$c_1$};
\draw[edge] (c1) to (c2) to (c3) to (c1);
\draw[edge] (notx1) to (c1);
\draw[edge] (notx2) to (c2);
\draw[edge] (x3) to (c3);

\node at ($(x1)+(-1,0)$) {$W$};
\draw[dashed,rounded corners] ($(x1)+(-.5,+.7)$) rectangle ($(notx3) + (.5,-.4)$);

\end{tikzpicture}
}
\caption{Classic reduction of \textsc{3Sat} to \textsc{Vertex Cover} for $\varphi = (\overline \ell_1 \lor \overline \ell_2 \lor \ell_3)$.}
\label{fig:reduction:3sat-vertex-cover}
\end{figure}

The universe elements of \textsc{3Sat} are injectively mapped to the literal vertices in $W \subseteq V'$, where $f_{I}(\ell) = v_\ell$.
All valid solutions (if there are any) include exactly one of $v_\ell$ or $v_{\overline \ell} \in W$ and two additional vertices from the 3-clique corresponding to each clause.
To cover a 3-clique at least two vertices of that 3-clique have to be in the vertex cover.
If a clause is not satisfied, then no neighboring literal vertex is in the solution. In this case all three vertices of the 3-clique simulating the clause have to be taken into the solution
(otherwise the edges connecting the literal vertices with the 3-clique are not covered).
But then this vertex cover must already have size more than $k'$.
In total, we have that every vertex cover of size at most $k'$ restricted to the set $W$ corresponds to a solution of \textsc{3Sat}.
On the other hand, every solution of \textsc{3Sat} can be transferred over to the set $W$ and be completed in at least one way to a vertex cover of size at most $k'$, i.e. the following equation holds true
\begin{align*}
    \{f_{I}(S) : S \subseteq L \ \text{s.t.} \ S \in \sol_{\textsc{3Sat}}\} 
    &= \{S' \cap f_{I}(L) : S' \in \sol_{VC}\}.
\end{align*}
Thus, the SSP reduction is indeed correct.

\begin{samepage}
    \begin{mdframed}
    	\begin{description}
        \item[]\textsc{Dominating Set}\hfill\\
        \textbf{Instances:} Graph $G = (V, E)$, number $k \in \N$.\\
        \textbf{Universe:} Vertex set $V =: \U$.\\
        \textbf{Feasible solution set:} The set of all dominating sets.\\
        \textbf{Solution set:} The set of all dominating sets of size at most $k$.
    	\end{description}
    \end{mdframed}
\end{samepage}
For a reduction from \textsc{Vertex Cover} to \textsc{Dominating Set}, we use a (modified) folklore reduction as depicted in \Cref{fig:reduction:vertex-cover-dominating-set}.
Let $I = ((V, E), k)$ be the \textsc{Vertex Cover} instance and $((V', E'), k')$ be the \textsc{Dominating Set} instance.
For every two vertices $v, w \in V$ connected by an edge $\{v, w\}$, we introduce $|V|+1$ additional vertices $vw_i$ for $i \in \fromto{1}{|V|+1}$ and connect them to $v$ and to $w$.
All isolated vertices $v \in V$ are mapped to itself, that is $v \in V'$.
Furthermore, we introduce a star around vertex $v_{iso}$ connected to vertices $v^i_{iso}$ for $i \in \fromto{1}{|V|+1}$.
Then, we connect $v_{iso}$ to all the isolated vertices from $V'$.
The parameter $k'$ is set to $k' = k+1$.
With this construction, the vertex cover is directly translatable to a dominating set in the \textsc{Dominating Set} instance by leaving it as is and by taking $v_{iso}$ into the dominating set.
Note that $v_{iso}$ dominates itself, all $v^i_{iso}$ for $i \in \fromto{1}{|V|+1}$, and all originally isolated vertices.
The other way around, we claim that every dominating set of size at most $k' = k+1$ in the new graph has the property that restricted to the set $W$ it encodes a vertex cover of size at most $k$ of the old graph.
Indeed, observe that one needs at least $|V|+1$ vertices to dominate all vertices $v, w$ and $vw_i$ $i \in \fromto{1}{|V|+1}$ of one edge, the same holds for the star around $v_{iso}$.
Thus for all dominating sets and for all \enquote{original} edges $\set{v,w}$, one needs to include either $v$ or $w$.
Note that $v_{iso}$ is always part of the dominating set because it is the center of large star and all of the originally isolated vertices are dominated.
Consequently, we have a one-to-one correspondence between the vertex cover and the dominating set and the solutions are preserved accordingly.
This one-to-one correspondence with the mapping $f_I(v) = v \in V'$ for all $v \in V$ directly implies that this is an SSP reduction.

\tikzstyle{vertex}=[draw,circle,fill=black, minimum size=4pt,inner sep=0pt]
\tikzstyle{edge} = [draw,-]
\begin{figure}[thpb]
\centering
\begin{tikzpicture}[scale=1,auto]

\node[vertex] (v) at (0,0) {}; \node[above] at (v) {$v$};
\node[vertex] (w) at (2,0) {}; \node[above] at (w) {$w$};
\draw[edge] (v) to (w);

\node[vertex] (vw1) at (1,-0.67) {}; \node[below] at (vw1) {$vw_1$};
\draw[edge] (v) to (vw1);
\draw[edge] (vw1) to (w);
\node[vertex] (vwV+1) at (1,-2) {}; \node[below] at (vwV+1) {$vw_{|V|+1}$};
\draw[edge] (v) to (vwV+1);
\draw[edge] (vwV+1) to (w);
\node[] () at (1,-1.25) {$\vdots$};

\node at ($(v)+(-1,0)$) {$W$};
\draw[dashed,rounded corners] ($(v)+(-.5,+.6)$) rectangle ($(w) + (.5,-.4)$);

\end{tikzpicture}
\caption{The modified reduction of \textsc{Vertex Cover} to \textsc{Dominating Set}.}
\label{fig:reduction:vertex-cover-dominating-set}
\end{figure}

\begin{samepage}
    \begin{mdframed}
    	\begin{description}
        \item[]\textsc{Set Cover}\hfill\\
        \textbf{Instances:} Sets $S_i \subseteq \fromto{1}{m}$ for $i \in \fromto{1}{n}$, number $k \in \N$.\\
        \textbf{Universe:} $\{S_1 \dots, S_n\} =: \U$.\\
        \textbf{Feasible solution set:} The set of all $S \subseteq \{S_1, \dots, S_n\}$ s.t. $\bigcup_{s \in S} s = \fromto{1}{m}$.\\
        \textbf{Solution set:} Set of all feasible solutions with $|S| \leq k$.
    	\end{description}
    \end{mdframed}
\end{samepage}
The reduction from \textsc{Vertex Cover} to \textsc{Set Cover} by Karp \cite{DBLP:conf/coco/Karp72} is an SSP reduction.
\textsc{Vertex Cover} and \textsc{Set Cover} are basically the same problem, which means that the syntax of the input is the same, however the semantics behind the encoding are different.
Thus, the reduction of Karp implies a direct one-to-one correspondence not only between the universe elements but also between the edges and the sets.
Let $I = ((V, E), k)$ be the \textsc{Vertex Cover} instance.
In the reduction, each vertex $v \in V$ is mapped to the set $S_v$ and each edge $e \in E$ is mapped to the set $\fromto{1}{|E|} = \fromto{1}{m}$ according to their index.
Each set $S_v$ includes its the indices of the incident edges.
Thus, if a vertex $v \in V$ is taken into the vertex cover all incident edges are covered which is equivalent to including $S_v$ into the set cover such that all elements of $S_v$ are covered, which are exactly the indices of all incident edges to $v$.
Consequently, the one-to-one correspondence is defined by $f_I(v) = S_v$ for $v \in V$ as desired, which also implies that this is an SSP reduction.

\begin{samepage}
    \begin{mdframed}
    	\begin{description}   
        \item[]\textsc{Hitting Set}\hfill\\
        \textbf{Instances:} Sets $S_j \subseteq \fromto{1}{n}$ for $j \in \fromto{1}{m}$, number $k \in \N$.\\
        \textbf{Universe:} $\fromto{1}{n} =: \U$.\\
        \textbf{Feasible solution set:} The set of all $H \subseteq \fromto{1}{n}$ such that $H \cap S_j \neq \emptyset$ for all $j \in \fromto{1}{m}$.\\
        \textbf{Solution set:} Set of all feasible solutions with $|H| \leq k$.
    	\end{description}
    \end{mdframed}
\end{samepage}
Karp's reduction \cite{DBLP:conf/coco/Karp72} from \textsc{Vertex Cover} to \textsc{Hitting Set} is an SSP reduction.
Similar to \textsc{Set Cover}, \textsc{Hitting Set} is basically the same problem as \textsc{Vertex Cover}.
We only have to reinterpret the semantic of the input encoding as follows.
Let $I = ((V, E), k)$ be the \textsc{Vertex Cover} instance.
Each vertex $v \in V$ is mapped into the set $\fromto{1}{|V|}$ by its index $id(v)$.
Then, each edge $e = \{v, w\} \in E$ is mapped to the set $S_e = \{id(v), id(w)\}$ (thus $m = |E|$).
It follows that each vertex is exactly one-to-one correspondent to its index and the solutions are preserved because every vertex cover covers all edges which is equivalent that the corresponding hitting set induced by this vertex cover hits all $S_e$.
Consequently, we can define $f_I(v) = id(v)$ and we have an SSP reduction.

\begin{samepage}
    \begin{mdframed}
    	\begin{description}
        \item[]\textsc{Feedback Vertex Set}\hfill\\
        \textbf{Instances:} Directed Graph $G = (V, A)$, number $k \in \N$.\\
        \textbf{Universe:} Vertex set $V =: \U$.\\
        \textbf{Feasible solution set:} The set of all vertex sets $V' \subseteq V$ such that after deleting $V'$ from $G$, the resulting graph is cycle-free (i.e. a forest).\\
        \textbf{Solution set:} The set of all feasible solutions $V'$ of size at most $k$.
    	\end{description}
    \end{mdframed}
\end{samepage}
The reduction by Karp \cite{DBLP:conf/coco/Karp72} from \textsc{Vertex Cover} to \textsc{Feedback Vertex Set} is an SSP reduction.
Let $I = (G,k) = ((V, E), k)$ be the \textsc{Vertex Cover} instance and $(G', k') = ((V', A'), k')$ the \textsc{Feedback Vertex Set} instance.
The transformation maps every vertex $v \in V$ to itself ($v \in V'$) and every edge $\{v, w\} \in E$ is mapped to two arcs $(v,w), (w,v) \in A'$ orienting the edge in both directions.
We further set $k = k'$.
We define the injective embedding function $f_I$ by the identity on the vertices, i.e.\ every vertex in $V$ is mapped onto its corresponding twin in $V'$.
Note that the solutions are also directly one-to-one transformable and thus preserved.
To see this, assume to have a vertex cover for the graph $G$, then the same set removes all cycles from the directed graph $G'$, because a vertex cover is incident to all edges $E$ in $G$ and thus to all arcs $A'$ in $G'$ resulting in an independent set, which is obviously cycle-free.
On the other hand, a solution to the \textsc{Feedback Vertex Set} instance $G'$ needs to remove all cycles.
Each cycle is induced by two vertices connected by both arcs $(v,w), (w,v) \in A'$.
Thus, at least one vertex of $v$ and $w$ has to be deleted such that both arcs $(v,w), (w,v) \in A'$ are also deleted and do not form a cycle.
This, however, is obviously equivalent to a vertex cover in $G$.

\begin{samepage}
    \begin{mdframed}
    	\begin{description}
        \item[]\textsc{Feedback Arc Set}\hfill\\
        \textbf{Instances:} Directed Graph $G = (V, A)$, number $k \in \N$.\\
        \textbf{Universe:} Arc set $A =: \U$.\\
        \textbf{Feasible solution set:} The set of all arc sets $A' \subseteq A$ such that after deleting $A'$ from $G$, the resulting graph is cycle-free (i.e. a forest).\\
        \textbf{Solution set:} The set of all feasible solutions $A'$ of size at most $k$.
    	\end{description}
    \end{mdframed}
\end{samepage}
A modification of the reduction by Karp \cite{DBLP:conf/coco/Karp72} from \textsc{Vertex Cover} to \textsc{Feedback Arc Set} is an SSP reduction.
Let $I = (G,k) = ((V, E),k)$ be the \textsc{Vertex Cover} instance and $(G',k') = ((V', A'),k')$ the \textsc{Feedback Arc Set} instance.
This reduction is more complicated in the SSP framework in comparison to the reductions, we have seen before.
Due to the fact that the universe is changed from the vertex set to the arc set, we have to be more careful in analyzing the individual mappings.
First of all, we transform the vertices $v \in V$ to two vertices $v_0, v_1 \in V'$.
We define the injective embedding function $f_I$ by mapping each vertex $v \in V$ to the arc $(v_0, v_1) \in A'$, which is also the corresponding element in all solutions, that is $f_I(v) = (v_0, v_1)$.
At last, we transform each edge $\{v, w\} \in E$ to $|V + 1|$ once subdivided arcs from $v_1$ to $w_0$ and to $|V + 1|$ once subdivided arcs from $w_1$ to  $v_0$. Finally, we leave the parameter $k = k'$ unchanged. This completes the description of the reduction.
We denote the vertices added by the subdivision $v^i_1$ for the arcs between $v_1$ to $w_0$ and $w^i_1$ between $w_1$ to  $v_0$ for $i \in \fromto{1}{|V| + 1}.$
Overall, one vertex pair with an edge induces more than $|V|$ cycles $(v_0, v_1), (v_1, v^i_1) (v^i_1, w_0), (w_0, w_1), (w_1, w^i_1), (w^i_1, v_0)$ of length six.
By deleting the arc $(v_0, v_1)$, which corresponds to vertex $v$ in $G$ all of these induced cycles are disconnected. This implies that every vertex cover of $G$ is translated to a feedback arc set by the function $f_I$.
On the other hand, a feedback arc set must contain for every original edge $\set{v, w}$ either the arc $(w_0, w_1)$ or the arc $(_0, v_1)$, because otherwise a cycle remains. This shows that every feedback arc set of size at most $k$, when restricted to the set $f_I(V)$ encodes a vertex cover of $G$ of size at most $k$.
Hence we have an SSP reduction.

\begin{samepage}
    \begin{mdframed}
    	\begin{description} 
        \item[]\textsc{Uncapacitated Facility Location}\hfill\\
        \textbf{Instances:} Set of potential facilities $F = \fromto{1}{n}$, set of clients $C = \fromto{1}{m}$, fixed cost of opening facility function $f: F \rightarrow \Z$, service cost function $c: F \times C \rightarrow \Z$, cost threshold $k \in \Z$\\
        \textbf{Universe:} Facility set $F =: \U$.\\
        \textbf{Solution set:} The set of sets $F' \subseteq F$ s.t. $\sum_{i \in F'} f(i) + \sum_{j \in C} \min_{i \in F'} c(i, j) \leq k$.
    	\end{description}
    \end{mdframed}
\end{samepage}
Note that we define this problem explicitly as SSP and not as LOP because in the standard interpretation, the objective function is not linear.
The reduction by Cornuéjols, Nemhauser, and Wolsey \cite{cornuejols1983uncapicitated} from \textsc{Vertex Cover} to \textsc{Uncapacitated Facility Location} is an SSP reduction.
Let $I=((V,E),k)$ be the \textsc{Vertex Cover} instance and $(F, C, f, c)$ be the \textsc{Uncapacitated Facility Location} instance.
We let $F := V$ and $C := E$. The injective embedding function $f_I$ is given by $f_I(v) = v \in F$ for $v \in V$.
Further, we define $c(v, e) = 0$ if $v \in e$ and $c(v, e) = |V|+1$ otherwise.
At last, we set $f(v) = 1$ for all $v \in F$ and leave the parameter $k$ unchanged.
The one-to-one correspondence between the solutions can be explained by analyzing the correctness of the reduction.
On the one hand, a vertex cover $S$ is a solution to the facility location problem, because at most $k$ many facilities are opened and all clients $e \in C$ are served, which corresponds exactly that all edges $e \in E$ are covered by $v \in S$.
On the other hand, if there is a facility set $F'$ with cost $k$, then it has to include at most $k$ facilities and additionally serve all clients $e \in C$, i.e. the corresponding vertex set covers all edges $e \in E$ (because of the high costs of $c(v, e) = |V|+1$ for $v \notin e$).
Thus, this is an SSP reduction.

\begin{samepage}
    \begin{mdframed}
    	\begin{description} 
        \item[]\textsc{p-Center}\hfill\\
        \textbf{Instances:} Set of potential facilities $F = \fromto{1}{n}$, set of clients $C = \fromto{1}{m}$, service cost function $c: F \times C \rightarrow \Z$, facility threshold $p \in \N$, cost threshold $k \in \Z$\\
        \textbf{Universe:} Facility set $F =: \U$.\\
        \textbf{Solution set:} The set of sets $F' \subseteq F$ s.t. $|F'| \leq p$ and $\max_{j \in C} \min_{i \in F'} c(i, j) \leq k$.
    	\end{description}
    \end{mdframed}
\end{samepage}
Like in the previous problem, we cannot interpret this problem as an LOP problem since the objective is not linear.
A modified version of the reduction by Cornuéjols, Nemhauser, and Wolsey \cite{cornuejols1983uncapicitated} from \textsc{Vertex Cover} to \textsc{Uncapacitated Facility Location} is an SSP reduction.
Let $I = ((V,E), k)$ be the \textsc{Vertex Cover} instance and $(F, C, c, p, k')$ be the \textsc{p-Center} instance.
We map each $v \in V$ to $v \in F$ and each $e \in E$ to $e \in C$.
Further, we define $c(v, e) = 0$ if $v \in e$ and $c(v, e) = |V|+1$ otherwise.
At last, we set $p$ equal to the size $k$ of the vertex cover and $k' = 0$.
Note that this implies that in a solution the objective has to be $0$.
We now argue analogous to the reduction to \textsc{Uncapacitated Facility Location}.
The embedding function $f_I$ is given by the one-to-one correspondence between the universe elements $v \in V$ and $v \in F$.
The rest of the argument is analogous to the above.

\begin{samepage}
    \begin{mdframed}
    	\begin{description} 
        \item[]\textsc{p-Median}\hfill\\
        \textbf{Instances:} Set of potential facilities $F = \fromto{1}{n}$, set of clients $C = \fromto{1}{m}$, service cost function $c: F \times C \rightarrow \Z$, facility threshold $p \in \N$, cost threshold $k \in \Z$\\
        \textbf{Universe:} Facility set $F =: \U$.\\
        \textbf{Solution set:} The set of sets $F' \subseteq F$ s.t. $|F'| \leq p$ and $\sum_{j \in C} \min_{i \in F'} c(i, j) \leq k$.
    	\end{description}
    \end{mdframed}
\end{samepage}
Like in the previous problem, we cannot interpret this problem as an LOP problem since the objective is not linear.
A modified version of the reduction by Cornuéjols, Nemhauser, and Wolsey \cite{cornuejols1983uncapicitated} from \textsc{Vertex Cover} to \textsc{Uncapacitated Facility Location} is an SSP reduction.
It is the same as for \textsc{p-Center}.
Let $I = ((V,E), k)$ be the \textsc{Vertex Cover} instance and $(F, C, c, p, k')$ be the \textsc{p-Median} instance.
We map each $v \in V$ to $v \in F$ and each $e \in E$ to $e \in C$.
Further, we define $c(v, e) = 0$ if $v \in e$ and $c(v, e) = |V|+1$ otherwise.
At last, we set $p = k$ equals to the size of the vertex cover and $k' = 0$.
The one-to-one correspondence between the universe elements $v \in V$ and $v \in F$ defines the embedding function $f_I$.
The rest of the argument is analogous to the above.

\begin{samepage}
    \begin{mdframed}
    	\begin{description} 
        \item[]\textsc{Independent Set}\hfill\\
        \textbf{Instances:} Graph $G = (V,E)$, number $k \in \N$.\\
        \textbf{Universe:} Vertex set $V =: \U$.\\
        \textbf{Feasible solution set:} The set of all independent sets.\\
        \textbf{Solution set:} The set of all independent sets of size at least $k$.
    	\end{description}
    \end{mdframed}
\end{samepage}
For a reduction from \textsc{3Sat} to \textsc{Independent Set}, we use a folklore reduction, which is based on the reduction from \textsc{3Sat} to \textsc{Vertex Cover} by Garey and Johnson \cite{DBLP:books/fm/GareyJ79}.
Let $I = (L,C)$ be the \textsc{3Sat} instance.
We define a  corresponding \textsc{Vertex Cover} instance $((V',E'), k')$.
Every literal $\ell \in L$ is transformed to a vertex $v_\ell \in V'$ and every pair $(\ell, \overline \ell) \in L \times L$ is again transformed to an edge $\{v_\ell, v_{\overline \ell}\} \in E'$ between the corresponding literal vertices.
Every clause $c \in C$ is again transformed to a 3-clique, where each vertex $v^{c}_{\ell_{i_1}}, v^{c}_{\ell_{i_2}}, v^{c}_{\ell_{i_3}}$ represents a literal in the clause.
In contrast to the \textsc{Vertex Cover} reduction, the clause vertices are connected to the opposite literal vertex. 
For example in \Cref{fig:reduction:3sat-independent-set}, we have that the 3SAT clause is given by $\overline \ell_1 \lor \overline \ell_2 \lor \ell_3$.
Hence, the clique is connected to the corresponding literal vertices $\ell_1, \ell_2$ and $\overline \ell_3$.
Finally, we define the parameter $k'$ by $k' := |L|/2 + |C|$.

\tikzstyle{vertex}=[draw,circle,fill=black, minimum size=4pt,inner sep=0pt]
\tikzstyle{edge} = [draw,-]
\begin{figure}[thpb]
\centering
\resizebox{0.67\textwidth}{!}{
\begin{tikzpicture}[scale=1,auto]

\node[vertex] (x1) at (0,0) {}; \node[above] at (x1) {$v_{\ell_1}$};
\node[vertex] (notx1) at (2,0) {}; \node[above] at (notx1) {$v_{\overline \ell_1}$};
\draw[edge] (x1) to (notx1);

\node[vertex] (x2) at (4,0) {}; \node[above] at (x2) {$v_{\ell_2}$};
\node[vertex] (notx2) at (6,0) {}; \node[above] at (notx2) {$v_{\overline \ell_2}$};
\draw[edge] (x2) to (notx2);

\node[vertex] (x3) at (8,0) {}; \node[above] at (x3) {$v_{\ell_3}$};
\node[vertex] (notx3) at (10,0) {}; \node[above] at (notx3) {$v_{\overline \ell_3}$};
\draw[edge] (x3) to (notx3);

\node[vertex] (c1) at (4,-2.25) {}; \node[below] at (c1) {$v^{c_1}_{\overline \ell_1}$};
\node[vertex] (c2) at (5,-1.25) {}; \node[above right] at (c2) {$v^{c_1}_{\overline \ell_2}$};
\node[vertex] (c3) at (6,-2.25) {}; \node[below] at (c3) {$v^{c_1}_{\ell_3}$};
\node[] at (5,-1.92) {$c_1$};
\draw[edge] (c1) to (c2) to (c3) to (c1);
\draw[edge] (x1) to (c1);
\draw[edge] (x2) to (c2);
\draw[edge] (notx3) to (c3);

\node at ($(x1)+(-1,0)$) {$W$};
\draw[dashed,rounded corners] ($(x1)+(-.5,+.7)$) rectangle ($(notx3) + (.5,-.4)$);

\end{tikzpicture}
}
\caption{Classical reduction of \textsc{3Sat} to \textsc{Independent Set} for $\varphi = (\overline \ell_1 \lor \overline \ell_2 \lor \ell_3)$.}
\label{fig:reduction:3sat-independent-set}
\end{figure}

Again, the universe elements of \textsc{3Sat} are injectively mapped to the literal vertices in $W \subseteq V'$, where $f_{I}(\ell) = v_\ell \in V'$.
All solutions included exactly one of $v_\ell$ or $v_{\overline \ell} \in W$ corresponding to $\ell$ and $\overline \ell$ in the \textsc{3Sat} solution as well as one additional vertex for each clause.
Note that every independent set has size at most $k'$, since only one vertex of every 3-clique and every 2-clique can be taken into the solution. 
Whenever a clause $c \in C$ is not satisfied, all three vertices in the clause $c$ are blocked from the independent set from the opposite literals that are in the solution.

In total, we have that every independent set of size at least $k'$ restricted to the set $W$ corresponds to a solution of \textsc{3Sat}.
On the other hand, every solution of \textsc{3Sat} can be transferred over to the set $W$ and be completed in at least one way to an independent set of size at least $k'$, i.e. the following equation holds true
\begin{align*}
    \{f_{I}(S) : S \subseteq L \ \text{s.t.} \ S \in \sol_{\textsc{3Sat}}\} = \{S' \cap f_{I}(L) : S' \in \sol_{IS}\}.
\end{align*}
Thus, the SSP reduction is correct.

\begin{samepage}
    \begin{mdframed}
    	\begin{description}   
        \item[]\textsc{Clique}\hfill\\
        \textbf{Instances:} Graph $G = (V, E)$, number $k \in \N$.\\
        \textbf{Universe:} Vertex set $V =: \U$.\\
        \textbf{Feasible solution set:} The set of all cliques.\\
        \textbf{Solution set:} The set of all cliques of size at least $k$.
    	\end{description}
    \end{mdframed}
\end{samepage}
There is a reduction by Garey and Johnson \cite{DBLP:books/fm/GareyJ79} from \textsc{Independent Set} to \textsc{Clique}, which is an SSP reduction.
Let $I = (G, k) = ((V, E), k)$ be the \textsc{Independent Set} instance and $(G', k') = ((V', E'), k')$ the \textsc{Clique} instance.
The reduction simply maps every vertex $v \in V$ to itself in $V'$.
Furthermore every edge $\{v, w\} \in E$ mapped to a non-edge $\{v, w\} \notin E'$ and every non-edge $\{v, w\} \notin E$ is mapped to an edge $\{v, w\} \in E'$.
Thus, every independent set $S \subseteq V$ in $G$ is transformed in to a clique of the same vertices $S \subseteq V = V'$ in $G'$.
By this transformation, the vertices are directly one-to-one correspondent with $f_I(v) = v$.
Thus, this reduction is an SSP reduction.

\begin{samepage}
    \begin{mdframed}
    	\begin{description}   
        \item[]\textsc{Subset Sum}\hfill\\
        \textbf{Instances:} Numbers $\fromto{a_1}{a_n} \subseteq \N$, and target value $M \in \N$.\\
        \textbf{Universe:} $\fromto{a_1}{a_n} =: \U$.\\
        \textbf{Solution set:} The set of all sets $S \subseteq \U$ with $\sum_{a_i \in S}a_i = M$.
    	\end{description}
    \end{mdframed}
\end{samepage}
The reduction by Sipser \cite{DBLP:books/daglib/0086373} from \textsc{3Sat} to \textsc{Subset Sum} is an SSP reduction.
Let $I = (L, C)$ be the \textsc{3Sat} instance.
We define a \textsc{Subset Sum} instance $(\fromto{a_1}{a_n}, M)$.
We create a table as depicted in \Cref{fig:reduction:3sat-subset-sum} to transform each literal pair $(\ell_i, \overline \ell_i)$ (or variable $x_i$) into a number whose binary representation has length $|L|/2 + |C|$.

\tikzstyle{vertex}=[draw,circle,fill=black, minimum size=4pt,inner sep=0pt]
\tikzstyle{edge} = [draw,-]
\begin{figure}[thpb]
\centering
\begin{tabular}{c|c|c|c|c|}
 & $x_1$ & $x_2$ & $x_3$ & $c_1 = \overline \ell_1 \lor \overline \ell_2 \lor \ell_3$ \\
\hline
$s_1$ & 1 & 0 & 0 & 0 \\
\hline
$s_2$ & 1 & 0 & 0 & 1 \\
\hline
$s_3$ & 0 & 1 & 0 & 0 \\
\hline
$s_4$ & 0 & 1 & 0 & 1 \\
\hline
$s_5$ & 0 & 0 & 1 & 1 \\
\hline
$s_6$ & 0 & 0 & 1 & 0 \\
\hline
$s_7$ & 0 & 0 & 0 & 1\\
\hline
$s_8$ & 0 & 0 & 0 & 2\\
\hline
$\Sigma$ & 1 & 1 & 1 & 4\\
\end{tabular}
\caption{Classical reduction of \textsc{3Sat} to \textsc{Subset Sum} for $\varphi = (\overline \ell_1 \lor \overline \ell_2 \lor \ell_3)$.}
\label{fig:reduction:3sat-subset-sum}
\end{figure}

We fix an ordering on the variables and clauses to define the table.
Each variable and each clause has a unique column $i$ defining the $i$-th digit of each number.
For each literal $\ell$, we add a number that has a 1 exactly at the position of the corresponding variable and additional 1s at the positions of clauses that contain the literal.
The target sum $M$ is to be defined as $1$ in each variable column and $4$ in each clause column.
This means that exactly one of the numbers corresponding to a literal pair can be added to the solution.
Furthermore to satisfy a clause $c \in C$, the sum of the column corresponding to $c$ has to be exactly $4$.
Thus, we add two numbers for each clause $c \in C$, one that contains a $1$ and one that contains a $2$ in the column of clause $c$.
Consequently, whenever a clause $c$ is satisfied, i.e. the sum of the columns of $c$ is greater than $1$, the sum can be expanded to exactly $4$.
This reduction can be transformed into binary (and any other) encoding as well by introducing leading zeros such that no carryover occurs.

Note that the described \textsc{Subset Sum} instance in total contains the following numbers: Two numbers $a_i, \overline a_i$ for every literal pair $(\ell_i, \overline \ell_i)$ plus some additional helper numbers. We define the injective embedding function $f_I$ by $f_I(\ell_i) = a_i$ and $f_I(\overline \ell_i) = \overline a_i$.
Note that with respect to this embedding $f_I$ we have the SSP property, i.e. every subset of numbers with total sum $M$ restricted to the set $f_I(L)$ encodes a correct solution of 3SAT.
Therefore, this reduction is an SSP reduction.

\begin{samepage}
    \begin{mdframed}
    	\begin{description}   
        \item[]\textsc{Knapsack}\hfill\\
        \textbf{Instances:} Objects with prices and weights $\fromto{(p_1, w_1)}{(p_n, w_n)} \subseteq \N^2$, and $W, P \in \N$.\\
        \textbf{Universe:} $\fromto{(p_1, w_1)}{(p_n, w_n)} =: \U$.\\
        \textbf{Feasible solution set:} The set of all $S \subseteq \U$ with $\sum_{(p_i, w_i) \in S}w_i \leq W$.\\
        \textbf{Solution set:} The set of feasible $S$ with $\sum_{(p_i, w_i) \in S} p_i \geq P$.
    	\end{description}
    \end{mdframed}
\end{samepage}
The reduction from \textsc{Subset Sum} to \textsc{Knapsack} is an easy-to-see folklore result.
The \textsc{Subset Sum} instance $I = (\{a_1, \ldots, a_n\}, M)$ can be transformed to a \textsc{Knapsack} instance $(\{(a_1, a_1), \ldots, (a_n, a_n)\}, W, P)$ of objects of the same price and weight.
Furthermore, the target value $M$ is mapped to the weight threshold $W = M$ and price threshold $P = M$.
Thus, $\sum_{(a_i, a_i) \in S} a_i \geq M$ and $\sum_{(a_i, a_i) \in S}a_i \leq M$ is equivalent to $\sum_{(a_i, a_i) \in S}a_i = M$.
The one-to-one correspondence between the number $a_i$ and the object $(a_i, a_i)$, i.e. $f_I(a_i) = (a_i, a_i)$, such that this reduction is an SSP reduction.

\begin{samepage}
    \begin{mdframed}
    	\begin{description}
        \item[]\textsc{Partition}\hfill\\
        \textbf{Instances:} Numbers $\fromto{a_1}{a_n} \subseteq \N$.\\
        \textbf{Universe:} $\fromto{a_1}{a_n} =: \U$.\\
        \textbf{Solution set:} The set of all sets $S \subseteq \U$ with $a_n \in S$ and $\sum_{a_i \in S}a_i = \sum_{a_j \notin S}a_j$.
    	\end{description}
    \end{mdframed}
\end{samepage}
Note that we demand w.l.o.g. the last element to be in the solution.
With this, we avoid symmetry of solutions, i.e. if $S$ is a solution then $\U \setminus S$ is also a solution, which is not compatible with the SSP framework in the following reduction. 
The problems \textsc{Subset Sum} and \textsc{Partition} are almost equivalent such that the reduction between them is easy-to-see.
We use basically the same reduction as Karp's \cite{DBLP:conf/coco/Karp72} from \textsc{Knapsack} to \textsc{Partition}.
For this let $I = (\{a_1, \ldots, a_n\}, M)$ be the \textsc{Subset Sum} instance.
We map each number $a_i$ to itself in the \textsc{Partition} instance and add additional numbers $M+1$ and $\sum_{i} a_i + 1 - M$, whereby we set $a_n = \sum_{i} a_i + 1 - M$.
Thus, the first $n-2$ $a_i$ in \textsc{Partition} are one-to-one correspondent with the $a_i$ from \textsc{Subset Sum}, i.e. $f_I(a_i) = a_i$, such that this reduction is an SSP reduction.

\begin{samepage}
    \begin{mdframed}
    	\begin{description}
        \item[]\textsc{Two Machine Scheduling}\hfill\\
        \textbf{Instances:} Jobs with processing time $\fromto{t_1}{t_n} \subseteq \N$, threshold $T \in \N$.\\
        \textbf{Universe:} The set of jobs $\fromto{t_1}{t_n} =: \U$.\\
        \textbf{Solution set:} The set of all $J_1 \subseteq \U$ such that $t_n \in J_1$ and $\sum_{t_i \in J_1}t_i \leq T$ and $\sum_{t_j \in J_2}t_j \leq T$ with $J_2 = \U \setminus J_1$, i.e. both machines finish in time $T$.
    	\end{description}
    \end{mdframed}
\end{samepage}
Again, we demand w.l.o.g. the last element to be in the solution for the first machine as in \textsc{Partition}.
With this, we break the symmetry of solutions, which is not compatible with the SSP framework in the following reduction.
The reduction from \textsc{Partition} to \textsc{Two-Machine-Scheduling} is a folklore reduction, which exploits the equivalence of the problems and is therefore easy-to-see.
For this let $I = \{a_1, \ldots, a_n\}$ be the \textsc{Partition} instance.
We transform each number $a_i$ in the \textsc{Partition} instance to a job with processing time $a_i$ in the \textsc{Two-Machine-Scheduling} instance and set the threshold $T = \frac{1}{2} \sum_i a_i$.

Because for both sets $J_1$ and $J_2$ holds $\sum_{a_i \in J_1}a_i \leq T$ and $\sum_{a_j \in J_2}a_j \leq T$ as well as that $\sum_{a_i \in J_1}a_i + \sum_{a_j \in J_2}a_j = 2T$.
We can transform the constraints above to the equivalent constraints $\sum_{t_i \in J_1}t_i = T$ and $\sum_{t_j \in J_2}t_j = T$.
Therefore, we can interpret the two sets of the partition as the two machines in \textsc{Two-Machine-Scheduling} and have a direct one-to-one correspondence between the solutions with $f_I(a_i) = a_i$.
Thus, this reduction is an SSP reduction.

\begin{samepage}
    \begin{mdframed}
    	\begin{description}
        \item[]\textsc{Directed Hamiltonian Path}\hfill\\
        \textbf{Instances:} Directed Graph $G = (V, A)$, Vertices $s, t \in V$.\\
        \textbf{Universe:} Arc set $A =: \U$.\\
        \textbf{Solution set:} The set of all sets $C \subseteq A$ forming a Hamiltonian path going from $s$ to $t$.
    	\end{description}
    \end{mdframed}
\end{samepage}
The reduction from \textsc{3Sat} to \textsc{Directed Hamiltonian Path} from Arora and Barak \cite{DBLP:books/daglib/0023084} is an SSP reduction.
Let $I = (L, C)$ be the \textsc{3Sat} instance that we transform to the \textsc{Directed Hamiltonian Cycle} instance $G = (V, A)$.
An example of the transformation can be found in \Cref{fig:reduction:3sat-directed-hamiltonian-path}.
First, we introduce two additional vertices $s, t \in V$.
For each literal pair $(\ell_i, \overline \ell_i)$ (or variable $x_i$), we introduce a path with $4|C|$ vertices, where we denote the vertices along the path with $v^1_i, \ldots, v^{4|C|}_i$.
The path is directed in both ways such that $v^1_i$ is reachable from $v^{4|C|}_i$ and vice versa.
The direction from $v^1_i$ to $v^{4|C|}_i$ encodes that $\ell$ is taken into the solution and the other direction encodes that $\overline \ell$ is taken into the solution.
Additionally, we add arcs $(s, v^1_1)$ and $(s, v^{4|C|}_1)$, $(v^1_{|L|/2}, t)$ and $(v^{4|C|}_{|L|/2}, t)$ as well as $(v^1_i, v^{4|C|}_{i+1})$ and $(v^{4|C|}_{i}, v^1_{i+1})$ for all $i \in \fromto{1}{{|L|/2}-1}$.
At last, we need to simulate the clauses.
For this, we add a vertex for each clause $c_j \in C$ and connect them to the variable paths by introducing two arcs for each literal $\ell_i$ in the clause $c_j$.
If $\ell_i$ is the non-negated literal of variable $x_i$, then we add the arcs $(x^{4j-1}, c_j)$ and $(c_j, x^{4j-2})$.
Otherwise, we add the arcs $(x^{4j-2}, c_j)$ and $(c_j, x^{4j-1})$.
Thus, one can satisfy the clause $c_j$, i.e. traveling over the vertex $c_j$, if and only if by traveling in the correct direction, i.e. whenever a literal in the clause is taken into the solution.
Overall, a Hamiltonian path from $s$ to $t$ includes all vertices that is all clause vertices, i.e. all clauses are satisfied, and all vertices defined by literals.
Consequently, each variable is assigned a value by the direction of the taken path.

\tikzstyle{vertex}=[draw,circle,fill=black, minimum size=4pt,inner sep=0pt]
\tikzstyle{edge} = [draw,-]
\tikzstyle{arc} = [draw,->]
\tikzstyle{doublearc} = [draw,<->]
\begin{figure}[thpb]
\centering
\begin{tikzpicture}[scale=1,auto]

\node[vertex] (s) at (0,5) {}; \node[above] at (s) {$s$};
\node[vertex] (t) at (0,1) {}; \node[below] at (t) {$t$};

\node[] (x1s) at (-2.5,4) {}; \node[right] at (x1s) {$x_1$};
\node[vertex] (x11) at ($(x1s) + (1,0)$) {};
\node[vertex] (x12) at ($(x1s) + (2,0)$) {};
\node[vertex] (x13) at ($(x1s) + (3,0)$) {};
\node[vertex] (x14) at ($(x1s) + (4,0)$) {};
\draw[doublearc] (x11) to (x12);
\draw[doublearc] (x12) to (x13);
\draw[doublearc] (x13) to (x14);

\node[] (x2s) at (-2.5,3) {}; \node[right] at (x2s) {$x_2$};
\node[vertex] (x21) at ($(x2s) + (1,0)$) {};
\node[vertex] (x22) at ($(x2s) + (2,0)$) {};
\node[vertex] (x23) at ($(x2s) + (3,0)$) {};
\node[vertex] (x24) at ($(x2s) + (4,0)$) {};
\draw[doublearc] (x21) to (x22);
\draw[doublearc] (x22) to (x23);
\draw[doublearc] (x23) to (x24);

\node[] (x3s) at (-2.5,2) {}; \node[right] at (x3s) {$x_3$};
\node[vertex] (x31) at ($(x3s) + (1,0)$) {};
\node[vertex] (x32) at ($(x3s) + (2,0)$) {};
\node[vertex] (x33) at ($(x3s) + (3,0)$) {};
\node[vertex] (x34) at ($(x3s) + (4,0)$) {};
\draw[doublearc] (x31) to (x32);
\draw[doublearc] (x32) to (x33);
\draw[doublearc] (x33) to (x34);

\draw[arc] (s) to (x11);
\draw[arc] (s) to (x14);
\draw[arc] (x11) to (x21);
\draw[arc] (x11) to (x24);
\draw[arc] (x14) to (x21);
\draw[arc] (x14) to (x24);
\draw[arc] (x21) to (x31);
\draw[arc] (x21) to (x34);
\draw[arc] (x24) to (x31);
\draw[arc] (x24) to (x34);
\draw[arc] (x31) to (t);
\draw[arc] (x34) to (t);

\node[vertex] (c1) at (4,3) {}; \node[right] at (c1) {$c_1 = \overline \ell_1 \lor \overline \ell_2 \lor \ell_3$};
\draw[arc, bend left] (x13) to (c1);
\draw[arc, bend right] (c1) to (x12);
\draw[arc, bend left] (x23) to (c1);
\draw[arc, bend right] (c1) to (x22);
\draw[arc, bend right] (x32) to (c1);
\draw[arc, bend left] (c1) to (x33);

\end{tikzpicture}
\caption{Classical reduction of \textsc{3Sat} to \textsc{Directed Hamiltonian Path} for $\varphi = (\overline \ell_1 \wedge \overline \ell_2 \wedge \ell_3)$.}
\label{fig:reduction:3sat-directed-hamiltonian-path}
\end{figure}

For this reduction, it is not directly obvious, how we find a one-to-one correspondence, because a whole path corresponds to one literal.
However, we can use exactly one arc of that path to act as representative.
We define the function $f_I$ by $f_I(\ell_i) = (x^1_i, x^2_i)$ and $f_I(\overline \ell_i) = (x^2_i, x^1_i)$.
Thus, we have a one-to-one correspondence between the literals and the arcs.
Furthermore by correctness of the reduction, we have a one-to-one correspondence between the solutions to $(L, C)$ and $G = (V, A)$ by Hamiltonian path using the either one of the arcs $(x^1_i, x^2_i)$ and $(x^2_i, x^1_i)$ for each $i \in \fromto{1}{|L|/2}$ and including each clause vertex $c_j$ for $j \in \fromto{1}{|C|}$.

\begin{samepage}
    \begin{mdframed}
    	\begin{description}
        \item[]\textsc{Directed Hamiltonian Cycle}\hfill\\
        \textbf{Instances:} Directed Graph $G = (V, A)$.\\
        \textbf{Universe:} Arc set $A =: \U$.\\
        \textbf{Solution set:} The set of all sets $C \subseteq A$ forming a Hamiltonian cycle.
    	\end{description}
    \end{mdframed}
\end{samepage}
We extend the reduction from \textsc{3Sat} to \textsc{Directed Hamiltonian Cycle} from Arora and Barak \cite{DBLP:books/daglib/0023084} by simply adding an arc from $t$ to $s$.
Obviously, all possible cycles have to go through $s$ and $t$.
This has no influence on the rest of the reduction, especially on the solutions and the one-to-one correspondence of the literals and the arcs.
Consequently, the reduction is still an SSP reduction.

\begin{samepage}
    \begin{mdframed}
    	\begin{description}  
        \item[]\textsc{Undirected Hamiltonian Cycle}\hfill\\
        \textbf{Instances:} Graph $G = (V, E)$.\\
        \textbf{Universe:} Edge set $E =: \U$.\\
        \textbf{Solution set:} The set of all sets $C \subseteq E$ forming a Hamiltonian cycle.
    	\end{description}
    \end{mdframed}
\end{samepage}
Karp's reduction \cite{DBLP:conf/coco/Karp72} from \textsc{Directed Hamiltonian Cycle} to \textsc{Undirected Hamiltonian Cycle} is an SSP reduction.
Let $I = (V, A)$ be the \textsc{Directed Hamiltonian Cycle} and $(V', E')$ be the \textsc{Undirected Hamiltonian Cycle} instance.
The reduction replaces each vertex $v$ with three vertices $v'_{in}, v', v'_{out}$ and adds edges $\{v'_{in}, v'\}, \{v', v'_{out}\}$ to connect the three vertices to a path.
All arcs $(v, w) \in A$ are replaced by one edge $\{v'_{out}, w'_{in}\}$ essentially preserving the one-to-one correspondence between the elements to the corresponding unique edge, i.e. $f_I(v,w) = \{v'_{out}, w'_{in}\}$.
The solutions are also preserved because no additional solutions are added and all original solutions are preserved (every Hamiltonian cycle has to run through each $v'_{in}, v', v'_{out}$ exactly once in the specified order).

\begin{samepage}
    \begin{mdframed}
    	\begin{description}
        \item[]\textsc{Traveling Salesman Problem}\hfill\\
        \textbf{Instances:} Complete Graph $G = (V, E)$, weight function $w: E \rightarrow \Z $, number $k \in \N$.\\
        \textbf{Universe:} Edge set $E =: \U$.\\
        \textbf{Feasible solution set:} The set of all TSP tours $T\subseteq E$.\\
        \textbf{Solution set:} The set of feasible $T$ with $w(T) \leq k$.
    	\end{description}
    \end{mdframed}
\end{samepage}
There is an easy-to-see folklore reduction from \textsc{Undirected Hamiltonian Cycle} to \textsc{Traveling Salesman Problem}, which is an SSP reduction.
Let $I = (V, E)$ be the \textsc{Undirected Hamiltonian Cycle} instance and $(V', E', w', k')$ the \textsc{Traveling Salesman Problem} instance.
Every vertex $v \in V$ is mapped to itself $v \in V'$.
Furthermore, we map each edge $e \in E$ to itself in $E'$ and add additional edges to form a complete graph.
The weight function $w': E' \rightarrow \Z$ is defined for all $e' \in E'$ as
$$
    w(e') = \begin{cases}
        0, \quad \text{if} \ e' \in E\\
        1, \quad \text{if} \ e' \notin E
    \end{cases}
$$
At last, we set $k' = 0$ resulting that only the edges from $E$ are usable.
Thus, we preserve the one-to-one correspondence between the edges with $f_I(e) = e$.
Consequently, this is an SSP reduction.

\begin{samepage}
    \begin{mdframed}
    	\begin{description}
        \item[]\textsc{Directed Two Vertex Disjoint Path}\hfill\\
        \textbf{Instances:} Directed graph $G = (V, A)$, $s_i, t_i \in V$ for $i \in \{1, 2\}$.\\
        \textbf{Universe:} Arc set $A =: \U$.\\
        \textbf{Solution set:} The set of all sets set $A' \subseteq A$ such that $A' = A(P_1) \cup A(P_2)$, where $P_1$ and $P_2$ are some vertex-disjoint paths s.t. $P_i$ goes from $s_i$ to $t_i$ for $i \in \set{1,2}$. 
    	\end{description}
    \end{mdframed}
\end{samepage}
The reduction by Fortune, Hopcroft and Wyllie \cite{DBLP:journals/tcs/FortuneHW80} is an SSP reduction.
The reduction makes extensive use of a switch gadget, which is depicted in \Cref{fig:reduction:3sat-directed-two-disjoint-path:switch}.
The gadget has four input arcs, $B,C,W$ and $Y$, and four output arcs, $A,D,X$ and $Z$.
The idea of this switch gadget is that if you have two disjoint paths going through the gadget entering at $B$ and $C$, then the path entering at $B$ must leave at $D$ and the one entering at $C$ must leave at $A$, and additionally either a path from $W$ to $X$ exists or a path from $Y$ to $Z$ exists.
We can then use the first of the two paths to run first through the switches and then to the rest of the construction and the second path to run through the switches as in visualized in \Cref{fig:reduction:3sat-directed-two-disjoint-path}.
In doing so, the second path running only through the switches controls that the first path running through the construction is only able to satisfy the clauses according to the assignment of the variables.

\tikzstyle{vertex}=[draw,circle,fill=black, minimum size=4pt,inner sep=0pt]
\tikzstyle{edge} = [draw,-]
\tikzstyle{arc} = [draw,->]
\tikzstyle{doublearc} = [draw,<->]
\begin{figure}[thpb]
\centering
\begin{tikzpicture}[scale=0.5,auto]

\node[vertex] (c) at (0,0) {};
\node[vertex] (a) at (0,-6) {};
\draw[arc] ($(c) + (0,1)$) to node[left] {$C$} (c);
\draw[arc] (a) to node[left] {$A$} ($(a) - (0,1)$);

\node[vertex] (01) at ($(c) + (-1,-1)$) {};
\node[vertex] (02) at ($(c) + (-1,-2)$) {};
\node[vertex] (03) at ($(c) + (-1,-3)$) {};
\node[vertex] (04) at ($(c) + (-1,-4)$) {};
\node[vertex] (05) at ($(c) + (-1,-5)$) {};
\draw[arc] (c) to (01);
\draw[arc] (01) to (02);
\draw[arc] (02) to (03);
\draw[arc] (03) to (04);
\draw[arc] (04) to (05);
\draw[arc] (05) to (a);

\node[vertex] (11) at ($(c) + (1,-1)$) {};
\node[vertex] (12) at ($(c) + (1,-2)$) {};
\node[vertex] (13) at ($(c) + (1,-3)$) {};
\node[vertex] (14) at ($(c) + (1,-4)$) {};
\node[vertex] (15) at ($(c) + (1,-5)$) {};
\draw[arc] (c) to (11);
\draw[arc] (11) to (12);
\draw[arc] (12) to (13);
\draw[arc] (13) to (14);
\draw[arc] (14) to (15);
\draw[arc] (15) to (a);

\node[vertex] (08) at (-4,-1) {};
\node[vertex] (09) at (-3,-1) {};
\node[vertex] (00) at (-2,-1) {};
\draw[arc] ($(08) - (1,0)$) to node[above] {$W$} (08);
\draw[arc] (08) to (09);
\draw[arc] (09) to (00);
\draw[arc] (00) to (01);
\draw[arc, out=135, in=270, looseness=0.5] (15) to (09);

\node[vertex] (18) at (4,-1) {};
\node[vertex] (19) at (3,-1) {};
\node[vertex] (10) at (2,-1) {};
\draw[arc] ($(18) + (1,0)$) to node[above] {$Y$} (18);
\draw[arc] (18) to (19);
\draw[arc] (19) to (10);
\draw[arc] (10) to (11);
\draw[arc, out=45, in=270, looseness=0.5] (05) to (19);

\node[vertex] (b) at (2,-6) {};
\draw[arc] (b) to (04);
\draw[arc] (b) to (14);
\draw[arc] ($(b) - (0,1)$) to node[right] {$B$} (b);

\node[vertex] (d) at (2,0) {};
\draw[arc] (00) to (d);
\draw[arc] (10) to (d);
\draw[arc] (d) to node[right] {$D$} ($(d) + (0,1)$);

\node[vertex] (011) at (-4,-2) {};
\draw[arc] (02) to (011);
\draw[arc] (011) to node[below] {$X$} ($(011) - (1,0)$);

\node[vertex] (111) at (4,-2) {};
\draw[arc] (12) to (111);
\draw[arc] (111) to node[below] {$Z$} ($(111) + (1,0)$);

\end{tikzpicture}
\caption{The switch gadget.}
\label{fig:reduction:3sat-directed-two-disjoint-path:switch}
\end{figure}

Let $I = (L, C)$ be the \textsc{3Sat} instance and $(V, A, s_1, t_1, s_2, t_2)$ be the \textsc{Directed Two Disjoint Path} instance.
First, we introduce four vertices $s_1, t_1, s_2, t_2$ representing the start and ends of the two disjoint paths.
For every literal $\ell \in  L$, we create a path $\ell^1, \ldots, \ell^{4|C|}$ of $4|C|$ vertices.
For every literal pair $\ell_i, \overline \ell_i$ (or variable $x_i$), we connect the paths by introducing two additional vertices $x^s_i$ and $x^t_i$ with arcs $(x^s_i, \ell^1_i), (x^s_i, \overline \ell^1_i)$ and $(\ell^{4|C|}_i, x^t_i), (\overline \ell^{4|C|}_i, x^t_i)$.
We connect the literal gadgets by adding the arcs $(x^t_i, x^t_{i+1})$ for all $i \in \fromto{1}{|L|/2-1}$.
For each clause $c_j \in C$, we add two vertices $c^1_j$ and $c^2_j$ and connect them by three arcs.
We connect these clause vertices with the arcs $(c^2_{|C|}, t_1)$ as well as $(c^2_j, c^1_{j+1})$ for $j \in \fromto{1}{|C|-1}$.
At last, we connect the literal paths with the clause path with an arc $(x^t_{|L|/2}, c^1_1)$.

Now, we have everything to introduce the switches into the construction.
We stack the switches one after another by merging the $C$ and $D$ input arcs and the $A$ and $B$ input arcs, respectively.
Then, we connect $s_2$ to the input arc $C$ of the last switch in the stack and $t_2$ to the output arc $A$ of the first switch of the complete switch stack.
We do this analogously for $s_1$, which we connect to the input arc $B$ of the first switch of the stack and the rest of the construction with the output arc $D$ of the last switch of the stack.
Thus both path run through the switch stack as described above.
At last, we use the switches to check whether the \textsc{3Sat} assignment is correct.
For this, we use the schematic description of a switch as depicted in \Cref{fig:reduction:3sat-directed-two-disjoint-path:switch:schema}.

\tikzstyle{vertex}=[draw,circle,fill=black, minimum size=4pt,inner sep=0pt]
\tikzstyle{edge} = [draw,-]
\tikzstyle{arc} = [draw,->]
\tikzstyle{doublearc} = [draw,<->]
\begin{figure}[thpb]
\centering
\begin{tikzpicture}[scale=0.5,auto]

\node[vertex, label=above:{$W$}] (w) at (0,0) {};
\node[vertex, label=below:{$X$}] (x) at (0,-2) {};
\node[vertex, label=above:{$Y$}] (y) at (2,0) {};
\node[vertex, label=below:{$Z$}] (z) at (2,-2) {};

\draw[arc] (w) to (x);
\draw[arc] (y) to (z);
\draw[edge] (0,-1) to (2,-1);

\end{tikzpicture}
\caption{The schematic switch gadget.}
\label{fig:reduction:3sat-directed-two-disjoint-path:switch:schema}
\end{figure}

That is, the arc $(\overline \ell^{4j-2}_i, \overline \ell^{4j-1}_i)$ is connected to the arc and $(c^1_j, c^2_j)$ if and only if the corresponding literal $\ell_i \in L$ is in clause $c_j \in C$.
More precisely, the arc $(\overline \ell^{4j-2}_i, \overline \ell^{4j-1}_i)$ is substituted by using the input arc $W$ from $\overline \ell^{4j-2}_i$ and output arc $X$ to $\overline \ell^{4j-1}_i$ and for the clause vertices $c^1_j$ is incident to input arc $Y$ and output arc $Z$ is incident to $c^2_j$.
Because only one path either from $W$ to $X$ or from $Y$ to $Z$ is usable, the path has to go from $s_1$ through the switch stack, then over the literal paths of the literals that are in the solution and at last over the clause vertices to $t_1$.
If for a clause there is no literal satisfying it, the path in the switch is blocked.
The full construction can be found in \Cref{fig:reduction:3sat-directed-two-disjoint-path}.

\tikzstyle{vertex}=[draw,circle,fill=black, minimum size=4pt,inner sep=0pt]
\tikzstyle{edge} = [draw,-]
\tikzstyle{arc} = [draw,->]
\tikzstyle{doublearc} = [draw,<->]
\begin{figure}[thpb]
\centering
\begin{tikzpicture}[scale=0.5,auto]

\node[vertex] (x1s) at (1,0) {};
\node[vertex] (x101) at ($(x1s) + (1,1)$) {};
\node[vertex] (x102) at ($(x1s) + (2,1)$) {};
\node[vertex] (x103) at ($(x1s) + (3,1)$) {};
\node[vertex] (x104) at ($(x1s) + (4,1)$) {}; \node[above left] at (x101) {$\overline \ell_1$};
\node[vertex] (x111) at ($(x1s) + (1,-1)$) {};
\node[vertex] (x112) at ($(x1s) + (2,-1)$) {};
\node[vertex] (x113) at ($(x1s) + (3,-1)$) {};
\node[vertex] (x114) at ($(x1s) + (4,-1)$) {}; \node[below left] at (x111) {$\ell_1$};
\node[vertex] (x1t) at ($(x1s) + (5,0)$) {};
\draw[arc] (x1s) to (x101);
\draw[arc] (x101) to (x102);
\draw[arc] (x102) to (x103);
\draw[arc] (x103) to (x104);
\draw[arc] (x104) to (x1t);
\draw[arc] (x1s) to (x111);
\draw[arc] (x111) to (x112);
\draw[arc] (x112) to (x113);
\draw[arc] (x113) to (x114);
\draw[arc] (x114) to (x1t);

\node[vertex] (x2s) at (7,0) {};
\node[vertex] (x201) at ($(x2s) + (1,1)$) {};
\node[vertex] (x202) at ($(x2s) + (2,1)$) {};
\node[vertex] (x203) at ($(x2s) + (3,1)$) {};
\node[vertex] (x204) at ($(x2s) + (4,1)$) {}; \node[above left] at (x201) {$\overline \ell_2$};
\node[vertex] (x211) at ($(x2s) + (1,-1)$) {};
\node[vertex] (x212) at ($(x2s) + (2,-1)$) {};
\node[vertex] (x213) at ($(x2s) + (3,-1)$) {};
\node[vertex] (x214) at ($(x2s) + (4,-1)$) {}; \node[below left] at (x211) {$\ell_2$};
\node[vertex] (x2t) at ($(x2s) + (5,0)$) {};
\draw[arc] (x2s) to (x201);
\draw[arc] (x201) to (x202);
\draw[arc] (x202) to (x203);
\draw[arc] (x203) to (x204);
\draw[arc] (x204) to (x2t);
\draw[arc] (x2s) to (x211);
\draw[arc] (x211) to (x212);
\draw[arc] (x212) to (x213);
\draw[arc] (x213) to (x214);
\draw[arc] (x214) to (x2t);

\node[vertex] (x3s) at (13,0) {};
\node[vertex] (x301) at ($(x3s) + (1,1)$) {};
\node[vertex] (x302) at ($(x3s) + (2,1)$) {};
\node[vertex] (x303) at ($(x3s) + (3,1)$) {};
\node[vertex] (x304) at ($(x3s) + (4,1)$) {}; \node[above left] at (x301) {$\overline \ell_3$};
\node[vertex] (x311) at ($(x3s) + (1,-1)$) {};
\node[vertex] (x312) at ($(x3s) + (2,-1)$) {};
\node[vertex] (x313) at ($(x3s) + (3,-1)$) {};
\node[vertex] (x314) at ($(x3s) + (4,-1)$) {}; \node[below left] at (x311) {$\ell_3$};
\node[vertex] (x3t) at ($(x3s) + (5,0)$) {};
\draw[arc] (x3s) to (x301);
\draw[arc] (x301) to (x302);
\draw[arc] (x302) to (x303);
\draw[arc] (x303) to (x304);
\draw[arc] (x304) to (x3t);
\draw[arc] (x3s) to (x311);
\draw[arc] (x311) to (x312);
\draw[arc] (x312) to (x313);
\draw[arc] (x313) to (x314);
\draw[arc] (x314) to (x3t);

\node[vertex] (s1) at (-3,0) {}; \node[above left] at (s1) {$s_1$};
\node[vertex] (b0) at (-2,0) {}; \node[above] at (b0) {$B_1$};
\node[vertex] (d1) at (0,0) {}; \node[below] at (d1) {$D_{\ell\textit{ast}}$};
\node[vertex] (t11) at (19,0) {}; \node[above right] at (t11) {};
\node[vertex] (t1) at (0,-4) {}; \node[above left] at (t1) {$t_1$};
\node[vertex] (s11) at (19,-4) {}; \node[above right] at (s11) {};
\draw[arc] (s1) to (b0);
\draw[arc] (d1) to (x1s);
\draw[arc] (x1t) to (x2s);
\draw[arc] (x2t) to (x3s);
\draw[arc] (x3t) to (t11);
\draw[arc, out=0, in=0, looseness=1] (t11) to (s11);

\node[vertex] (s2) at (6.5,4) {}; \node[above] at (s2) {$s_2$};
\node[vertex] (t2) at (12.5,4) {}; \node[above] at (t2) {$t_2$};
\node[vertex] (c1) at (8,4) {}; \node[below] at (c1) {$C_{\ell\textit{ast}}$};
\node[vertex] (a0) at (11,4) {}; \node[below] at (a0) {$A_1$};
\draw[arc] (s2) to (c1);
\draw[arc] (a0) to (t2);

\node[vertex] (c11) at (8,-4) {}; \node[below] at (c11) {$c^1_1$};
\node[below] (c1) at (9.5,-5) {$c_1$};
\node[vertex] (c12) at (11,-4) {}; \node[below] at (c12) {$c^2_1$};
\draw[arc] (s11) to (c12);
\draw[arc] (c11) to (t1);

\draw[arc] (c12) to (c11);
\draw[arc, in=45, out=135, looseness=0.7] (c12) to (c11);
\draw[arc, in=315, out=225, looseness=0.7] (c12) to (c11);

\draw[edge] (9.5,-4) to ($(x112)+(0.4,0)$);
\draw[edge] (9.5,-3.5) to ($(x212)+(0.4,0)$);
\draw[edge] (9.5,-4.5) to ($(x302)+(0.4,0)$);

\end{tikzpicture}
\caption{Classical reduction of \textsc{3Sat} to \textsc{Directed Two Disjoint Path} for $\varphi = (\overline \ell_1 \wedge \overline \ell_2 \wedge \ell_3)$.}
\label{fig:reduction:3sat-directed-two-disjoint-path}
\end{figure}

There is a one-to-one correspondence between the literals and the arcs of the path from $s_1$ to $t_1$.
We can define $f_I(\ell_i) = (x^s_i, \ell^1_i)$ because the path over $(x^s_i, \ell^1_i)$ is taken if and only if $\ell_i$ is in the \textsc{3Sat} solution.

\begin{samepage}
    \begin{mdframed}
    	\begin{description}
        \item[]\textsc{Directed} $k$-\textsc{Vertex Disjoint Path}\hfill\\
        \textbf{Instances:} Directed graph $G = (V, A)$, $s_i, t_i \in V$ for $i \in \fromto{1}{k}$.\\
        \textbf{Universe:} Arc set $A =: \U$.\\
        \textbf{Solution set:} The sets of all sets $A' \subseteq A$ such that $A' = \bigcup^k_{i = 1} A(P_i)$, where all $P_i$ are pairwise vertex-disjoint paths from $s_i$ to $t_i$ for $1 \leq i \leq k$.
    	\end{description}
    \end{mdframed}
\end{samepage}
The following reduction from \textsc{Directed Two Vertex Disjoint Path} to \textsc{Directed} $k$-\textsc{Vertex Disjoint Path} is an easy-to-see SSP reduction.
We introduce $k-2$ additional vertex pairs $s_i, t_i$ for $i \in \fromto{3}{k}$, which we connect by adding arcs $(s_i, t_i)$ for all $i \in \fromto{3}{k}$.
Thus, the original \textsc{Directed Two Vertex Disjoint Path} reduction still works for itself, while we added the necessary additional paths.
The SSP properties of the \textsc{Directed Two Vertex Disjoint Path} reduction are obviously not compromised.

\begin{samepage}
    \begin{mdframed}
    	\begin{description}
        \item[]\textsc{Steiner Tree}\hfill\\
        \textbf{Instances:} Undirected graph $G = (S \cup T, E)$, set of Steiner vertices $S$, set of terminal vertices $T$, edge weights $c: E \rightarrow \N$, number $k \in \N$.\\
        \textbf{Universe:} Edge set $E =: \U$.\\
        \textbf{Feasible solution set:} The set of all sets $E' \subseteq E$ such that $E'$ is a tree connecting all terminal vertices from $T$.\\
        \textbf{Solution set:} The set of feasible solutions $E'$ with $\sum_{e' \in E'} c(e') \leq k$.
    	\end{description}
    \end{mdframed}
\end{samepage}
There is a folklore reduction from \textsc{3Sat} to \textsc{Steiner Tree}, which is an SSP reduction, and which is depicted in \Cref{fig:reduction:3sat-steiner-tree}.
First, there are designated terminal vertices $s$ and $t$.
For every literal $\ell \in L$, there is a Steiner vertex $\ell$.
Additionally for every literal pair $(\ell_i, \overline \ell_i)$, $1 \leq i \leq |L|/2-1$, we add a Steiner vertex $v_i$. We define $v_0 := s$.
Then all of the above vertices are connected into a \enquote{diamond chain}, where we begin with $s$ connected to both $\ell_1$ and $\overline \ell_1$.
Both Vertices $\ell_1$ and $\overline \ell_1$ are connected to $v_1$.
This vertex $v_1$ is then connected to vertices $\ell_2$ and $\overline \ell_2$ and so on.
At last, $\ell_{|L|}$ and $\overline \ell_{|L|}$ are connected to $t$.

Furthermore, for every clause $c_j \in C$, we add a corresponding terminal vertex $c_j$.
The vertex $c_j$ is then connected its corresponding literals $\ell \in c_j$ via a path of Steiner vertices of length $|L| + 1$.
The costs of every edge is set to $1$ and the threshold is set to $k = |L| + |C| \cdot (|L| + 1)$.

For the correctness, observe that every solution of Steiner tree includes a path from $s$ to $t$ over the literal vertices because all paths over a clause vertex are longer than $|L|$.
This path encodes the set of literals included in a corresponding \textsc{3Sat} solution, where a positive literal $\ell_i$ for $1 \leq i \leq |L|/2$ is included in the \textsc{3Sat} solution if and only if the edge $\{v_{i-1}, \ell_i\}$ is in the Steiner tree solution.
The analogous statement holds for negative literals.
We therefore define the embedding function $f_I$ in the above fashion, i.e. for all $\ell \in L$ we have $f_I(\ell) = \set{v_{i-1}, \ell}$.

Next, for every clause $c$, the path from literal $\ell$ to terminal vertex $c$ for one $\ell \in c$ is included in the solution as well.
Thus, $|C|$ paths of length $|L|+1$ are included.
If a clause $c_j$ is not satisfied, then a path of length of at least $|L|+2$ is needed to include the terminal vertex $c_j$, which violates the threshold.
Thus, the reduction is correct.

The SSP property holds, because every correct \textsc{3Sat} solution can be translated with the function $f_I$ and be completed to a Steiner tree with at most $k$ edges.
On the other hand, every Steiner tree with at most $k$ edges restricted to the set $f_I(L)$ encodes a \textsc{3Sat} solution.

\tikzstyle{vertex}=[draw,circle,fill=black, minimum size=4pt,inner sep=0pt]
\tikzstyle{terminal}=[draw,rectangle,fill=black, minimum size=4pt,inner sep=0pt]
\tikzstyle{edge} = [draw,-]
\begin{figure}[thpb]
\centering
\begin{tikzpicture}[scale=1,auto]

\node[terminal] (s) at (0,0) {}; \node[left] at (s) {$s$};
\node[vertex] (x12) at ($(s) + (2,0)$) {};
\node[vertex] (x23) at ($(s) + (4,0)$) {};
\node[terminal] (t) at ($(s) + (6,0)$) {}; \node[right] at (t) {$t$};

\node[vertex] (x1) at ($(s) + (1,0.5)$) {}; \node[above] at (x1) {$\ell_1$};
\node[vertex] (notx1) at ($(s) + (1,-0.5)$) {}; \node[above] at (notx1) {$\overline \ell_1$};

\node[vertex] (x2) at ($(x12) + (1,0.5)$) {}; \node[above] at (x2) {$\ell_2$};
\node[vertex] (notx2) at ($(x12) + (1,-0.5)$) {}; \node[above] at (notx2) {$\overline \ell_2$};

\node[vertex] (x3) at ($(x23) + (1,0.5)$) {}; \node[above] at (x3) {$\ell_3$};
\node[vertex] (notx3) at ($(x23) + (1,-0.5)$) {}; \node[above] at (notx3) {$\overline \ell_3$};

\draw[edge] (s) to (x1);
\draw[edge] (s) to (notx1);
\draw[edge] (x1) to (x12);
\draw[edge] (notx1) to (x12);
\draw[edge] (x12) to (x2);
\draw[edge] (x12) to (notx2);
\draw[edge] (x2) to (x23);
\draw[edge] (notx2) to (x23);
\draw[edge] (x23) to (x3);
\draw[edge] (x23) to (notx3);
\draw[edge] (x3) to (t);
\draw[edge] (notx3) to (t);

\node[terminal] (c1) at (3,-2.5) {}; \node[below] at (c1) {$c_1 = \overline \ell_1 \lor \overline \ell_2 \lor \ell_3$};

\node[vertex] (11) at ($(notx1) + (0.67,-0.5)$) {};
\node (12) at ($(11) + (0,-0.4)$) {$\vdots$};
\node[vertex] (13) at ($(11) + (0,-1)$) {};

\node[vertex] (21) at ($(notx2) + (0,-0.5)$) {};
\node (22) at ($(21) + (0,-0.4)$) {$\vdots$};
\node[vertex] (23) at ($(21) + (0,-1)$) {};

\node[vertex] (31) at ($(notx3) + (-0.67,-0.5)$) {};
\node (32) at ($(31) + (0,-0.4)$) {$\vdots$};
\node[vertex] (33) at ($(31) + (0,-1)$) {};

\draw[edge] (notx1) to (11);
\draw[edge] (11) to ($(12) + (0,0.18)$);
\draw[edge] (12) to (13);
\draw[edge] (13) to (c1);

\draw[edge] (notx2) to (21);
\draw[edge] (21) to ($(22) + (0,0.18)$);
\draw[edge] (22) to (23);
\draw[edge] (23) to (c1);

\draw[edge] (x3) to (31);
\draw[edge] (31) to ($(32) + (0,0.18)$);
\draw[edge] (32) to (33);
\draw[edge] (33) to (c1);

\node at ($(s)+(-1,0)$) {$W$};
\draw[dashed,rounded corners] ($(s)+(-0.5,+1)$) rectangle ($(t) + (0.5,-0.67)$);

\end{tikzpicture}
\caption{Classical reduction of \textsc{3Sat} to \textsc{Steiner Tree}.}
\label{fig:reduction:3sat-steiner-tree}
\end{figure}

\section{Conclusion}
We have shown in this paper that for a large amount of NP-complete problems, their min-max variant (min-max-min variant, respectively) is automatically $\Sigma^p_2$-complete ($\Sigma^p_3$-complete, respectively).
We have showcased this behavior in the areas of network interdiction, min-max regret robust optimization and two-stage adjustable robust optimization.
Our findings constitute a leap in the understanding of the basic behavior of such problems.

However, many questions still remain unanswered.
First, we would like to understand if our theorem can also be adapted to work with other popular areas of robust or multi-level optimization, for example to recoverable robust optimization, to Stackelberg games, to attacker-defender games, or to the area of computational social choice.
Secondly, we restricted our attention in this article only to such problems which can be expressed as finding a certain subset with a nice property.
It would be insightful to understand, if similar meta-theorems can be made about natural variants of problems which look for a nice partition, a nice assignment function, a nice permutation, etc.
For space reasons, we restricted our focus in this paper mainly on multi-stage problems with two or three stages.
It seems natural to extend our arguments to multi-stage problems with an arbitrary number of stages.

Since all three main results of our paper are proven in an essentially analogous way, it seems intriguing to consider a potential meta-meta-theorem. Which properties does a min-max \emph{modification scheme} (such as the interdiction-modification, the regret-modification, the two-stage modification) need to possess, such that a similar meta-theorem applies?

Finally, our framework only applies to nominal problems which are NP-complete in the first place.
However, researchers in the area often are interested in robust variants of nominal problems in P.
Our framework can not say anything about these problems -- for NP-complete problems a vast amount of existing completeness reductions between them exist, which are upgraded by our meta-theorem.
Is there some notion of reduction between problems in P, which supports our framework (i.e., for problems admitting such reductions, their robust variant is automatically NP-hard)?

\newpage

\bibliography{bib_general,bib_interdiction,bib_regret,bib_two-stage,bib_recoverable_robust,bib_reductions}

\end{document}